\newcommand\acmeasychair[2]{#2}
\providecommand\easyhevea[2]{#1}
    \DeclarePairedDelimiter{\ceil}{\lceil}{\rceil}
    \DeclarePairedDelimiter{\sizeof}{|}{|}
    \newcommand\ceil[1]{\left\lceil{#1}\right\rceil}
    \newcommand\sizeof[1]{\left|{#1}\right|}
    \renewcommand\xrightarrow[1]{---^{#1}\rightarrow}
\newcommand\etal{~\textit{et al.}\xspace}
\newcommand\hopda[1]{${#1}$-PDA}
\newcommand\bhopda[2]{$\tup{{#1}, {#2}}$-PDA}
\newcommand\lang{\mathcal{L}}
\newcommand\numof{m}
\newcommand\altnumof{l}
\newcommand\idxi{i}
\newcommand\idxj{j}
\newcommand\nats{\mathbb{N}}
\newcommand\brac[1]{\left({#1}\right)}
\newcommand\tup[1]{\brac{#1}}
\newcommand\ap[2]{{#1}\mathord{\brac{#2}}}
\newcommand\setcomp[2]{\left\{{#1}\ \left|\ {#2}\right.\right\}}
\newcommand\set[1]{\left\{{#1}\right\}}
\newcommand\sequence[2]{\brac{#1}_{#2}}
\newcommand\unbounded[2]{\ap{\text{Diagonal}_{#1}}{#2}}
\newcommand\treedom{D}
\newcommand\tree{T}
\newcommand\node{\eta}
\newcommand\treedir{\delta}
\newcommand\treelabels{\oalphabet}
\newcommand\treelabel{\och}
\newcommand\rootnode{\varepsilon}
\newcommand\treelabelling{\lambda}
\newcommand\treeap[2]{{#1}\mathord{\left[{#2}\right]}}
\newcommand\treesingle[1]{\tree\mathord{\left[{#1}\right]}}
\newcommand\pdaord{n}
\newcommand\opord{k}
\newcommand\pda{P}
\newcommand\och{a}
\newcommand\ochb{b}
\newcommand\ech{\varepsilon}
\newcommand\oalphabet{\Sigma}
\newcommand\cha{\gamma}
\newcommand\chb{\sigma}
\newcommand\salphabet{\Gamma}
\newcommand\controls{\mathcal{P}}
\newcommand\control{p}
\newcommand\rules{\mathcal{R}}
\newcommand\finals{\mathcal{F}}
\newcommand\op{o}
\newcommand\stackw{s}
\newcommand\configc{c}
\newcommand\configs{C}
\newcommand\pdrun{\rho}
\newcommand\pdruler{r}
\newcommand\word{w}
\newcommand{\controlinit}{\control_{\mathrm{in}}}
\newcommand{\chainit}{\cha_{\mathrm{in}}}
\newcommand\uniquefinalcontrol{\control_f}
\newcommand\allstacks[2]{\mathcal{S}^{#2}_{#1}}
\newcommand\config[2]{\langle {#1}, {#2} \rangle}
\newcommand\kstack[2]{\left[{#2}\right]_{#1}}
\newcommand\pdatran[1]{\xrightarrow{#1}}
\newcommand\ops[1]{\text{Ops}_{#1}}
\newcommand\topop[1]{\text{top}_{#1}}
\newcommand\push[1]{\text{push}_{#1}}
\newcommand\pop[1]{\text{pop}_{#1}}
\newcommand\rew[1]{\text{rew}_{#1}}
\newcommand\pdrule[5]{\tup{{#1}, {#2}} \xrightarrow{#3} \tup{{#5}, {#4}}}
\newcommand\apdrule[4]{\tup{{#1}, {#2}} \rightarrow \tup{{#4}, {#3}}}
\newcommand\pdrulet[6]{\tup{{#1}, {#2}, {#3}} \xrightarrow{#4} \tup{{#6}, {#5}}}
    \newcommand\bigpdrulet[6]{%
        \begin{array}{c} %
            \tup{{#1}, {#2}, {#3}} \\ %
            \xdownarrow{
                \begin{array}{c} %
                    \\ %
                    {#4} \\ %
                    \\ %
                \end{array} %
            } \\ %
            \tup{{#6}, {#5}} %
        \end{array} %
    }
    \newcommand\bigpdrule[5]{%
        \begin{array}{c} %
            \tup{{#1}, {#2}} \\ %
            \xdownarrow{
                \begin{array}{c} %
                    \\ %
                    {#3} \\ %
                    \\ %
                \end{array} %
            } \\ %
            \tup{{#5}, {#4}} %
        \end{array} %
    }
    \newcommand\bigpdrulet[6]{\tup{{#1}, {#2}, {#3}} \xrightarrow{#4} \tup{{#6}, {#5}}}
    \newcommand\bigpdrule[5]{\tup{{#1}, {#2}} \xrightarrow{#3} \tup{{#5}, {#4}}}
\newcommand\toptest[1]{{#1}?}
    \newcommand\stackinit[2]{\left\llbracket {#2} \right\rrbracket_{#1}}
    \newcommand\stackinit[2]{[[ {#2} ]]}
\newcommand\auta{A}
\newcommand\autb{B}
\newcommand\sastates{\mathcal{Q}}
\newcommand\sastate{q}
\newcommand\safinals{\sastates_F}
\newcommand\sadelta{\Delta}
\newcommand\saalphabet{\salphabet_{[]}}
\newcommand\sopen[1]{[_{#1}}
\newcommand\sclose[1]{]_{#1}}
\newcommand\satran[1]{\xrightarrow{#1}}
\newcommand\auttrue{\auta_{\mathtt{tt}}}
\newcommand\augmented[1]{#1_\och}
\newcommand\canpopaut[2]{\auta_{{#1}, {#2}}}
\newcommand\canpopautout[2]{\auta^\och_{{#1}, {#2}}}
\newcommand\prestar[2]{\ap{\text{Pre}^\ast_{#1}}{#2}}
\newcommand\finalcontrol{f}
\newcommand\siminitrules{\rules_{\text{init}}}
\newcommand\simsimrules{\rules_{\text{sim}}}
\newcommand\simfinrules{\rules_{\text{fin}}}
\newcommand\simpushrules{\rules_{\text{push}}}
\newcommand\simpoprules{\rules_{\text{pop}}}
\newcommand\simcontrolinit{\control^{-1}_{\text{in}}}
\newcommand\simulator[1]{#1_{-1}}
\newcommand\simcontrol[2]{\tup{{#1}, {#2}}}
\newcommand\treedecomp[1]{\ap{\text{Tree}}{#1}}
\newcommand\treescore[1]{\ap{\text{Score}}{#1}}
\newcommand\isout[1]{\overline{#1}}
\newcommand\downclosure[1]{\ap{\downarrow}{#1}}
\newcommand\numchs{\alpha}
\newcommand\csrank{\theta}
\newcommand\chcount[2]{\sizeof{#2}_{#1}}
\newcommand\canpopautbr[3]{\auta^{#3}_{{#1},{#2}}}
\newcommand\alternating[1]{#1_\diamond}
\newcommand\altrule[3]{\tup{{#1}, {#2}} \rightarrow {#3}}
\newcommand\treescorech[2]{\ap{\text{Score}_{#1}}{#2}}
\newcommand\simcontrolbr[4]{\tup{{#1}, {#2}, {#3}, {#4}}}
\newcommand\linearised[1]{\overline{#1}}
\newcommand\lincontrol[2]{\tup{{#1}, {#2}}}
\newcommand\outputs{O}
\newcommand\simbrrules{\rules_{\text{br}}}
\newcommand\branchchs{B}
\newcommand\controlx{x}
\newcommand\controly{y}
\newcommand\outputsx{X}
\newcommand\outputsy{Y}
\newcommand\fronlen{h}
\newcommand\pselipsenode[1]{\rnode{#1}{$\vdots$}}
\newcommand\optestsim{\text{Update}}
\newcommand\testaccepts{\text{Accepts}}
\newcommand\autfun[1]{\tau_{#1}}
\newcommand\autfuns[1]{\overline{\tau}_{#1}}
\newcommand\funcha[2]{\tup{{#1}, {#2}}}
    \newcommand\notest[1]{\hat{#1}}
    \newcommand\notest[1]{#1^T}
\newtheorem{theorem}{Theorem}[section]
\newtheorem{definition}{Definition}[section]
\newtheorem{lemma}{Lemma}[section]
\newtheorem{property}{Property}[section]
\newtheorem{corollary}{Corollary}[section]
\newenvironment{namedtheorem}[2]{%
    \expandafter\gdef\csname reftheorem#1\endcsname{%
        Theorem~\ref{#1} (#2)%
    }%
    \begin{theorem}[{#2}] \label{#1}%
}{%
    \end{theorem}%
}
\newcommand\reftheorem[1]{\expandafter\csname reftheorem#1\endcsname}
\newenvironment{namedlemma}[2]{%
    \expandafter\gdef\csname reflemma#1\endcsname{%
        Lemma~\ref{#1} (#2)%
    }%
    \begin{lemma}[{#2}] \label{#1}%
}{%
    \end{lemma}%
}
\newcommand\reflemma[1]{\expandafter\csname reflemma#1\endcsname}
\newcommand\refproperty[1]{\expandafter\csname refproperty#1\endcsname}
\newenvironment{nameddefinition}[2]{%
    \expandafter\gdef\csname refdefinition#1\endcsname{%
        Definition~\ref{#1} (#2)%
    }%
    \begin{definition}[{#2}] \label{#1}%
}{%
    \end{definition}%
}
\newcommand\refdefinition[1]{\expandafter\csname refdefinition#1\endcsname}
\colorlet{defaultcol}{black!90!yellow}
\title{Unboundedness and Downward Closures of Higher-Order Pushdown Automata}
\titlerunning{Unboundedness of HOPDA}
\author{Matthew Hague\inst{1}
        \and
        Jonathan Kochems\inst{2}
        \and
        C.-H. Luke Ong\inst{2}}
\authorrunning{M. Hague \and J. Kochems \and C.-H. L. Ong}
\institute{
    Royal Holloway, Univsersity of London
    \and
    Department of Computer Science, Oxford University
}
\begin{document}

    \maketitle

    \begin{abstract}
        
We show the diagonal problem for higher-order pushdown automata (HOPDA), and hence the simultaneous unboundedness problem, is decidable.
From recent work by Zetzsche this means that we can construct the downward closure of the set of words accepted by a given HOPDA.
This also means we can construct the downward closure of the Parikh image of a HOPDA.
Both of these consequences play an important r\^{o}le in verifying concurrent higher-order programs expressed as HOPDA or safe higher-order recursion schemes.

    \end{abstract}

\section{Introduction}

Recent work by Zetzsche~\cite{Zetzsche:2015} has given a new technique for computing the downward closure of classes of languages.
The downward closure $\downclosure{\lang}$ of a language $\lang$ is the set of all subwords of words in $\lang$
(e.g. $\och  \och$ is a subword of $\ochb \och  \ochb  \och \ochb$).
It is well known that the downward closure is regular for any language~\cite{Haines:1969}.
However, there are only a few classes of languages for which it is known how to compute this closure.
In general it is not possible to compute the downward closure since it would easily lead to a solution to the halting problem for Turing machines.

However, once a regular representation of the downward closure has been obtained, it can be used in all kinds of analysis, since regular languages are well behaved under all kinds of transformations.
For example, consider a system that waits for messages from a complex environment.
This complex environment can be abstracted by the downward closure of the messages it sends or processes it spawns.
This corresponds to a lossy system where some messages may be ignored (or go missing), or some processes may simply not contribute to the remainder of the execution.
In many settings -- e.g. the analysis of safety properties of certain kinds of systems -- unread messages or unscheduled processes do not effect the precision of the analysis.
Since many types of system permit synchronisation with a regular language, this environment abstraction can often be built into the system being analysed.

Many popular languages such as JavaScript, Python, Ruby, and even C++, include higher-order features -- which are increasingly important given the popularity of event-based programs and asynchronous programs based on a continuation or callback
style of programming.
Hence, the modelling of higher-order function calls is becoming key to analysing modern day programs.

A popular approach to verifying higher-order programs is that of \emph{recursion schemes} and several tools and practical techniques have been developed~\cite{Kobayashi:2009,Unno:2010,Kobayashi:2011,Kobayashi:2012,Neatherway:2012,Broadbent:2013,Broadbent:2013b,Ramsay:2014}.
Recursion schemes have an automaton model in the form of collapsible pushdown automata (CPDA)~\cite{Hague:2008} which generalises an order-$2$ model called 2-PDA with links~\cite{Aehlig:2005} or, equivalently,
panic automata~\cite{Knapik:2005}.
When these recursion schemes satisfy a syntactical condition called \emph{safety}, a restriction of CPDA called \emph{higher-order pushdown automata (HOPDA or \hopda{\pdaord} for order-$\pdaord$ HOPDA)} is sufficient~\cite{Maslov:1976,Knapik:2002}.
HOPDA can be considered an extension of pushdown automata to a ``stack of stacks'' structure.
It remains open as to whether CPDA are strictly more expressive than nondeterministic HOPDA when generating languages of words.
It is known that, at order 2, nondeterministic HOPDA and CPDA generate the same word languages~\cite{Aehlig:2005}.
However, there exists a 
language generated by a deterministic order-$2$ CPDA that cannot be generated by a deterministic HOPDA of any order~\cite{Parys:2012}.

It is well known that concurrency and first-order recursion very quickly leads to undecidability (e.g.~\cite{Ramalingam:2000}).
Hence, much recent research has focussed on decidable abstractions and restrictions (e.g.~\cite{Esparza:2000,Bouajjani:2005,Kahlon:2009,Lal:2009,Esparza:2011,Torre:2011,Madhusudan:2011,Cyriac:2012,Hague:2014}).
Recently, these results have been extended to concurrent versions of
CPDA and recursion schemes (e.g.~\cite{Seth:2009,Kobayashi:2013,Hague:2013,Penelle:2015}).
Many approaches rely on combining representations of the Parikh image of individual automata (e.g.~\cite{Esparza:2011,Hague:2012,Hague:2014}).
However, combining Parikh images of HOPDA quickly leads to undecidability (e.g.~\cite{Hague:2012}).
In many cases, the downward closure of the Parikh image is an adequate abstraction.

Computing downward closures appears to be a hard problem.
Recently Zetzsche introduced a new general technique for classes of automata effectively closed under rational transductions -- also referred to as a \emph{full trio}.
For these automata the downward closure is computable iff the \emph{simultaneous unboundedness problem (SUP)} is decidable.

\begin{definition}[SUP~\cite{Zetzsche:2015}]
    Given a language
    $\lang \subseteq \och_1^\ast \ldots \och_\numchs^\ast$
    does
    $\downclosure{\lang} = \och_1^\ast \ldots \och_\numchs^\ast$?
\end{definition}

\begin{theorem}
    \label{thm:zetzsche}
    \cite[Theorem 1]{Zetzsche:2015}
    Let $\mathcal{C}$ be class of languages that is a full trio.
    Then downward closures are computable for $\mathcal{C}$ if and only if the SUP is decidable for $\mathcal{C}$.
\end{theorem}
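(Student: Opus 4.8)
The plan is to prove the two directions separately; the forward implication is routine and the converse carries the real content. For ``downward closures computable $\Rightarrow$ SUP decidable'', given $\lang \subseteq \och_1^\ast \cdots \och_\numchs^\ast$ in $\mathcal{C}$ I would compute a finite automaton for $\downclosure{\lang}$, observe that the inclusion $\downclosure{\lang} \subseteq \och_1^\ast \cdots \och_\numchs^\ast$ holds automatically (the bounded language is itself downward closed), and then decide the reverse regular-language inclusion $\och_1^\ast \cdots \och_\numchs^\ast \subseteq \downclosure{\lang}$. Since both sides are regular this is decidable, and it answers the SUP exactly.

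For the converse I would exploit two facts: every downward-closed language is a \emph{finite} union of \emph{products}, i.e.\ concatenations of factors each of which is either an optional letter $(\och + \ech)$ or a starred sub-alphabet $\Delta^\ast$ with $\Delta \subseteq \oalphabet$; and such products can be effectively enumerated as finite syntactic objects. The first key step is to decide, for a single product $p$, whether $\downclosure{p} \subseteq \downclosure{\lang}$ using the SUP. The idea is that a starred block $\set{\ochb_1, \ldots, \ochb_r}^\ast$ has the cyclic words $(\ochb_1 \cdots \ochb_r)^k$ as a cofinal family under the subword order, so $\downclosure{p} \subseteq \downclosure{\lang}$ iff for \emph{every} choice of exponents the associated word embeds as a subword into some word of $\lang$. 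I would capture this by a rational transduction $T$ that scans a word of $\lang$ left to right, nondeterministically matches such a candidate subword, and emits one marker letter $\och_i$ for each completed cycle of the $i$-th block; then $T(\lang) \subseteq \och_1^\ast \cdots \och_\numchs^\ast$ lies in $\mathcal{C}$ because $\mathcal{C}$ is a full trio, and $\downclosure{T(\lang)} = \och_1^\ast \cdots \och_\numchs^\ast$ holds exactly when $\downclosure{p} \subseteq \downclosure{\lang}$. Thus one SUP query decides containment of $p$.

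It remains to assemble these tests into a terminating algorithm and, crucially, to recognise when enough products have been collected. I would enumerate products $p_1, p_2, \ldots$, keep the running union $U$ of those passing the containment test, and after each addition check whether $\downclosure{\lang} \subseteq U$ --- equivalently, since $U$ is downward closed, whether $\lang \subseteq U$, i.e.\ whether $\lang \cap \overline{U} = \emptyset$ for the regular complement $\overline{U}$. Here $\lang \cap \overline{U} \in \mathcal{C}$, and I would reduce its emptiness to the SUP by the transduction that ignores its input entirely and outputs an arbitrary power $\och_1^k$: this sends $\lang \cap \overline{U}$ to $\och_1^\ast$ when the language is non-empty and to $\emptyset$ otherwise, so a single SUP call decides emptiness. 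Because $\downclosure{\lang}$ is a finite union of products and $U$ only ever grows while staying inside $\downclosure{\lang}$, the union eventually reaches $\downclosure{\lang}$, the completeness test succeeds, and we output $U$.

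The main obstacle is the reduction of product containment to a single SUP instance: one must verify that the matching-and-counting transduction $T$ is genuinely \emph{rational} (a finite-state transducer handling all blocks, with outputs forced into the ordered form $\och_1^\ast \cdots \och_\numchs^\ast$) and that its downward closure being everything is \emph{exactly} the containment $\downclosure{p} \subseteq \downclosure{\lang}$, which rests on the cofinality of the cyclic words. The subtlety is that multi-letter starred blocks cannot be checked letter-by-letter --- e.g.\ $\och^\ast$ and $\ochb^\ast$ both embed into $\downclosure{\set{\och^n \ochb^n}}$ while $\set{\och, \ochb}^\ast$ does not --- so the interleaving must be encoded faithfully by the cyclic pattern, and this is where the argument must be most careful.
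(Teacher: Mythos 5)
The paper does not actually prove this statement---it is imported wholesale as \cite[Theorem 1]{Zetzsche:2015}---so there is no internal proof to compare against; your reconstruction is, however, essentially Zetzsche's own argument: the easy direction by computing $\downclosure{\lang}$ and deciding a regular-language inclusion, and the hard direction by enumerating ideal (product) decompositions of candidate downward closures, reducing containment of each product in $\downclosure{\lang}$ to a single SUP instance via a cycle-matching rational transduction (using cofinality of the cyclic words $(\ochb_1\cdots\ochb_r)^k$ in a starred block), and certifying completeness with an SUP-based emptiness test, all of which is legitimate in a full trio. I see no gap in the proposal.
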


Zetzsche used this result to obtain the downward closure of languages definable by \hopda{2},
or equivalently, languages definable by \emph{indexed grammars}~\cite{Aho:1968}.
Moreover, for classes of languages closed under rational transductions, Zetzsche shows that the simultaneous unboundedness problem is decidable iff the \emph{diagonal problem} is decidable.
The diagonal problem was introduced by Czerwi\'nski and Martens~\cite{Czerwinski:2014}.
Intuitively, it is a relaxation of the SUP that is insensitive to the order the characters are output.
For a word $\word$, let
$\chcount{\och}{\word}$
be the number of occurrences of $\och$ in $\word$.

\begin{definition}[Diagonal Problem~\cite{Czerwinski:2014}]
    Given language $\lang$ we define
    \[
        \unbounded{\och_1, \ldots, \och_\numchs}{\lang} =
        \forall \numof .
        \exists \word \in \lang .
        \forall 1 \leq \idxi \leq \numchs .
        \chcount{\och_\idxi}{\word} \geq \numof \ .
    \]
    The diagonal problem asks if
    $\unbounded{\och_1, \ldots, \och_\numchs}{\lang}$
    holds of $\lang$.
\end{definition}

\changed[mh]{
    \begin{corollary}[Diagonal Problem and Downward Closures]
        \label{cor:diagdown}
        Let $\mathcal{C}$ be class of languages that is a full trio.
        Then downward closures are computable for $\mathcal{C}$ if and only if the diagonal problem is decidable for $\mathcal{C}$.
    \end{corollary}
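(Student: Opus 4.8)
The plan is to prove both directions of the biconditional at once by chaining two equivalences that are already available to us. The first link is precisely Theorem~\ref{thm:zetzsche}: for any full trio $\mathcal{C}$, downward closures are computable for $\mathcal{C}$ if and only if the SUP is decidable for $\mathcal{C}$. The second link is the further result of Zetzsche quoted in the text above, namely that for any class of languages closed under rational transductions, the SUP is decidable if and only if the diagonal problem is decidable. Composing these two biconditionals by transitivity of ``if and only if'' yields exactly the claimed equivalence between computability of downward closures and decidability of the diagonal problem for $\mathcal{C}$.

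First I would check that the hypotheses of the two cited results are compatible, so that both may be invoked for the \emph{same} class $\mathcal{C}$. This is the one point that calls for a little care rather than mere symbol-pushing: Theorem~\ref{thm:zetzsche} is stated for a \emph{full trio}, whereas the SUP-versus-diagonal equivalence is stated for a class \emph{closed under rational transductions}. These two notions coincide---a full trio is by definition a class effectively closed under rational transductions---so a single hypothesis ``$\mathcal{C}$ is a full trio'' simultaneously activates both equivalences, and no further assumptions on $\mathcal{C}$ are needed.

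With the hypotheses aligned, the remainder is immediate. Assume $\mathcal{C}$ is a full trio. If the diagonal problem is decidable for $\mathcal{C}$, then by the second equivalence the SUP is decidable for $\mathcal{C}$, and hence by Theorem~\ref{thm:zetzsche} downward closures are computable for $\mathcal{C}$; conversely, if downward closures are computable, Theorem~\ref{thm:zetzsche} gives decidability of the SUP, which by the second equivalence gives decidability of the diagonal problem.

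The main obstacle---such as it is---lies not in the logical chaining but in confirming the effectivity conditions implicit in ``computable'' and ``decidable''. The hard part will be verifying that the reductions underlying both of Zetzsche's equivalences are effective and compose cleanly, so that a decision procedure for the diagonal problem genuinely yields an algorithm that \emph{constructs} the downward closure (and vice versa), rather than merely asserting its existence. Since both cited results are themselves effective reductions, this composition goes through and the corollary follows.
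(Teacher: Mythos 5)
Your chaining argument is sound: Theorem~\ref{thm:zetzsche} plus Zetzsche's second equivalence (SUP decidable iff diagonal problem decidable, for classes closed under rational transductions) does yield the corollary, and your observation that ``full trio'' and ``(effectively) closed under rational transductions'' denote the same hypothesis is the right compatibility check. However, the paper deliberately does \emph{not} black-box that second equivalence; it gives a short self-contained proof. For the direction ``diagonal decidable $\Rightarrow$ downward closures computable'' the paper uses only the easy half of what you cite: an SUP instance is by definition a language $\lang \subseteq \och_1^\ast \cdots \och_\numchs^\ast$, and for such a language the diagonal problem and the SUP are literally the same question, so a diagonal-problem decision procedure already decides the SUP and Theorem~\ref{thm:zetzsche} applies. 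For the converse the paper bypasses the SUP entirely: $\lang$ satisfies the diagonal problem iff $\downclosure{\lang}$ does, $\downclosure{\lang}$ is regular and (by hypothesis) computable, and the diagonal problem is decidable for regular languages. What the paper's route buys is independence from a result that is only informally quoted in the introduction (no theorem number or precise statement is imported), at the cost of a few lines of elementary argument; what your route buys is brevity, at the cost of resting the corollary on an external equivalence used as a black box. Your proof is correct as a derivation, but if you want it to stand where the paper's proof stands, you should either give a precise citation for the SUP/diagonal equivalence or inline the two easy observations above, as the paper does.
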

    \begin{proof}
        The only-if direction follows from Theorem~\ref{thm:zetzsche} since given a language
        $\lang \subseteq \och_1^\ast \ldots \och_\numchs^\ast$
        the diagonal problem is immediately equivalent to the SUP.
        In the if direction, the result follows since $\lang$ satisfies the diagonal problem iff
        $\downclosure{\lang}$
        also satisfies the diagonal problem.
        Since the diagonal problem is decidable for regular languages and
        $\downclosure{\lang}$
        is regular, we have the result.
    \end{proof}
}

In this work, we generalise Zetzsche's result for \hopda{2} to the general case of \hopda{\pdaord}.
We show that the diagonal problem is decidable.
Since HOPDA are closed under rational transductions, we obtain decidability of the simultaneous unboundedness problem, and hence a method for constructing the downward closure of a language defined by a HOPDA.

\begin{corollary}[Downward Closures]
    Let $\pda$ be an \hopda{\pdaord}.
    The downward closure
    $\downclosure{\ap{\lang}{\pda}}$
    is computable.
\end{corollary}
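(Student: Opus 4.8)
The plan is to derive the corollary directly from Corollary~\ref{cor:diagdown}, which tells us that for any full trio the computability of downward closures is equivalent to decidability of the diagonal problem. It therefore suffices to establish two facts: first, that the class of languages recognised by \hopda{\pdaord} forms a full trio; and second, that the diagonal problem is decidable for this class.

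For the first fact, I would show that the \hopda{\pdaord} languages are effectively closed under rational transductions. The standard route is a synchronised product construction: given an \hopda{\pdaord} $\pda$ and a finite-state transducer $\auta$ realising a rational relation, one builds a new \hopda{\pdaord} whose control states are pairs $\simcontrol{\control}{\sastate}$ combining a control $\control$ of $\pda$ with a state $\sastate$ of $\auta$, and whose rules fire a transition of $\pda$ in tandem with a matching transition of $\auta$, emitting the output letter prescribed by $\auta$. Because the stack-of-stacks structure and the order $\pdaord$ are left untouched by this construction, the result is again an \hopda{\pdaord}; hence the class is closed under rational transductions and so is a full trio. As the excerpt already takes this closure property for granted, I would simply cite it rather than carry out the construction in detail.

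For the second fact, I would invoke the main theorem of this paper: the diagonal predicate $\unbounded{\och_1, \ldots, \och_\numchs}{\ap{\lang}{\pda}}$ is decidable for every \hopda{\pdaord} $\pda$. With both facts in hand, Corollary~\ref{cor:diagdown} immediately gives that the downward closure $\downclosure{\ap{\lang}{\pda}}$ is computable.

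The real obstacle lies entirely in the second fact: the decidability of the diagonal problem is the technical heart of the paper and must be proved by a separate, substantial argument reasoning directly about the order-$\pdaord$ stack-of-stacks structure. Once that decidability is available, the present corollary is a routine consequence of the full-trio property together with Zetzsche's characterisation (Theorem~\ref{thm:zetzsche} and Corollary~\ref{cor:diagdown}), so I expect no further difficulty in the final assembly.
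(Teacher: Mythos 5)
Your proposal is correct and follows essentially the same route as the paper: the paper's proof likewise cites the decidability of the diagonal problem (Theorem~\ref{thm:diagonal-sim}) together with Corollary~\ref{cor:diagdown}, taking the full-trio property of HOPDA languages as known. Your additional sketch of the product construction for closure under rational transductions is a reasonable elaboration of a fact the paper simply asserts.
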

\begin{proof}
    From Theorem~\ref{thm:diagonal-sim} (proved in the sequel), we know that the diagonal problem for HOPDA is decidable.
    Thus, using Corollary~\ref{cor:diagdown}, we can construct the downward closure of $\pda$.
\end{proof}

This result provides an abstraction upon which new results may be based.
It also has several immediate consequences:
\begin{enumerate}

\item
    decidability of separability by piecewise testable languages, \acmeasychair{}{which follows from} from Czerwi\'nski and Martens~\cite{Czerwinski:2014}, 

\item
    decidability of reachability for parameterised concurrent systems of HOPDA communicating asynchronously via a shared global register, from La Torre\etal~\cite{Torre:2015}, 

\item
    decidability of finiteness of a language defined by a HOPDA, and
\item
    computability of the downward closure of the Parikh image of a HOPDA.
\end{enumerate}

We present our decidability proof in two stages.  First we show how to decide
$\unbounded{\och}{\pda}$
for a single character and HOPDA $\pda$ in Sections~\ref{sec:technique} and \ref{sec:correctness}.
In Sections~\ref{sec:simultaneous}, \ref{sec:reduction-sim}, and~\ref{sec:correctness-sim} we generalise our techniques to the full diagonal problem.

In Section~\ref{sec:outline} we give an outline of the proof techniques for deciding
$\unbounded{\och}{\pda}$.
In short, the outermost stacks of an \hopda{\pdaord} are created and destroyed using $\push{\pdaord}$ and $\pop{\pdaord}$ operations.
These $\push{\pdaord}$ and $\pop{\pdaord}$ operations along a run of an \hopda{\pdaord} are ``well-bracketed'' (each $\push{\pdaord}$ has a matching $\pop{\pdaord}$ and these matchings don't overlap).
The essence of the idea is to take a standard tree decomposition of these well-bracketed runs and observe that each branch of such a tree can be executed by an \hopda{(\pdaord-1)}.
We augment this \hopda{(\pdaord-1)} with ``regular tests'' that allow it to know if, each time a branch is chosen, the alternative branch could have output some $\och$ characters.
If this is true, then the \hopda{(\pdaord-1)} outputs a single $\och$ to account for these missed characters.
We prove that, although the \hopda{(\pdaord-1)} outputs far fewer characters, it can still output an unbounded number iff the \hopda{\pdaord} could.
Hence, by repeating this reduction, we obtain a \hopda{1}, for which the diagonal problem is decidable since it is known how to compute their downward closures~\cite{Leeuwen:1978,Courcelle:1991}.

In Section~\ref{sec:outline-sim} we outline the generalisation of the proof to the full problem
$\unbounded{\och_1, \ldots, \och_\numchs}{\pda}$.
The key difficulty is that it is no longer enough for the \hopda{(\pdaord-1)} to follow only a single branch of the tree decomposition:
it may need up to one branch for each of the
$\och_1, \ldots, \och_\numchs$.
Hence, we define HOPDA that can output trees with a bounded number ($\numchs$) of branches.
We then show that our reduction can generalise to HOPDA outputting trees (relying essentially on the fact that the number of branches is bounded).


\section{Preliminaries}

\subsection{Downward Closures}

Given two words
$\word = \cha_1 \ldots \cha_\numof \in \oalphabet^\ast$
and
$\word' = \chb_1 \ldots \chb_\altnumof \in \oalphabet^\ast$
for some alphabet $\oalphabet$, we write
$\word \leq \word'$
iff there exist
$\idxi_1 < \ldots < \idxi_\numof$
such that for all
$1 \leq \idxj \leq \numof$
we have
$\cha_\idxj = \chb_{\idxi_\idxj}$.
Given a set of words
$\lang \subseteq \oalphabet^\ast$,
we denote its downward closure
$\downclosure{\lang} = \setcomp{\word}{\word \leq \word' \in \lang}$.

\subsection{Trees}

A $\treelabels$-labelled finite tree is a tuple
$\tree = \tup{\treedom, \treelabelling}$
where
    $\treelabels$ is a set of node labels, and
    $\treedom \subset \nats^\ast$
    is a finite set of nodes that is prefix-closed,
    that is, $\node \, \treedir \in \treedom$ implies $\node \in \treedom$,
    and
    $\treelabelling : \treedom \rightarrow \treelabels$
    is a function labelling the nodes of the tree.

We write $\rootnode$ to denote the root of a tree (the empty sequence).
We also write
\[
    \treeap{\treelabel}
           {\tree_1, \ldots, \tree_\numof}
\]
to denote the tree whose root node is labelled $\treelabel$ and has children
$\tree_1, \ldots, \tree_\numof$.
That is, we define
$\treeap{\treelabel}
        {\tree_1, \ldots, \tree_\numof}
 =
 \tup{\treedom', \treelabelling'}$
when for each $\treedir$ we have
$\tree_\treedir = \tup{\treedom_\treedir,
                       \treelabelling_\treedir}$
and
$\treedom' = \setcomp{\treedir\node}{\node \in \treedom_\treedir}
             \cup
             \set{\rootnode}$
and
\[
    \ap{\treelabelling'}{\node} =
    \begin{cases}
        \treelabel
        &
        \node = \rootnode
        \\

        \ap{\treelabelling_\treedir}{\node'}
        &
        \node = \treedir\node'
    \end{cases} \ .
\]
Also, let $\treesingle{\treelabel}$ denote the tree
$\tup{\set{\rootnode}, \treelabelling}$
where
$\ap{\treelabelling}{\rootnode} = \treelabel$.
A \emph{branch} in $T = \tup{\treedom, \treelabelling}$ is a sequence of nodes of $T$, $\node_1 \cdots \node_n$, such that $\node_1 = \epsilon$, $\node_n = \treedir_1 \, \treedir_2 \cdots \treedir_{n-1}$ is maximal in $\treedom$, and $\node_{j+1} = \node_j \, \treedir_{j}$ for each $1 \leq j \leq n-1$.

\subsection{HOPDA}

HOPDA are a generalisation of pushdown systems to a stack-of-stacks structure.
An order-$\pdaord$ stack is a stack of order-$(\pdaord-1)$ stacks.
An order-$\pdaord$ push operation pushes a new order-$(\pdaord-1)$ stack onto the stack that is a copy of the existing topmost order-$(\pdaord-1)$ stack.
Rewrite operations update the character that is at the top of the topmost stacks.

\begin{definition}[Order-$\pdaord$ Stacks]
    The set of order-$\pdaord$ stacks
    $\allstacks{\pdaord}{\salphabet}$
    over a given stack alphabet $\salphabet$ is defined inductively as follows.
    \[
        \begin{array}{rcl} %
            \allstacks{0}{\salphabet} %
            &=& %
            \salphabet \\ %
            \allstacks{\opord+1}{\salphabet} %
            &=& %
            \setcomp{ %
                \kstack{\opord+1}{\stackw_1 \ldots \stackw_\numof} %
            }{ %
                \forall \idxi . %
                    \stackw_\idxi \in \allstacks{\opord}{\salphabet} %
            } \ . %
        \end{array} %
    \]
\end{definition}
Stacks are written with the top part of the stack to the left.
We define several operations.
\[
    \begin{array}{rcll} %
        \ap{\topop{\opord}}{\kstack{\opord}{\stackw_1 \ldots \stackw_\numof}} %
        &=& %
        \stackw_1 %
        \\ %
        \ap{\topop{\opord}}{\kstack{\pdaord}{\stackw_1 \ldots \stackw_\numof}} %
        &=& %
        \ap{\topop{\opord}}{\stackw_1} %
        & %
        \pdaord > \opord %
        \\ %
        \\ %
        \ap{\rew{\cha}}{\kstack{1}{\cha_1 \ldots \cha_\numof}} %
        &=& %
        \kstack{1}{\cha \; \cha_2 \ldots \cha_\numof} %
        \\ %
        \ap{\rew{\cha} \; }{\kstack{\pdaord}{\stackw_1 \ldots \stackw_\numof}} %
        &=& %
        \kstack{\pdaord}{\ap{\rew{\cha}}{\stackw_1} \; %
                         \stackw_2 \ldots \stackw_\numof} %
        & %
        \pdaord > 1 %
        \\ %
        \\ %
        \ap{\push{\opord}}{\kstack{\opord}{\stackw_1 \ldots \stackw_\numof}} %
        &=& %
        \kstack{\opord}{\stackw_1 \; \stackw_1 \ldots \stackw_\numof} %
        \\ %
        \ap{\push{\opord}}{\kstack{\pdaord}{\stackw_1 \ldots \stackw_\numof}} %
        &=& %
        \kstack{\pdaord}{\ap{\push{\opord}}{\stackw_1} %
                       \;   \stackw_2, \ldots, \stackw_\numof} %
        & %
        \pdaord > \opord %
        \\ %
        \\ %
        \ap{\pop{\opord}}{\kstack{\opord}{\stackw_1 \ldots \stackw_\numof}} %
        &=& %
        \kstack{\opord}{\stackw_2 \ldots \stackw_\numof} %
        \\ %
        \ap{\pop{\opord}}{\kstack{\pdaord}{\stackw_1 \ldots \stackw_\numof}} %
        &=& %
        \kstack{\pdaord}{\ap{\pop{\opord}}{\; \stackw_1} %
                        \; \stackw_2, \ldots, \stackw_\numof} %
        & %
        \pdaord > \opord %
        \\ %
    \end{array} %
\]
and set
\[
    \ops{\pdaord} =
    \setcomp{\rew{\cha}}{\cha \in \salphabet}
    \cup
    \setcomp{\push{\opord}, \pop{\opord}}
            {1 \leq \opord \leq \pdaord}
\]
to be the set of order-$\pdaord$ stack operations.

For example
\[
    \begin{array}{rcl} %
        \ap{\push{2}}{ %
            \kstack{2}{ %
                \kstack{1}{\cha \; \chb} %
            } %
        } %
        &=& %
        \kstack{2}{ %
            \kstack{1}{\cha \; \chb} %
            \kstack{1}{\cha \; \chb} %
        } %
        \\ %
        \ap{\rew{\chb}}{ %
            \kstack{2}{ %
                \kstack{1}{\cha \; \chb} %
                \kstack{1}{\cha \; \chb} %
            } %
        } %
        &=& %
        \kstack{2}{ %
            \kstack{1}{\chb \; \chb} %
            \kstack{1}{\cha \; \chb} %
        } \ . %
    \end{array} %
\]

\begin{definition}[HOPDA or \hopda{\pdaord}]
    An order-$\pdaord$ \emph{higher order pushdown automaton (HOPDA or \hopda{\pdaord})} is given by a tuple
    $\tup{\controls,
          \oalphabet,
          \salphabet,
          \rules,
          \finals,
          \controlinit,
          \chainit}$
    where
        $\controls$ is a finite set of control states,
        $\oalphabet$ is a finite output alphabet (that contains the empty word character $\epsilon$),
        $\salphabet$ is a finite stack alphabet,
        $\rules \subseteq \controls \times
                          \salphabet \times
                          \oalphabet \times
                          \ops{\pdaord} \times
                          \controls$ is a set of transition rules,
        $\finals$ is a set of accepting control states,
        $\controlinit \in \controls$ is the initial control state, and
        $\chainit \in \salphabet$ is the initial stack character.
\end{definition}

We write
$\pdrule{\control}
        {\cha}
        {\och}
        {\op}
        {\control'}$
for a rule
$\tup{\control, \cha, \och, \op, \control'} \in \rules$.

A configuration of an \hopda{\pdaord} is a tuple
$\config{\control}{\stackw}$
where
$\control \in \controls$
and $\stackw$ is an order-$\pdaord$ stack over $\salphabet$.
We have a transition
$\config{\control}{\stackw} \pdatran{\och} \config{\control'}{\stackw'}$
whenever we have
$\pdrule{\control}
        {\cha}
        {\och}
        {\op}
        {\control'}$,
        $\ap{\topop{1}}{\stackw} = \cha$,
and
$\stackw' = \ap{\op}{\stackw}$.

A run over a word
$\word \in \oalphabet^\ast$
is a sequence of configurations
$\configc_0 \pdatran{\och_1} \cdots \pdatran{\och_\numof} \configc_\numof$
such that the word
$\och_1\ldots\och_\numof$
is $\word$.
It is an accepting run if
$\configc_0 = \config{\controlinit}{\stackinit{\pdaord}{\chainit}}$
---
where we write
$\stackinit{\pdaord}{\cha}$
for
$\kstack{\pdaord}{\cdots\kstack{1}{\cha}\cdots}$
---
and where
$\configc_\numof = \config{\control}{\stackw}$
with
$\control \in \finals$.
Furthermore, for a set of configurations $\configs$, we define
\[
    \prestar{\pda}{\configs}
\]
to be the set of configurations $\configc$ such that there is a run over some word from $\configc$ to
$\configc' \in \configs$.
When $\configs$ is defined as the language of some automaton $\auta$ accepting configurations, we abuse notation and write
$\prestar{\pda}{\auta}$
instead of
$\prestar{\pda}{\ap{\lang}{\auta}}$.

For convenience, we sometimes allow a set of characters to be output instead of only one.
This is to be interpreted as outputing each of the characters in the set once (in some arbitrary order).
We also allow sequences of operations
$\op_1; \ldots; \op_\numof$
in the rules instead of single operations.
When using sequences we allow a test operation
$\toptest{\cha}$
that only allows the sequence to proceed if the $\topop{1}$ character of the stack is $\cha$.
All of these extensions can be encoded by introducing intermediate control states.

\subsubsection{Regular Sets of Stacks}

We will need to represent sets of stacks.
To do this we will use automata to recognise stacks.
We define the stack automaton model of Broadbent\etal~\cite{Broadbent:2010} restricted to HOPDA rather than CPDA.
We will sometimes call these \emph{bottom-up stack automata} or simply \emph{automata}.
The automata operate over stacks interpreted as words, hence the opening and closing braces of the stacks appear as part of the input.
We annotate these braces with the order of the stack the braces belong to.
Let
$\saalphabet = \set{\sopen{\pdaord-1},\ldots,\sopen{1},
                    \sclose{1},\ldots,\sclose{\pdaord-1}}
               \uplus
               \salphabet$.
Note, we don't include
$\sopen{\pdaord}, \sclose{\pdaord}$
since these appear exclusively at the start and end of the stack.

\begin{definition}[Bottom-up Stack Automata]
    A tuple $\auta$ is a \emph{bottom-up stack automaton} when $\auta$ is
    $\tup{\sastates, \salphabet, \sastate_{\mathrm{in}}, \safinals, \sadelta}$
    where
        $\sastates$ is a finite set of states,
        $\salphabet$ is a finite input alphabet,
        $\sastate_{\mathrm{in}} \in \sastates$ is the initial state and
        $\sadelta : \brac{\sastates \times \salphabet}
                    \rightarrow
                    \sastates$
        is a deterministic transition function.
\end{definition}

Representing higher order stacks as a linear word graph, where the start of an order-$\opord$ stack is an edge labelled $\sopen{\opord}$ and the end of an order-$\opord$ stack is an edge labelled $\sclose{\opord}$, a run of a bottom-up stack automaton is a labelling of the nodes of the graph with states in $\sastates$ such that
\begin{enumerate}
\item
    the rightmost (final) node is labelled by $\sastate_{\mathrm{in}}$, and

\item
    whenever we have for any
    $\cha \in \saalphabet$,
    and pair of labelled nodes with an edge
    $\sastate \satran{\cha} \sastate'$
    then
    $\sastate = \ap{\sadelta}{\sastate', \cha}$.
\end{enumerate}
The run is accepting if the leftmost (initial) node is labelled by
$\sastate \in \safinals$.
An example run over the word graph representation of
$\kstack{3}{
    \kstack{2}{
        \kstack{1}{\cha \; \chb}\kstack{1}{\chb}
    }
    \kstack{2}{
        \kstack{1}{\chb}
    }
}$
is given in Figure~\ref{fig:word-graph-run}.

Let $\ap{\lang}{\auta}$ be the set of stacks with accepting runs of $\auta$.
Sometimes, for convenience, if we have a configuration
$\configc = \config{\control}{\stackw}$
of a HOPDA, we will write
$\configc \in \ap{\lang}{\auta}$
when
$\stackw \in \ap{\lang}{\auta}$.

\begin{figure*}
    \centering
    \psset{nodesep=1ex}
    \vspace{3ex}
    \begin{psmatrix}[colsep=4ex]
        \rnode{N1}{$\sastate_f$} &
        \rnode{N2}{$\sastate_{13}$} &
        \rnode{N3}{$\sastate_{12}$} &
        \rnode{N4}{$\sastate_{11}$} &
        \rnode{N5}{$\sastate_{10}$} &
        \rnode{N6}{$\sastate_9$} &
        \rnode{N7}{$\sastate_8$} &
        \rnode{N8}{$\sastate_7$} &
        \rnode{N9}{$\sastate_6$} &
        \rnode{N10}{$\sastate_5$} &
        \rnode{N11}{$\sastate_4$} &
        \rnode{N12}{$\sastate_3$} &
        \rnode{N13}{$\sastate_2$} &
        \rnode{N14}{$\sastate_1$} &
        \rnode{N15}{$\sastate_{\mathrm{in}}$}
        
        \ncline{N1}{N2}\naput{$\sopen{2}$}
        \ncline{N2}{N3}\naput{$\sopen{1}$}
        \ncline{N3}{N4}\naput{$\cha$}
        \ncline{N4}{N5}\naput{$\chb$}
        \ncline{N5}{N6}\naput{$\sclose{1}$}
        \ncline{N6}{N7}\naput{$\sopen{1}$}
        \ncline{N7}{N8}\naput{$\chb$}
        \ncline{N8}{N9}\naput{$\sclose{1}$}
        \ncline{N9}{N10}\naput{$\sclose{2}$}
        \ncline{N10}{N11}\naput{$\sopen{2}$}
        \ncline{N11}{N12}\naput{$\sopen{1}$}
        \ncline{N12}{N13}\naput{$\chb$}
        \ncline{N13}{N14}\naput{$\sclose{1}$}
        \ncline{N14}{N15}\naput{$\sclose{2}$}
    \end{psmatrix}
    \caption{
        \label{fig:word-graph-run}
        A run over
        $\kstack{3}{
            \kstack{2}{
                \kstack{1}{\cha \; \chb}\kstack{1}{\chb}
            }
            \kstack{2}{
                \kstack{1}{\chb}
            }
        }$
    }
\end{figure*}

\section{The Single Character Case}
\label{sec:technique}

We assume $\oalphabet = \set{\och, \ech}$ and use $\ochb$ to range over $\oalphabet$.
This can be obtained by simply replacing all other characters with $\ech$.
We also assume that all rules of the form
$\pdrule{\control}{\cha}{\ochb}{\op}{\control'}$
with $\op = \push{\pdaord}$ or $\op = \pop{\pdaord}$ have $\ochb = \ech$.
We can enforce this using intermediate control states to first apply $\op$ in one step, and then in another output $\ochb$
(the stack operation on the second step will be $\rew{\cha}$ where $\cha$ is the current top character).
We start with an outline of the proof, and then explain each step in detail.

For convenience, we assume acceptance is by reaching a unique control state in $\finals$ with an empty stack (i.e. the lowermost stack was removed with a $\pop{\pdaord}$ and
$\finals = \set{\uniquefinalcontrol}$).
This can easily be obtained by adding a rule to a new accepting state whenever we have a rule leading to a control state in $\finals$.
From this new state we can loop and perform $\pop{\pdaord}$ operations until the stack is empty.

\subsection{Outline of Proof}
\label{sec:outline}

The 
approach is to take an \hopda{\pdaord} $\pda$ and produce an \hopda{(\pdaord-1)} $\simulator{\pda}$ that satisfies the diagonal problem iff $\pda$ does.
The idea behind this reduction is that an (accepting) run
of $\pda$ can be decomposed into a tree with out-degree at most 2:
each $\push{\pdaord}$ has a matching $\pop{\pdaord}$ that brings the stack back to be the same as it was before the $\push{\pdaord}$;
we cut the run at the $\pop{\pdaord}$ and hang the tail
next to the $\push{\pdaord}$ and repeat this to form a tree from a run.
This is illustrated in Figure~\ref{fig:tree-decomp} where nodes are labelled by their configurations, and the $\push{\pdaord}$ and $\pop{\pdaord}$ points are marked.
The dotted arcs connect nodes matched by their pushes and pops -- these nodes have the same stacks.
Notice that at each branching point, the left and right subtrees start with the same order-$(\pdaord-1)$ stacks on top.
Notice also that for each branch, none of its transitions remove the topmost order-$(\pdaord-1)$ stack.
Hence, we can produce an \hopda{(\pdaord-1)} that picks a branch of this tree decomposition to execute and only needs to keep track of the topmost order-$(\pdaord-1)$ stack of the \hopda{\pdaord}.
When picking a branch to execute, the \hopda{(\pdaord-1)} outputs a single $\och$ if the branch not chosen could have output some $\och$ characters.
We prove that this is enough to maintain unboundedness.

\begin{figure*}
\easyhevea{}{\begin{figure*}}
    \newcommand\ci{$\config{\control_1}{\kstack{\pdaord}{\stackw_1}}$}
    \newcommand\cii{$\config{\control_2}{\kstack{\pdaord}{\stackw_2}}$}
    \newcommand\ciii{$\config{\control_3}{\kstack{\pdaord}{\stackw_2 \; \stackw_2}}$}
    \newcommand\civ{$\config{\control_4}{\kstack{\pdaord}{\stackw_3 \; \stackw_2}}$}
    \newcommand\cv{$\config{\control_5}{\kstack{\pdaord}{\stackw_3  \; \stackw_3  \; \stackw_2}}$}
    \newcommand\cvi{$\config{\control_6}{\kstack{\pdaord}{\stackw_4  \; \stackw_3  \; \stackw_2}}$}
    \newcommand\cvii{$\config{\control_7}{\kstack{\pdaord}{\stackw_3  \; \stackw_2}}$}
    \newcommand\cviii{$\config{\control_8}{\kstack{\pdaord}{\stackw_2}}$}
    \newcommand\cix{$\config{\control_9}{\kstack{\pdaord}{\stackw_5}}$}
    \centering
    \acmeasychair{
        \psset{nodesep=1ex,rowsep=5ex,colsep=0ex}
    }{
        \psset{nodesep=1ex,rowsep=6ex,colsep=0ex}
    }
    \subfloat[][a run of $\pda$ with $\push{\pdaord}$s and $\pop{\pdaord}$s marked.]{
        \parbox[t]{37ex}{
            \centering
            \begin{psmatrix}
                \rnode{N1}{\ci} \\
                \rnode{N2}{\cii} \\
                \rnode{N3}{\ciii} \\
                \rnode{N4}{\civ} \\
                \rnode{N5}{\cv} \\
                \rnode{N6}{\cvi} \\
                \rnode{N7}{\cvii} \\
                \rnode{N8}{\cviii} \\
                \rnode{N9}{\cix}

                \ncline{N1}{N2}
                \ncline{N2}{N3}\naput{$\push{\pdaord}$}
                \ncline{N3}{N4}
                \ncline{N4}{N5}\naput{$\push{\pdaord}$}
                \ncline{N5}{N6}
                \ncline{N6}{N7}\naput{$\pop{\pdaord}$}
                \ncline{N7}{N8}\naput{$\pop{\pdaord}$}
                \ncline{N8}{N9}
                \ncarc[arcangle=-75,linestyle=dotted]{N4}{N7}
                \ncarc[arcangle=-75,linestyle=dotted]{N2}{N8}
            \end{psmatrix}
        }
    }
    \subfloat[][The tree decomposition of the run]{
        \parbox[t]{50ex}{
            \centering
            \begin{psmatrix}
                                 &                   & \rnode{N1}{\ci} \\
                                 &                   & \rnode{N2}{\cii} \\
                                 & \rnode{N3}{\ciii} &                   & \rnode{N8}{\cviii} \\
                                 & \rnode{N4}{\civ}  &                   & \rnode{N9}{\cix} \\
                \rnode{N5}{\cv}  &                   & \rnode{N7}{\cvii} & \\
                \rnode{N6}{\cvi} & \\
                \\
                \\
                \ncline{N1}{N2}
                \ncline{N2}{N3}
                \ncline{N3}{N4}
                \ncline{N4}{N5}
                \ncline{N5}{N6}
                \ncline{N4}{N7}
                \ncline{N2}{N8}
                \ncline{N8}{N9}
            \end{psmatrix}
        }
    }
\easyhevea{}{\end{figure*}}
    \caption{\label{fig:tree-decomp}Tree decompositions of runs.}
\end{figure*}

In more detail, we perform the following steps.
\begin{enumerate}
\item
    Instrument $\pda$ to record whether an $\och$ character has been output.
    Then, using known reachability results,
    obtain regular sets of configurations from which the current $\topop{\pdaord}$ stack can be popped, and
        moreover, we can know
    whether an $\och$ is output on the way.
    These tests can be seen as a generalisation of pushdown systems with regular tests introduced by Esparza\etal~\cite{Esparza:2003}.

\item
    From an \hopda{\pdaord} $\pda$, we define an \hopda{(\pdaord-1)} with tests
        $\simulator{\pda}$ and then an \hopda{(\pdaord-1)} $\pda'$ such that
        \[
            \unbounded{\och}{\pda} \iff \unbounded{\och}{\pda'}
            \ .
        \]
    The tests will be used to check the branches of the tree decomposition not explored by $\simulator{\pda}$.

\item
    By repeated applications of the above reduction, we obtain an \hopda{1} $\pda$ for which $\unbounded{\och}{\pda}$ is decidable since the downward closure of a context-free grammar (equivalent to \hopda{1}) is computable~\cite{Leeuwen:1978,Courcelle:1991} and this is equivalent to the diagonal problem.
\end{enumerate}
The \hopda{(\pdaord-1)} with tests $\simulator{\pda}$ will simulate the \hopda{\pdaord} $\pda$ in the following way.
\begin{itemize}
\item
    All operations except for $\push{\pdaord}$ and $\pop{\pdaord}$ will be simulated directly.

\item
    In lieu of performing a $\push{\pdaord}$, $\simulator{\pda}$ will choose to simulate the run of $\pda$ between the push and its corresponding
    $\pop{\pdaord}$, or the run of $\pda$ after the corresponding
    $\pop{\pdaord}$ has taken place.
    \begin{itemize}
    \item
        Tests will be used to determine which control state could appear after the corresponding $\pop{\pdaord}$.

    \item
        If the part of the run not being simulated output some $\och$s, then $\pda$ will output a single $\och$ in place of the omitted $\och$s.
    \end{itemize}
\end{itemize}
Although $\simulator{\pda}$ will output far fewer $\och$ characters than $\pda$ (since it does not execute the full run), we show that it still outputs enough $\och$s for the language to remain unbounded.

We thus have the following theorem.

\begin{namedtheorem}{thm:diagonal}{Decidability of the Diagonal Problem}
    Given an \hopda{\pdaord} $\pda$ and output character $\och$, whether
    $\unbounded{\och}{\pda}$
    holds is decidable.
\end{namedtheorem}
\begin{proof}
    We construct via Lemma~\ref{lem:reduction} an \hopda{(\pdaord-1)} $\pda'$ such that
    $\unbounded{\och}{\pda}$
    iff
    $\unbounded{\och}{\pda'}$.
    We repeat this step until we have a \hopda{1}.
    It is known that
    $\unbounded{\och}{\pda}$
    for an \hopda{1} is decidable
        since it is possible to compute the downward closure~\cite{Leeuwen:1978,Courcelle:1991}.
\end{proof}

\subsection{HOPDA with Tests}
\label{sec:tests}

When executing a branch of the tree decomposition, to be able to ensure the branch is correct and whether we should output an extra $\och$ we need to know how the system could have behaved on the skipped branch.
To do this we add tests to the HOPDA that allow it to know if the current stack belongs to a given regular set.
We show in the following sections that the properties required for our reduction can be represented as regular sets of stacks.
Although we take Broadbent\etal's logical reflection as the basis of our proof, HOPDA with tests can be seen as a generalisation of pushdown systems with regular valuations due to Esparza\etal~\cite{Esparza:2003}.

\begin{definition}[\hopda{\pdaord} with Tests]
\label{def:hopda-tests}
    Given a sequence of automata
    $\auta_1, \ldots, \auta_\numof$
    recognising regular sets of stacks, an \emph{\hopda{\pdaord} with tests} is a tuple
    $\pda = \tup{\controls,
                 \oalphabet,
                 \salphabet,
                 \rules,
                 \finals,
                 \controlinit,
                 \chainit}$
    where
    $\controls, \oalphabet, \salphabet, \finals, \controlinit$,
    and $\chainit$ are as in HOPDA, and
    \[
        \rules \subseteq \controls \times
                         \salphabet \times
                         \set{\auta_1, \ldots, \auta_\numof} \times
                         \oalphabet \times
                         \ops{\pdaord} \times
                         \controls
    \]
    is a set of transition rules.
\end{definition}
We write
$\pdrulet{\control}{\cha}{\auta_\idxi}{\ochb}{\op}{\control'}$
for
$\tup{\control, \cha, \auta_\idxi, \ochb, \op, \control'} \in \rules$.
We have a transition
$\config{\control}{\stackw} \pdatran{\ochb} \config{\control'}{\stackw'}$
whenever
$\pdrulet{\control}{\cha}{\auta_\idxi}{\ochb}{\op}{\control'} \in \rules$
and
$\ap{\topop{1}}{\stackw} = \cha$,
$\stackw \in \ap{\lang}{\auta_\idxi}$,
and
$\stackw' = \ap{\op}{\stackw}$.

We know from Broadbent\etal that these tests do not add any extra power to HOPDA.
Intuitively, we can embed runs of the automata into the stack during runs of the HOPDA.

\begin{namedtheorem}{thm:no-tests}{Removing Tests}
    \cite[Theorem 3 (adapted)]{Broadbent:2010}
    For every \hopda{\pdaord} with tests $\pda$, we can compute an \hopda{\pdaord} $\pda'$ with
    $\ap{\lang}{\pda} = \ap{\lang}{\pda'}$.
\end{namedtheorem}
\begin{proof}
    This is a straightforward adaptation of Broadbent\etal~\cite{Broadbent:2010}.
    A more general theorem is proved in Theorem~\ref{thm:no-tests-sim}.
\end{proof}

\subsubsection{Marking Outputs}

When the HOPDA is in a configuration of the form $\config{\control}{\kstack{\pdaord}{\stackw}}$
-- i.e. the outermost stack contains only a single order-$(\pdaord-1)$ stack --
we require the HOPDA to be able to know whether,
\begin{itemize}
\item
    for a given $\control_1$ and $\control_2$, there is a run from $\config{\control_1}{\kstack{\pdaord}{\stackw}}$ to $\config{\control_2}{\kstack{\pdaord}{}}$
    (that is, the HOPDA empties the stack), and
\item
    whether, during the run, an $\och$ is output.
\end{itemize}

Given $\pda$, we first augment $\pda$ to record whether an $\och$ has been produced.
This can be done simply by recording in the control state whether $\och$ has been output.

\begin{definition}[$\augmented{\pda}$]
    Given
    $\pda = \tup{\controls,
                 \oalphabet,
                 \salphabet,
                 \rules,
                 \finals,
                 \controlinit,
                 \chainit}$ we define
    \[
        \augmented{\pda} =
        \tup{
            \controls \cup \augmented{\controls},
            \oalphabet,
            \salphabet,
            \rules \cup \augmented{\rules},
            \finals \cup \augmented{\finals},
            \controlinit,
            \chainit
        }
    \]
    where
    \[
        \begin{array}{rcl}
            \augmented{\controls}
            &=&
            \setcomp{\augmented{\control}}{\control \in \controls}
            \\

            \augmented{\rules}
            &=&
            \setcomp{\pdrule{\augmented{\control}}
                            {\cha}
                            {\ochb}
                            {\op}
                            {\augmented{\control'}}}
                    {\pdrule{\control}
                            {\cha}
                            {\ochb}
                            {\op}
                            {\control'} \in \rules}\ \cup
            \\
            &&
            \setcomp{\pdrule{\control}
                            {\cha}
                            {\och}
                            {\op}
                            {\augmented{\control'}}}
                    {\pdrule{\control}
                            {\cha}
                            {\och}
                            {\op}
                            {\control'} \in \rules}
            \\

            \augmented{\finals}
            &=&
            \setcomp{\augmented{\control}}{\control \in \finals}
        \end{array}
    \]
\end{definition}

It is easy to see that $\pda$ and $\augmented{\pda}$ accept the same languages, and that $\augmented{\pda}$ is only in a control state $\augmented{\control}$ if an $\och$ has been output.

\subsubsection{Building the Automata}

Fix some
$\pda = \tup{\controls,
             \oalphabet,
             \salphabet,
             \rules,
             \finals}$
and
$\augmented{\pda} = \tup{\augmented{\controls},
                         \oalphabet,
                         \salphabet,
                         \augmented{\rules},
                         \augmented{\finals}}$.
To obtain a HOPDA with tests, we need, for each $\control_1, \control_2 \in \controls$ the following automata.
Note, we define these automata to accept order-$(\pdaord-1)$ stacks since they will be used in an \hopda{(\pdaord-1)} with tests.
\begin{enumerate}
\item
    $\canpopaut{\control_1}{\control_2}$
    accepting all stacks $\stackw$ such that there is a run of $\pda$ from
    $\config{\control_1}{\kstack{\pdaord}{\stackw}}$
    to
    $\config{\control_2}{\kstack{\pdaord}{}}$,

\item
    $\canpopautout{\control_1}{\control_2}$
    accepting all stacks $\stackw$ such that there is a run of $\pda$ from
    $\config{\control_1}{\kstack{\pdaord}{\stackw}}$
    to
    $\config{\control_2}{\kstack{\pdaord}{}}$
    that outputs at least one $\och$.
\end{enumerate}
To do this we will use a reachability result due to Broadbent\etal that appeared in ICALP 2012~\cite{Broadbent:2012}.
This result uses an automata representation of sets of configurations.
However, these automata are slightly different in that they read full configurations ``top down'', whereas the automata of \reftheorem{thm:no-tests} read only stacks ``bottom up''.

It is known that these two representations are effectively equivalent, and that both form an effective boolean algebra~\cite{Broadbent:2010,Broadbent:2012}.
In particular, for a top-down automaton $\auta$ and a control state $\control$ we can build a bottom-up stack automaton $\autb$ such that
$\config{\control}{\stackw} \in \ap{\lang}{\auta}$
iff
$\stackw \in \ap{\lang}{\autb}$ and vice versa.
We recall the reachability result.

\begin{theorem}
\label{thm:regreach}
    \cite[Theorem~1 (specialised)]{Broadbent:2012}
    Given an HOPDA $\pda$ and a top-down automaton $\auta$, we can construct an
    automaton $\auta'$ accepting
    $\prestar{\pda}{\auta}$.
\end{theorem}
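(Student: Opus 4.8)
The plan is to prove Theorem~\ref{thm:regreach} by the \emph{saturation method}: beginning from an automaton that recognises the target set, I would repeatedly add transitions that witness one more step of backward reachability, until no further transition can be added. Because the bottom-up stack automata defined above and the top-down configuration automata used here denote the same effective boolean algebra, I would carry out the construction in whichever presentation makes the rules cleanest, namely a \emph{nested, alternating} stack-automaton model defined by induction on the order: an automaton reading order-$\pdaord$ stacks consists of a finite state set together with transitions whose guards are automata reading order-$(\pdaord-1)$ stacks. The crucial design decision is to fix the (finite) state set of $\auta'$ in advance and to let saturation only ever \emph{add transitions}, never states; this is what will ultimately force termination.

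Next I would translate each rule of $\pda$ into a saturation step. A rule $\pdrule{\control}{\cha}{\och}{\op}{\control'}$ should induce, whenever the current automaton already accepts the configuration reached by applying $\op$ in control state $\control'$, a new transition allowing the automaton to accept the corresponding pre-configuration in control state $\control$ with top symbol $\cha$. For $\op = \rew{\chb}$ this is exactly the order-1 saturation rule familiar from ordinary pushdown systems: at state $\control$ reading $\cha$ we splice in the behaviour the automaton exhibits at state $\control'$ reading $\chb$. The operations $\push{1}$ and $\pop{1}$ are handled in the same first-order spirit --- a $\pop{1}$ discards a top symbol and so contributes a silent transition, while a $\push{1}$ duplicates it and is captured by composing two reads of that symbol.

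The main obstacle is the higher-order case $\push{\opord}$ and $\pop{\opord}$ for $\opord \geq 2$, because $\push{\opord}$ \emph{duplicates} an entire order-$(\opord-1)$ substack. To push $\text{Pre}^\ast$ information across such a step, the automaton must simultaneously check a property of the copied substack in its r\^ole as the new top copy \emph{and} a property of the same substack as it remains below; this conjunctive obligation is precisely what alternation at order $\opord$ expresses, and is the reason the model cannot be deterministic. Dually, $\pop{\opord}$ erases the top order-$(\opord-1)$ substack, so its saturation step must quantify over all substacks that could legitimately have been present, which is encoded by guarding the new order-$\opord$ transition with the appropriate, already-computed order-$(\opord-1)$ automaton. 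Making these nested rules cohere across every order --- so that each transition added at order $\opord$ is guarded by the correct, already-saturated order-$(\opord-1)$ automaton --- is the delicate combinatorial core of the argument.

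Finally I would establish termination and correctness. Termination holds because the state sets at each order are fixed beforehand and the guards range over a fixed finite pool up to equivalence in the boolean algebra, so there are only finitely many candidate transitions and the monotone process stabilises after finitely many rounds, yielding the desired $\auta'$. Soundness, that every configuration accepted by $\auta'$ lies in $\prestar{\pda}{\auta}$, follows by induction on the number of saturation steps, since each added transition is justified by a concrete one-step move of $\pda$. Completeness, that every configuration of $\prestar{\pda}{\auta}$ is accepted, follows by induction on the length of a witnessing run to $\ap{\lang}{\auta}$, applying the saturation step associated with the first move. Since this is a direct specialisation of the collapsible-pushdown saturation algorithm of Broadbent\etal~\cite{Broadbent:2012} with the collapse operation removed, I would appeal to their construction for the detailed verification.
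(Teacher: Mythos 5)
The paper offers no proof of Theorem~\ref{thm:regreach} at all: it is imported directly from Broadbent\etal~\cite{Broadbent:2012}, and your sketch is a faithful high-level account of exactly that saturation construction (fixed state sets with transitions only added, alternation to handle the duplication performed by $\push{\opord}$, guarded transitions for $\pop{\opord}$, termination by finiteness of candidate transitions, soundness by induction on saturation steps and completeness by induction on run length). Since you, like the paper, ultimately defer the detailed verification to the cited work, your proposal matches the paper's treatment and is appropriate.
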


Let $\auta_{\control, \cha}$ be a top-down automaton accepting configurations of the form $\config{\control}{\kstack{\pdaord}{\stackw}}$ where $\ap{\topop{1}}{\stackw} = \cha$.
Next, let
\[
    \auta_{\control} =
    \bigcup\limits_{
        \pdrule{\control'}{\cha}{\ech}{\pop{\pdaord}}{\control} \in \rules
    }
    \auta_{\control', \cha}
\]
and
\[
    \auta^\och_\control =
    \bigcup\limits_{
        \pdrule{\control'}{\cha}{\ech}{\pop{\pdaord}}{\control} \in \rules
    }
    \auta_{\augmented{\control'}, \cha}
\]
I.e. $\auta_\control$ and $\auta^\och_\control$ accept configurations of $\augmented{\pda}$ from which it is possible to perform a $\pop{\pdaord}$ operation to $\control$ and reach the empty stack.

\begin{definition}[$\canpopaut{\control_1}{\control_2}$
                   and
                   $\canpopautout{\control_1}{\control_2}$]
    Using the preceding notation, given $\control_1$ and $\control_2$ we define bottom-up automata
    \begin{itemize}
    \item
        $\canpopaut{\control_1}{\control_2}$ where
        $
            \ap{\lang}{\canpopaut{\control_1}{\control_2}} =
            \setcomp{
                \stackw
            }{
                \config{\control_1}{\kstack{\pdaord}{\stackw}} \in
                \prestar{\pda}{\auta_{\control_2}}
            } \ .
        $

    \item
        $\canpopautout{\control_1}{\control_2}$ where
        $
            \ap{\lang}{\canpopautout{\control_1}{\control_2}} =
            \setcomp{
                \stackw
            }{
                \config{\control_1}{\kstack{\pdaord}{\stackw}} \in
                \prestar{\augmented{\pda}}{\auta^\och_{\control_2}}
            } \ .
        $
    \end{itemize}
\end{definition}

It is easy to see both $\canpopaut{\control_1}{\control_2}$ and $\canpopautout{\control_1}{\control_2}$ are regular and representable by bottom-up automata since both
\[
    \prestar{\pda}{\auta_{\control_2}}
    \quad \text{ and } \quad
    \prestar{\augmented{\pda}}{\auta^\och_{\control_2}}
\]
are regular from Theorem~\ref{thm:regreach}, and bottom-up and top-down automata are effectively equivalent.
To enforce only stacks of the form
$\kstack{\pdaord}{\stackw}$
we intersect with an automaton $\auta_1$ accepting all stacks containing a single order-$(\pdaord-1)$ stack (this is clearly regular).

\subsection{Reduction to Lower Orders}
\label{sec:reduction}

We are now ready to complete the reduction.
Correctness is shown in Section~\ref{sec:correctness}.
Let $\auttrue$ be the automaton accepting all stacks.
In the following definition, a control state
$\simcontrol{\control_1}{\control_2}$
means that we are currently in control state $\control_1$ and are aiming to empty the stack on reaching $\control_2$, and the rules
    $\simsimrules$ simulate all operations apart from $\push{\pdaord}$ and $\pop{\pdaord}$ directly,
    $\simfinrules$ detect when the run is accepting,
    $\simpushrules$ follow the push branch of the tree decomposition, using tests to ensure the existence of the pop branch, and
    $\simpoprules$ follow the pop branch of the tree decomposition, also using tests to check the existence of the push branch.

\begin{nameddefinition}{def:sim-pda}{$\simulator{\pda}$}
    Given an \hopda{\pdaord}
    $\pda$
    described by the tuple
    $\tup{\controls,
         \oalphabet,
         \salphabet,
         \rules,
         \set{\uniquefinalcontrol},
         \controlinit,
         \chainit}$
    as well as families of automata
    $\sequence{\canpopaut{\control_1}{\control_2}}
              {\control_1, \control_2 \in \controls}$
    and
    $\sequence{\canpopautout{\control_1}{\control_2}}
              {\control_1, \control_2 \in \controls}$
    we define an \hopda{(\pdaord-1)} with tests
    \[
        \simulator{\pda} = \tup{
            \simulator{\controls},
            \oalphabet,
            \salphabet,
            \simulator{\rules},
            \simulator{\finals},
            \simcontrol{\controlinit}{\uniquefinalcontrol},
            \chainit
        }
    \]
    where
    \[
        \begin{array}{rcl}
            \simulator{\controls}
            &=&
            \setcomp{
                \simcontrol{\control_1}{\control_2}
            }{
                \control_1, \control_2 \in \controls
            }
            \uplus
            \set{\finalcontrol}
            \\

            \simulator{\rules}
            &=&
            \simsimrules \cup \simfinrules \cup \simpushrules \cup \simpoprules
            \\

            \simulator{\finals}
            &=&
            \set{\finalcontrol}
       \end{array}
    \]
    and we define
    \begin{itemize}
    \item
        $\simsimrules$ is the set containing all rules of the form
        \[
            \pdrulet{\simcontrol{\control_1}{\control_2}}
                    {\cha}
                    {\auttrue}
                    {\ochb}
                    {\op}
                    {\simcontrol{\control'_1}{\control_2}}
        \]
        for all
        $\pdrule{\control_1}
                {\cha}
                {\ochb}
                {\op}
                {\control'_1} \in \rules$
        with
        $\op \notin \set{\push{\pdaord}, \pop{\pdaord}}$
        and
        $\control_2 \in \controls$, and

    \item
        $\simfinrules$ is the set containing all rules of the form
        \[
            \pdrulet{\simcontrol{\control_1}{\control_2}}
                    {\cha}
                    {\auttrue}
                    {\ech}
                    {\rew{\cha}}
                    {\finalcontrol}
        \]
        for all
        $\pdrule{\control_1}
                {\cha}
                {\ech}
                {\pop{\pdaord}}
                {\control_2} \in \rules$,
        and

    \item
        $\simpushrules$ is the smallest set of rules containing all rules of the form
        \[
            \pdrulet{\simcontrol{\control_1}{\control_2}}
                    {\cha}
                    {\canpopaut{\control}{\control_2}}
                    {\ech}
                    {\rew{\cha}}
                    {\simcontrol{\control'_1}{\control}}
        \]
        for all
        $\pdrule{\control_1}
                 {\cha}
                 {\ech}
                 {\push{\pdaord}}
                 {\control'_1} \in \rules$
        and
        $\control, \control_2 \in \controls$,
        and all rules of the form
        \[
            \pdrulet{\simcontrol{\control_1}{\control_2}}
                    {\cha}
                    {\canpopautout{\control}{\control_2}}
                    {\och}
                    {\rew{\cha}}
                    {\simcontrol{\control'_1}{\control}}
        \]
        for all
        $\pdrule{\control_1}
                {\cha}
                {\ech}
                {\push{\pdaord}}
                {\control'_1} \in \rules$
        and
        $\control, \control_2 \in \controls$,
        and

    \item
        $\simpoprules$ is the set containing all rules of the form
        \[
            \pdrulet{\simcontrol{\control_1}{\control_2}}
                    {\cha}
                    {\canpopaut{\control'_1}{\control}}
                    {\ech}
                    {\rew{\cha}}
                    {\simcontrol{\control}{\control_2}}
        \]
        for all
        $\pdrule{\control_1}
                {\cha}
                {\ech}
                {\push{\pdaord}}
                {\control'_1} \in \rules$
        and
        $\control, \control_2 \in \controls$
        and all rules of the form
        \[
            \pdrulet{\simcontrol{\control_1}{\control_2}}
                    {\cha}
                    {\canpopautout{\control'_1}{\control}}
                    {\och}
                    {\rew{\cha}}
                    {\simcontrol{\control}{\control_2}}
        \]
        for all
        $\pdrule{\control_1}
                {\cha}
                {\ech}
                {\push{\pdaord}}
                {\control'_1} \in \rules$
        and
        $\control, \control_2 \in \controls$.
    \end{itemize}
\end{nameddefinition}

\changed[mh]{
    In the next section, we show the reduction is correct.

    \begin{namedlemma}{lem:correct-sim}{Correctness of $\simulator{\pda}$}
        \[
            \unbounded{\och}{\pda} \iff \unbounded{\och}{\simulator{\pda}}
        \]
    \end{namedlemma}
}

To complete the reduction, we convert the HOPDA with tests into a HOPDA without tests.

\begin{namedlemma}{lem:reduction}{Reduction to Lower Orders}
    For every \hopda{\pdaord} $\pda$ we can construct an \hopda{(\pdaord-1)} $\pda'$ such that
    \[
        \unbounded{\och}{\pda} \iff \unbounded{\och}{\pda'} \ .
    \]
\end{namedlemma}
\begin{proof}
    From \refdefinition{def:sim-pda} and \reflemma{lem:correct-sim}, we obtain from $\pda$ an \hopda{(\pdaord-1)} with tests $\simulator{\pda}$ satisfying the conditions of the lemma.  To complete the proof, we invoke \reftheorem{thm:no-tests} to find $\pda'$ as required.
\end{proof}

\section{Correctness of Reduction}
\label{sec:correctness}

This section is dedicated to the proof of
\changed[mh]{
    \reflemma{lem:correct-sim}.
}

The idea of the proof is that each run of $\pda$ can be decomposed into a tree:
each $\push{\pdaord}$ operation creates a node whose left child is the run up to the matching $\pop{\pdaord}$,
and whose right child is the run after the matching $\pop{\pdaord}$.
All other operations create a node with a single child which is the successor configuration.

Each branch of such a tree corresponds to a run of $\simulator{\pda}$.
To prove that $\simulator{\pda}$ can output an unbounded number of $\och$s we prove that any tree containing $\numof$ edges outputting $\och$ must have a branch along which $\simulator{\pda}$ would output $\ap{\log}{\numof}$ $\och$ characters.
Thus, if $\pda$ can output an unbounded number of $\och$ characters, so can $\simulator{\pda}$.

\subsection{Tree Decomposition of Runs}

Given a run
\[
    \pdrun =
    \configc_0 \pdatran{\ochb_1}
    \configc_1 \pdatran{\ochb_2}
    \cdots
    \pdatran{\ochb_\numof} \configc_\numof
\]
of $\pda$ where each $\push{\pdaord}$ operation has a matching $\pop{\pdaord}$, we can construct a tree representation of $\pdrun$ inductively.
That is, we define
$\treedecomp{\configc} = \treesingle{\ech}$ for the single-configuration run $\configc$, and, when
\[
    \pdrun = \configc \pdatran{\ochb} \pdrun'
\]
where the first rule applied does not contain a $\push{\pdaord}$ operation, we have
\[
    \treedecomp{\pdrun} = \treeap{\ochb}{\treedecomp{\pdrun'}}
\]
and, when
\[
    \pdrun = \configc_0 \pdatran{\ech}
             \pdrun_1 \pdatran{\ech}
             \pdrun_2
\]
with $\configc_1$ being the first configuration of $\pdrun_2$ and where the first rule applied in $\pdrun$ contains a $\push{\pdaord}$ operation,
$\configc_0 = \config{\control}{\stackw}$
and
$\configc_1 = \config{\control'}{\stackw}$
for some
$\control, \control', \stackw$
and there is no configuration in $\pdrun_1$ of the form
$\config{\control''}{\stackw}$, then
\[
    \treedecomp{\pdrun} =
    \treeap{\ech}{
        \treedecomp{\pdrun_1},
        \treedecomp{\pdrun_2}
    }
    \ .
\]

An accepting run of $\pda$ has the form $\pdrun \pdatran{\ech} \configc$ where $\pdrun$ has the property that all $\push{\pdaord}$ operations have a matching $\pop{\pdaord}$ and the final transition is a $\pop{\pdaord}$ operation to $\configc = \config{\control}{\kstack{\pdaord}{}}$ for some $\control \in \finals$.
Hence, we define the tree decomposition of an accepting run to be
\[
    \treedecomp{\pdrun \pdatran{\ech} \configc} =
    \treeap{\ech}{
        \treedecomp{\pdrun},
        \treesingle{\ech}
    } \ .
\]

\subsection{Scoring Trees}

In the above tree decomposition of runs, the tree branches at each instance of a $\push{\pdaord}$ operation.
This mimics the behaviour of $\simulator{\pda}$, which performs such branching non-deterministically.
Hence, given a run $\pdrun$ of $\pda$, each branch of $\treedecomp{\pdrun}$ corresponds to a run of $\simulator{\pda}$.

We formalise this intuition in the following section.
    In this section, we assign scores to each subtree $\tree$ of $\treedecomp{\pdrun}$.
These scores correspond directly to the largest number of $\och$ characters that $\simulator{\pda}$ can output while simulating a branch of $\tree$.

Note, in the following definition, we exploit the fact that only nodes with exactly one child may have a label other than $\ech$.
We also give a general definition applicable to trees with out-degree larger than 2.
This is needed in the simultaneous unboundedness section.
For the moment, we only have trees with out-degree at most 2.

Let
\[
    \isout{\ochb} =
    \begin{cases} %
        0 & \ochb = \ech \\ %
        1 & \ochb = \och %
    \end{cases} %
    \quad \text{ and } \quad
    \isout{\numof} =
    \begin{cases} %
        0 & \numof = 0 \\ %
        1 & \numof > 0 %
    \end{cases} %
    \ .
\]
Then,
\acmeasychair{
    $\treescore{\tree} =$
}{}
\[
    \acmeasychair{}{
        \treescore{\tree} =
    }
    \begin{cases} %
        0 %
        & %
        \tree = \treesingle{\ech} %
        \\ %
        \treescore{\tree_1} + \isout{\ochb} %
        & %
        \tree = \treeap{\ochb}{\tree_1} %
        \\ %
        \ap{\max\limits_{1 \leq \idxi \leq \numof}}{ %
            \treescore{\tree_\idxi} + %
            \isout{ %
                \sum\limits_{\idxj \neq \idxi} %
                \treescore{\tree_\idxj} %
            } %
        } %
        & %
        \tree = \treeap{\ech}{\tree_1, \ldots, \tree_\numof} %
    \end{cases} %
\]

We then have the following lemma for trees with out-degree 2.

\begin{namedlemma}{lem:tree-scores}{Minimum Scores}
    Given a tree $\tree$ containing $\numof$ nodes labelled $\och$, we have
    \[
        \treescore{\tree} \geq \ap{\log}{\numof}
    \]
\end{namedlemma}
\begin{proof}
    The proof is by induction over $\numof$.
    In the base case $\numof = 1$ and there is a single node $\node$ in $\tree$ labelled $\och$.
    By definition, the subtree $\tree'$ rooted at $\node$ has
    $\treescore{\tree'} = 1$.
    Since the score of a tree is bounded from below by the score of any of its subtrees, we have
    $\treescore{\tree} \geq \ap{\log}{1}$ as required.

    Now, assume $\numof > 1$.
    Find the smallest subtree $\tree'$ of $\tree$ containing $\numof$ nodes labelled $\och$.
    We necessarily have either
    \begin{enumerate}
    \item
    \label{item:score-case-one-child}
        $\tree' = \treeap{\och}{\tree_1}$,
        or

    \item
    \label{item:score-case-two-children}
        $\tree' = \treeap{\ech}{\tree_1, \tree_2}$
        where $\tree_1$ and $\tree_2$ each have at least one node each labelled $\och$.
    \end{enumerate}

    In case~(\ref{item:score-case-one-child}) we have by induction
    \[
        \treescore{\tree'} = 1 + \ap{\log}{\numof - 1} \geq \ap{\log}{\numof}
    \]

    In case~(\ref{item:score-case-two-children}) we have
    \[
        \treescore{\tree'} =
        \ap{\max}{
            \begin{array}{c} %
                \treescore{\tree_1} + \isout{\treescore{\tree_2}}, \\ %
                \treescore{\tree_2} + \isout{\treescore{\tree_1}} %
            \end{array} %
        } \ .
    \]
    We pick whichever of $\tree_1$ and $\tree_2$ has the most nodes labelled $\och$.
    This tree has at least $\ceil{\numof / 2}$ nodes labelled $\och$.
    Note, since both trees contain nodes labelled $\och$, the right-hand side of the addition is always $1$.
    Hence, we need to show
    \[
        \ap{\log}{\ceil{\numof / 2}} + 1 \geq \ap{\log}{\numof}
    \]
    which follows from
    \[
        \begin{array}{c} %
            \ap{\log}{\numof} - \ap{\log}{\ceil{\numof / 2}} %
            = %
            \ap{\log}{\frac{\numof}{\ceil{\numof / 2}}} %
            \\ %
            \leq %
            \\ %
            \ap{\log}{\frac{\numof}{\numof / 2}} %
            = %
            \ap{\log}{2} = 1 \ . %
        \end{array} %
    \]
    By our choice of $\tree'$ we thus have
    $\treescore{\tree} = \treescore{\tree'} \geq \ap{\log}{\numof}$
    as required.
\end{proof}

\acmeasychair{
    \subsection{Completing the Proof}

    To complete the proof we show the following lemmas, whose proofs are given in the full version~\cite{Hague:2015}.
        These lemmas simply formalise the connection between runs of $\pda$ and runs of $\simulator{\pda}$.

    \begin{namedlemma}{lem:branch-runs}{Scores to Runs}
        Given an accepting run $\pdrun$ of $\pda$, if
        $\treescore{\treedecomp{\pdrun}} = \numof$
        then
        $\och^\numof \in \ap{\lang}{\simulator{\pda}}$.
    \end{namedlemma}

    \begin{lemma}[$\simulator{\pda}$ to $\pda$]
    \label{lem:sim-to-pda}
        If
        $\unbounded{\och}{\simulator{\pda}}$
        then
        $\unbounded{\och}{\pda}$.
    \end{lemma}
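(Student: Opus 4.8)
The plan is to prove the statement directly, by showing how to turn any accepting run of $\simulator{\pda}$ into an accepting run of $\pda$ emitting at least as many copies of $\och$. The engine of the argument will be the following invariant, proved by induction on the length of the $\simulator{\pda}$ run: \emph{if $\simulator{\pda}$ has a run from $\config{\simcontrol{\control_1}{\control_2}}{\stackw}$ to $\finalcontrol$ that outputs $\numof$ copies of $\och$, then $\pda$ has a run from $\config{\control_1}{\kstack{\pdaord}{\stackw}}$ to $\config{\control_2}{\kstack{\pdaord}{}}$ that outputs at least $\numof$ copies of $\och$.} Instantiating this at $\control_1 = \controlinit$, $\control_2 = \uniquefinalcontrol$ and $\stackw = \stackinit{\pdaord-1}{\chainit}$, and using $\kstack{\pdaord}{\stackinit{\pdaord-1}{\chainit}} = \stackinit{\pdaord}{\chainit}$, turns any accepting $\simulator{\pda}$ run with $\numof$ outputs into an accepting $\pda$ run with at least $\numof$ outputs; hence unboundedness for $\simulator{\pda}$ forces unboundedness for $\pda$, which is the lemma.

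The induction would case split on the first rule of the $\simulator{\pda}$ run, following \refdefinition{def:sim-pda}. A rule from $\simsimrules$ mirrors a single non-$\push{\pdaord}$, non-$\pop{\pdaord}$ step of $\pda$ with the same emitted character, so I prepend that step and appeal to the induction hypothesis on the remainder. A rule from $\simfinrules$ is terminal: it corresponds to the $\pop{\pdaord}$ emptying the stack and moving from $\control_1$ to $\control_2$; here $\numof = 0$ and that single $\pda$ step is all that is needed. The substantial cases are $\simpushrules$ and $\simpoprules$, where $\simulator{\pda}$ commits to one branch of the tree decomposition while the accompanying test certifies the branch it skips. By the defining property of the test automata, $\canpopaut{\control}{\control_2}$ certifies a $\pda$ run from $\config{\control}{\kstack{\pdaord}{\stackw}}$ to $\config{\control_2}{\kstack{\pdaord}{}}$, and $\canpopautout{\control}{\control_2}$ certifies such a run that additionally outputs at least one $\och$. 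I would then assemble the full $\pda$ run as: perform the $\push{\pdaord}$; splice in the push branch; splice in the pop branch --- taking one branch from the induction hypothesis and the other from the test, with the two cases differing only in which branch plays which role.

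The accounting that makes the invariant work is that each $\och$ produced by $\simulator{\pda}$ is paid for by a genuine $\och$ of $\pda$: an $\och$ from $\simsimrules$ is emitted by the step it simulates, whereas an $\och$ produced by a $\canpopautout$-tested push or pop rule is paid for by the skipped branch, which the ``out'' automaton guarantees to output at least one $\och$. Summing the outputs of the two reconstructed sub-runs --- at least $\numof$ from the induction hypothesis branch (or $\numof - 1$ plus the compensating $\och$ in the $\canpopautout$ case) and at least the promised amount from the test branch --- yields at least $\numof$ overall.

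The main obstacle I expect is the stack bookkeeping at order $\pdaord$ in the push and pop cases: both the induction hypothesis and the tests describe runs over an outermost stack carrying a single order-$(\pdaord-1)$ stack, yet in the reconstructed run these fragments must execute on top of a deeper order-$\pdaord$ stack. I would discharge this with the standard locality property of HOPDA --- a run that works on the topmost order-$(\pdaord-1)$ stack and then pops it neither reads nor alters the order-$(\pdaord-1)$ stacks beneath it, and restores them once the top is removed --- so each fragment lifts verbatim with its output intact. This lets me splice the $\push{\pdaord}$, the push-branch fragment (which deletes the duplicated copy and returns to $\config{\control}{\kstack{\pdaord}{\stackw}}$), and the pop-branch fragment into one legal run of $\pda$, closing the induction.
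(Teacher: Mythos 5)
Your proof is correct and takes essentially the same route as the paper's: an induction along the run of $\simulator{\pda}$ that case-splits on the rule sets of Definition~\ref{def:sim-pda}, uses the $\canpopaut{\control}{\control'}$ and $\canpopautout{\control}{\control'}$ tests to supply the skipped branch (with $\canpopautout{\control}{\control'}$ paying for the compensating $\och$), and lifts the certified fragments onto a deeper order-$\pdaord$ stack via the locality of runs that never pop below their starting level. The only difference is organisational --- the paper runs the induction forwards, maintaining a prefix run together with a completing suffix run of $\pda$, whereas you induct on suffixes of the $\simulator{\pda}$ run --- but the invariant content and the output accounting are the same.
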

}{
    \subsection{From Branches to Runs}

    \begin{namedlemma}{lem:branch-runs}{Scores to Runs}
        Given an accepting run $\pdrun$ of $\pda$, if
        $\treescore{\treedecomp{\pdrun}} = \numof$
        then
        $\och^\numof \in \ap{\lang}{\simulator{\pda}}$.
    \end{namedlemma}
    \begin{proof}
        Let
$\uniquefinalcontrol$
be the final (accepting) control state of $\pda$ and let
$\tree = \treedecomp{\pdrun}$.
We begin at the root node of $\tree$, which corresponds to the initial configuration of $\pdrun$.
Let
$\config{\control}{\kstack{\pdaord}{\stackw}}$
be this initial configuration and let
$\config{\simcontrol{\control}{\uniquefinalcontrol}}
        {\stackw}$
be the initial configuration of $\simulator{\pda}$.

Thus, assume we have a node $\node$ of $\tree$, with a corresponding configuration
$\configc = \config{\control}{\stackw}$
of $\pda$ and configuration
$\simulator{\configc} =
 \config{\simcontrol{\control}{\control_{\text{pop}}}}
        {\ap{\topop{\pdaord}}{\stackw}}$
of $\simulator{\pda}$ and a run $\simulator{\pdrun}$ of $\simulator{\pda}$ ending in $\simulator{\configc}$ and outputting $\brac{\numof - \treescore{\tree'}}$ $\och$ characters where $\tree'$ is the subtree of $\tree$ rooted at $\node$.
The subtree $\tree'$ corresponds to a sub-run $\pdrun'$ of $\pdrun$ where the transition immediately following $\pdrun'$ is a $\pop{\pdaord}$ transition to a control state $\control_{\text{pop}}$.

There are two cases when we are dealing with internal nodes.
\begin{itemize}
\item
    $\tree' = \treeap{\ochb}{\tree_1}$.

    In this case there is a transition
    $\configc \pdatran{\ochb} \configc'$
    via a rule
    $\pdrule{\control}{\cha}{\ochb}{\op}{\control'}$
    where
    $\op \notin \set{\push{\pdaord}, \pop{\pdaord}}$.
    Hence, we have the rule
    $\pdrulet{\simcontrol{\control}{\control_{\text{pop}}}}
             {\cha}
             {\auttrue}
             {\ochb}
             {\op}
             {\simcontrol{\control'}{\control_{\text{pop}}}}$
    in $\simulator{\pda}$ and thus we can extend $\simulator{\pdrun}$ with a transition
    $\simulator{\configc} \pdatran{\ochb} \simulator{\configc'}$
    via this rule where $\simulator{\pdrun}$, $\configc'$ and $\simulator{\configc'}$ maintain the assumptions above.

\item
    $\tree' = \treeap{\ech}{\tree_1, \tree_2}$.

    In this case we have that $\tree'$ corresponds to a sub-run
    \[
        \configc \pdatran{\ech} \pdrun_1 \pdatran{\ech} \pdrun_2
    \]
    of $\pdrun$.  The transition from $\configc$ to the beginning of $\pdrun_1$ is via a rule
    $\pdruler_1 = \pdrule{\control}{\cha}{\ech}{\push{\pdaord}}{\control_1}$
    and the transition from the end of $\pdrun_1$ to the start of $\pdrun_2$ is via a rule
    $\pdruler_2 = \pdrule{\control_2}{\cha_1}{\ech}{\pop{\pdaord}}{\control_3}$.
    Moreover, from the definition of the decomposition, the final configuration in $\pdrun_2$ is followed in $\pdrun$ by a pop rule
    $\pdruler_3 = \pdrule{\control_4}{\cha_2}{\ech}{\pop{\pdaord}}{\control_{\text{pop}}}$.

    There are two further cases depending on whether the score of $\tree'$ is derived from the score of $\tree_1$ or $\tree_2$.
    \begin{itemize}
    \item
        In the case of $\tree_1$, then, first observe that $\pdrun_2$ followed by an application of $\pdruler_3$ is a run from
        $\config{\control_3}{\stackw}$
        to
        $\config{\control_{\text{pop}}}{\ap{\pop{\pdaord}}{\stackw}}$
        where the stack
        $\ap{\pop{\pdaord}}{\stackw}$
        does not appear in $\pdrun_2$.
        Thus, there is a run of $\pda$ from
        $\config{\control_3}{\kstack{\pdaord}{\ap{\topop{\pdaord}}{\stackw}}}$
        to
        $\config{\control_{\text{pop}}}{\kstack{\pdaord}{}}$
        and moreover, this run outputs an $\och$ whenever the original run does.
        Hence, there is also a corresponding run of $\pda$ from which outputs an $\och$ whenever the original run does.

        If an $\och$ is output, we have
        $\simulator{\configc} \in
         \ap{\lang}{\canpopautout{\control_3}{\control_{\text{pop}}}}$
        and
        $\treescore{\tree'} - \treescore{\tree_1} = 1$.
        We can extend $\pdrun$ via an application of the rule
        $\pdrulet{\simcontrol{\control}{\control_{\text{pop}}}}
                 {\cha}
                 {\canpopautout{\control_3}{\control_{\text{pop}}}}
                 {\och}
                 {\rew{\cha}}
                 {\simcontrol{\control_1}{\control_3}}$
        that exists in $\simulator{\pda}$ since
        $\simulator{\configc} \in
         \ap{\lang}{\canpopautout{\control_3}{\control_{\text{pop}}}}$.
        This transition maintains the property on the stacks since the $\push{\pdaord}$ copies the topmost stack, hence $\simulator{\pda}$ does not need to change its stack.
        It maintains the property on the scores since it outputs $\och$, accounting for the part of the score contributed by $\tree_2$.
        Finally, the condition on control states is satisfied since the second component is set to $\control_2$.

        If an $\och$ is not output, then the case is similar to the above, except $\tree_2$ does not contribute to the score,
        we have
        $\simulator{\configc} \in
         \ap{\lang}{\canpopaut{\control_3}{\control_{\text{pop}}}}$,
        and the transition of $\simulator{\pda}$ is labelled $\ech$ instead of $\och$.

    \item
        The case of $\tree_2$ is almost symmetric to $\tree_1$.
        Observe that $\pdrun_1$ followed by an application of $\pdruler_2$ is a run from
        $\config{\control_1}{\ap{\push{\pdaord}}{\stackw}}$
        to
        $\config{\control_3}{\stackw}$
        where the stack
        $\stackw$
        does not appear in $\pdrun_1$.
        Thus, there is a run of $\pda$ from
        $\config{\control_1}{\kstack{\pdaord}{\ap{\topop{\pdaord}}{\stackw}}}$
        to
        $\config{\control_3}{\kstack{\pdaord}{}}$
        and moreover, this run outputs an $\och$ whenever the original run does.
        Hence, there is also a corresponding run of $\pda$ from which outputs an $\och$ whenever the original run does.

        If an $\och$ is output, we have
        $\simulator{\configc} \in
         \ap{\lang}{\canpopautout{\control_1}{\control_3}}$
        and
        $\treescore{\tree'} - \treescore{\tree_2} = 1$.
        We can extend $\pdrun$ via an application of the rule
        $\pdrulet{\simcontrol{\control}{\control_{\text{pop}}}}
                 {\cha}
                 {\canpopautout{\control_1}{\control_3}}
                 {\och}
                 {\rew{\cha}}
                 {\simcontrol{\control_3}
                             {\control_{\text{pop}}}}$
        that exists in $\simulator{\pda}$ since
        $\simulator{\configc} \in
         \ap{\lang}{\canpopautout{\control_1}{\control_3}}$
        This transition maintains the property on the stacks since the stack after the $\pop{\pdaord}$ is identical to the stack before the $\push{\pdaord}$, hence $\simulator{\pda}$ does not need to change its stack.
        It maintains the property on the scores since it outputs $\och$, accounting for the part of the score contributed by $\tree_1$.
        Finally, the condition on control states is satisfied since the second component is unchanged.

        If an $\och$ is not output, then the case is similar to the above, except $\tree_1$ does not contribute to the score,
        we have
        $\simulator{\configc} \in
         \ap{\lang}{\canpopaut{\control_1}{\control_3}}$
        and the transition of $\simulator{\pda}$ is labelled $\ech$ instead of $\och$.
   \end{itemize}
\end{itemize}

Finally, we reach a leaf node $\node$ with a run outputting the required number of $\och$s.
We need to show that the run constructed is accepting.
Let $\node'$ be the first ancestor of $\node$ that contains $\node$ in its leftmost subtree.
Let $\tree'$ be the subtree rooted at $\node'$.
This tree corresponds to a sub-run $\pdrun'$ of $\pdrun$ that is followed immediately by a $\pop{\pdaord}$ rule
$\pdrule{\control}{\cha}{\ech}{\pop{\pdaord}}{\control_{\text{pop}}}$.
Moreover, we have
$\pdrulet{\simcontrol{\control}{\control_{\text{pop}}}}
         {\cha}
         {\auttrue}
         {\ech}
         {\rew{\cha}}
         {\finalcontrol}$
with which we can complete the run of $\simulator{\pda}$ as required.

    \end{proof}

    \subsection{The Other Direction}

    Finally, we need to show that each accepting run of $\simulator{\pda}$ gives rise to an accepting run of $\pda$ containing at least as many $\och$s.

    \begin{lemma}[$\simulator{\pda}$ to $\pda$]
    \label{lem:sim-to-pda}
        We have
        $\unbounded{\och}{\simulator{\pda}}$
        implies
        $\unbounded{\och}{\pda}$.
    \end{lemma}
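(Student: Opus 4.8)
The plan is to prove the implication in the strong, run-by-run form announced just above the statement: every accepting run of $\simulator{\pda}$ that outputs $\numof$ copies of $\och$ gives rise to an accepting run of $\pda$ that outputs at least $\numof$ copies of $\och$. Granting this, $\unbounded{\och}{\simulator{\pda}}$ provides, for every bound $\numof$, an accepting run of $\simulator{\pda}$ emitting at least $\numof$ copies of $\och$, hence an accepting run of $\pda$ emitting at least $\numof$ copies of $\och$, which is exactly $\unbounded{\och}{\pda}$. Intuitively the construction inverts the tree decomposition: a run of $\simulator{\pda}$ describes one branch of a decomposition together with tests certifying that the skipped subtrees exist as genuine runs of $\pda$, so we can reassemble the whole run of $\pda$ by re-inserting the omitted branches and the real $\push{\pdaord}$ and $\pop{\pdaord}$ operations.

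I would carry this out by induction on the length of the run of $\simulator{\pda}$, establishing the invariant: whenever $\simulator{\pda}$ has a run from $\config{\simcontrol{\control}{\control_{\text{pop}}}}{t}$ to $\finalcontrol$ outputting $k$ copies of $\och$, then $\pda$ has a run from $\config{\control}{\kstack{\pdaord}{t}}$ to $\config{\control_{\text{pop}}}{\kstack{\pdaord}{}}$ outputting at least $k$ copies of $\och$. The base case is a single $\simfinrules$ step, which fires only when $\pda$ has a rule $\pdrule{\control}{\cha}{\ech}{\pop{\pdaord}}{\control_{\text{pop}}}$; the corresponding $\pop{\pdaord}$ of $\pda$ takes $\config{\control}{\kstack{\pdaord}{t}}$ to $\config{\control_{\text{pop}}}{\kstack{\pdaord}{}}$ and emits nothing, matching $k=0$. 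A first step by $\simsimrules$ is handled by prepending the identical $\pda$ rule, which acts on the single top stack exactly as the simulator does, and then applying the induction hypothesis.

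The substance lies in the branch rules. A first step by $\simpushrules$ carries a test $\canpopaut{\control_r}{\control_{\text{pop}}}$ or $\canpopautout{\control_r}{\control_{\text{pop}}}$ that $t$ satisfies; by construction this certifies a run of $\pda$ from $\config{\control_r}{\kstack{\pdaord}{t}}$ to $\config{\control_{\text{pop}}}{\kstack{\pdaord}{}}$, outputting at least one $\och$ in the $\canpopautout$ (hence $\och$-labelled) case. I would reassemble the run of $\pda$ as the genuine $\push{\pdaord}$, followed by the run supplied by the induction hypothesis for the simulated push branch (replayed on top of the freshly copied stack and ending by popping that copy), followed by the certified skipped branch from $\control_r$. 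A $\simpoprules$ first step is symmetric: the test now certifies the skipped push branch, which I would insert between the $\push{\pdaord}$ and the induction-hypothesis run for the simulated pop branch. In each case the output count is at least $k$, because the lone $\och$ emitted by an $\och$-labelled branch rule is paid for by the $\geq 1$ copies guaranteed by the $\canpopautout$ test, whereas $\ech$-labelled branch rules create no obligation.

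The main obstacle is the stack-context property needed for the reassembly to yield a legal run: a run of $\pda$ certified to empty a single top order-$(\pdaord-1)$ stack, i.e.\ from $\config{\control'}{\kstack{\pdaord}{t}}$ to $\config{\control''}{\kstack{\pdaord}{}}$, must be replayable verbatim with a further order-$(\pdaord-1)$ stack $s$ sitting beneath it, ending in $\config{\control''}{\kstack{\pdaord}{s}}$ with $s$ untouched. I would justify this by noting that such a run operates on a non-empty order-$\pdaord$ stack at every step before its terminal $\pop{\pdaord}$, and that every stack operation only reads or modifies the current topmost order-$(\pdaord-1)$ stack (an order-$\pdaord$ push or pop merely copies or removes it, while lower-order operations act inside it); hence $s$ is never read until the terminal $\pop{\pdaord}$ removes the copy of $t$, leaving $s$ intact. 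This is precisely the context-independence already implicit in the $\prestar{\pda}{\cdot}$ definitions of $\canpopaut{\control_1}{\control_2}$ and $\canpopautout{\control_1}{\control_2}$. With this in hand the reassembled sequence is a genuine run of $\pda$, and instantiating the invariant at the initial configuration $\config{\simcontrol{\controlinit}{\uniquefinalcontrol}}{\stackinit{\pdaord-1}{\chainit}}$ of $\simulator{\pda}$ produces an accepting run of $\pda$ from $\config{\controlinit}{\stackinit{\pdaord}{\chainit}}$ to $\config{\uniquefinalcontrol}{\kstack{\pdaord}{}}$ with at least as many copies of $\och$, which finishes the proof.
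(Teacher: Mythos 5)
Your proof is correct, and it rests on the same two ingredients as the paper's: the tests $\canpopaut{\cdot}{\cdot}$ and $\canpopautout{\cdot}{\cdot}$ certify genuine stack-emptying runs of $\pda$ for the branches the simulator skipped, and such runs are context-independent, i.e.\ replayable verbatim with further order-$(\pdaord-1)$ stacks sitting underneath. The organization, however, is genuinely different. The paper runs a \emph{forward} induction over the configurations $\configc_0, \ldots, \configc_\numof$ of the accepting run of $\simulator{\pda}$, maintaining at each step a \emph{pair} of partial runs of $\pda$: a prefix $\pdrun^\idxi_1$ from the initial configuration to the currently simulated one, carrying the whole residual stack $\stackw\, \stackw_1 \ldots \stackw_\altnumof$ explicitly in the invariant, and a suffix $\pdrun^\idxi_2$ from the pending pop target $\control_{\text{pop}}$ to acceptance; $\simpoprules$ steps splice the certified run into the prefix, $\simpushrules$ steps splice it onto the front of the suffix, and the two halves are glued by the final pop. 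You instead prove a compositional, Hoare-triple-style statement by induction on the length of the \emph{suffix} of the simulator run: from $\config{\simcontrol{\control}{\control_{\text{pop}}}}{t}$, a simulator run to $\finalcontrol$ emitting $k$ letters yields a run of $\pda$ from $\config{\control}{\kstack{\pdaord}{t}}$ to $\config{\control_{\text{pop}}}{\kstack{\pdaord}{}}$ emitting at least $k$. This buys a cleaner invariant --- the residual stacks never appear, being absorbed into a single local padding lemma --- at the price of having to state and justify that padding lemma explicitly, which you do correctly (the order-$\pdaord$ stack is non-empty at every configuration before the terminal $\pop{\pdaord}$, so the added stack is never the topmost one and is never read). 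Both arguments pay for the single $\och$ emitted by an $\och$-labelled branch rule with the $\geq 1$ occurrences guaranteed by the $\canpopautout$ test. I see no gaps.
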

    \begin{proof}
        Let $\uniquefinalcontrol$ be the unique accepting conrol state of $\pda$.
Take an accepting run $\simulator{\pdrun}$ of $\simulator{\pda}$.
We show that there exists a corresponding run $\pdrun$ of $\pda$ outputting at least as many $\och$s.

Let
\[
    \configc_0 \pdatran{\ochb}
    \cdots
    \pdatran{\ochb} \configc_\numof
    \pdatran{\ech} \config{\finalcontrol}{\stackw}
\]
for some $\stackw$ be the accepting run of $\simulator{\pda}$.
We define inductively for each
$0 \leq \idxi \leq \numof$
a pair of runs $\pdrun^\idxi_1, \pdrun^\idxi_2$ of $\pda$ such that
\begin{enumerate}
\item
    $\pdrun^\idxi_2$ ends in a configuration
    $\config{\uniquefinalcontrol}{\kstack{\pdaord}{}}$
    (i.e. is accepting), and

\item
    if
    $\configc_\idxi =
     \config{\simcontrol{\control}{\control_{\text{pop}}}}
            {\stackw}$
    then
    \begin{enumerate}
    \item
        the final configuration of $\pdrun^\idxi_1$ is
        $\config{\control}
                {\kstack{\pdaord}
                        {\stackw \stackw_1 \ldots \stackw_\altnumof}}$,
        for some
        $\stackw_1, \ldots, \stackw_\altnumof$,
        and

    \item
        the first configuration of $\pdrun^\idxi_2$ is
        $\config{\control_{\text{pop}}}
                {\kstack{\pdaord}
                        {\stackw_1 \ldots \stackw_\altnumof}}$, and
    \end{enumerate}

    \item
        the sum of the number of $\och$ characters output by $\pdrun^\idxi_1$ and $\pdrun^\idxi_2$ is at least the number of $\och$ characters output by
        $\configc_0 \pdatran{\ochb_1}
         \cdots
         \pdatran{\ochb_\idxi} \configc_\idxi$.
\end{enumerate}

Initially we have
$\configc_0 = \config{\simcontrol{\controlinit}{\uniquefinalcontrol}}{\stackw}$
and
$\stackw = \stackinit{\pdaord-1}{\chainit}$.
We define
$\pdrun^0_1 = \config{\controlinit}{\kstack{\pdaord}{\stackw}}$
and
$\pdrun^0_2 = \config{\uniquefinalcontrol}{\kstack{\pdaord}{}}$
which immediately satisfy the required conditions.

Assume we have $\pdrun^\idxi_1$ and $\pdrun^\idxi_2$ as required.
We show how to obtain $\pdrun^{\idxi+1}_1$ and $\pdrun^{\idxi+1}_2$.
There are several cases depending on the rule used on the transition
$\configc_\idxi \pdatran{\ochb_{\idxi+1}} \configc_{\idxi+1}$.
Let
$\configc_\idxi =
 \config{\simcontrol{\control}{\control_{\text{pop}}}}
        {\stackw}$,
the final configuration of $\pdrun^\idxi_1$ be
$\config{\control}
        {\kstack{\pdaord}{\stackw \stackw_1 \ldots \stackw_\altnumof}}$
and the first configuration of $\pdrun^\idxi_2$ be
$\config{\control_{\text{pop}}}
        {\kstack{\pdaord}{\stackw_1 \ldots \stackw_\altnumof}}$.
\begin{itemize}
\item
    If the rule was
    $\pdrulet{\simcontrol{\control}{\control_{\text{pop}}}}
             {\cha}
             {\auttrue}
             {\ochb}
             {\op}
             {\simcontrol{\control'}{\control_{\text{pop}}}}$
    with
    $\op \notin{\push{\pdaord}, \pop{\pdaord}}$
    then we have
    $\pdrule{\control}{\cha}{\ochb}{\op}{\control'} \in \rules$
    and we define $\pdrun^{\idxi+1}_1$ to be $\pdrun^\idxi_1$ extended by an application of this rule.
    We also define $\pdrun^{\idxi+1}_2 = \pdrun^\idxi_2$.

    The required conditions are inherited from $\pdrun^\idxi_1$ and $\pdrun^\idxi_2$ since $\op$ only changes the $\topop{\pdaord}$ stack, the final configuration of $\pdrun^{\idxi+1}_2$ is the same as $\pdrun^\idxi_2$, $\control_{\text{pop}}$ is not changed, and the rule of $\pda$ outputs an $\och$ iff the rule of $\simulator{\pda}$ does.

\item
    If the rule was
    $\pdrulet{\simcontrol{\control}{\control_{\text{pop}}}}
             {\cha}
             {\canpopaut{\control'_{\text{pop}}}
                        {\control_{\text{pop}}}}
             {\ech}
             {\rew{\cha}}
             {\simcontrol{\control'}{\control'_{\text{pop}}}}$
    then we have a rule
    $\pdruler =
     \pdrule{\control}
            {\cha}
            {\ech}
            {\push{\pdaord}}
            {\control'} \in \rules$.
    Moreover, from the test
    $\canpopaut{\control'_{\text{pop}}}{\control_{\text{pop}}}$
    we know there is a run of $\pda$ from
    $\config{\control'_{\text{pop}}}{\kstack{\pdaord}{\stackw}}$
    to
    $\config{\control_{\text{pop}}}{\kstack{\pdaord}{}}$
    and hence there is also a run $\pdrun$ from
    $\config{\control'_{\text{pop}}}
            {\kstack{\pdaord}{\stackw \stackw_1 \ldots \stackw_\altnumof}}$
    to
    $\config{\control_{\text{pop}}}
            {\kstack{\pdaord}{\stackw_1 \ldots \stackw_\altnumof}}$.
    We set
    $\pdrun^{\idxi+1}_2 = \pdrun \pdrun^\idxi_2$
    and $\pdrun^{\idxi+1}_1$ to be $\pdrun^\idxi_1$ extended by an application of $\pdruler$.

    Since the final configuration of $\pdrun^{\idxi+1}_1$ is
    $\config{\control'}
            {\kstack{\pdaord}
                    {\stackw \stackw \stackw_1 \ldots \stackw_\altnumof}}$
    it is easy to check the required correspondence with the first configuration
    $\config{\control'_{\text{pop}}}
            {\kstack{\pdaord}{\stackw \stackw_1 \ldots \stackw_\altnumof}}$
    of $\pdrun^{\idxi+1}_2$.

    The remaining conditions are immediate since no $\och$ is output and the final configuration of $\pdrun^{\idxi+1}_2$ is the same as $\pdrun^\idxi_2$.

\item
    The case of
    $\pdrulet{\simcontrol{\control}{\control_{\text{pop}}}}
             {\cha}
             {\canpopautout{\control'_{\text{pop}}}
                           {\control_{\text{pop}}}}
             {\och}
             {\rew{\cha}}
             {\simcontrol{\control'}{\control'_{\text{pop}}}}$
    is almost identical to the previous case.
    To adapt the proof, one needs only observe that since
    $\configc_\idxi \in
     \ap{\lang}{\canpopautout{\control'_{\text{pop}}}
                             {\control_{\text{pop}}}}$
    the run $\pdrun$ used to extend $\pdrun^\idxi_2$ also outputs at least one $\och$ character.

\item
    If the rule was
    $\pdrulet{\simcontrol{\control}{\control_{\text{pop}}}}
             {\cha}
             {\canpopaut{\control_1}{\control_2}}
             {\ech}
             {\rew{\cha}}
             {\simcontrol{\control_2}{\control_{\text{pop}}}}$
    then there is also a rule
    $\pdruler =
     \pdrule{\control}
            {\cha}
            {\ech}
            {\push{\pdaord}}
            {\control_1} \in \rules$
    and from the test
    $\canpopaut{\control_1}{\control_2}$
    we know there is a run of $\pda$ from
    $\config{\control_1}{\kstack{\pdaord}{\stackw}}$
    to
    $\config{\control_2}{\kstack{\pdaord}{}}$
    and therefore there is also a run $\pdrun$ that goes from
    $\config{\control_1}
            {\kstack{\pdaord}
                    {\stackw \stackw \stackw_1 \ldots \stackw_\altnumof}}$
    to
    $\config{\control_2}
            {\kstack{\pdaord}
                    {\stackw \stackw_1 \ldots \stackw_\altnumof}}$.
    We set $\pdrun^{\idxi+1}_1$ to be $\pdrun^\idxi_1$ extended with an application of $\pdruler$ and then the run $\pdrun$.
    We also set
    $\pdrun^{\idxi+1}_2 = \pdrun^\idxi_2$.

    To verify that the properties hold, we observe that
    $\configc_{\idxi+1} =
     \config{\simcontrol{\control_2}{\control_{pop}}}{\stackw}$,
    and $\pdrun^{\idxi+1}_1$ ends with
    $\config{\control_2}
            {\kstack{\pdaord}{\stackw \stackw_1 \ldots \stackw_\altnumof}}$
    and $\pdrun^{\idxi+1}_2$ still begins with
    $\config{\control_{\text{pop}}}
            {\kstack{\pdaord}{\stackw_1 \ldots \stackw_\altnumof}}$
    and has the required final configuration.
    The property on the number of $\och$s holds since the rule of $\simulator{\pda}$ did not output an $\och$.

\item
    The case of
    $\pdrulet{\simcontrol{\control}{\control_{\text{pop}}}}
             {\cha}
             {\canpopautout{\control_1}{\control_2}}
             {\och}
             {\rew{\cha}}
             {\simcontrol{\control_2}{\control_{\text{pop}}}}$
    is almost identical to the previous case.
    To adapt the proof, one needs only observe that since
    $\configc_\idxi \in
     \ap{\lang}{\canpopautout{\control_1}{\control_2}}$
    the run $\pdrun$ used to extend $\pdrun^\idxi_1$ also outputs at least one $\och$ character.
\end{itemize}
Finally,  when we reach $\idxi = \numof$ we have from the final transition of the run of $\simulator{\pda}$ that there is a rule
$\pdrule{\control}{\cha}{\ech}{\pop{\pdaord}}{\control_{\text{pop}}}$.
We combine $\pdrun^\numof_1$ and $\pdrun^\numof_2$ with this pop transition, resulting in an accepting run of $\pda$ that outputs at least as many $\och$ characters as the run of $\simulator{\pda}$.

    \end{proof}
}


\section{Multiple Characters}
\label{sec:simultaneous}

We generalise the previous result to the full diagonal problem.
Na\"{i}vely, the previous approach cannot work.
Consider the HOPDA executing
\[
    \push{1}^\numof; \push{\pdaord}; \pop{1}^\numof; \pop{\pdaord}; \pop{1}^\numof
\]
where the first sequence of $\pop{1}$ operations output $\och_1$ and the second sequence output $\och_2$.

The corresponding run trees are of the form given in Figure~\ref{fig:multi-branch}.
In particular, $\simulator{\pda}$ can only choose one branch, hence all runs of $\simulator{\pda}$ produce a bounded number of $\och_1$s or a bounded number of $\och_2$s.
They cannot be simultaneously unbounded.

\begin{figure}
\begin{center}
    \acmeasychair{
        \psset{nodesep=.5ex,rowsep=2ex,colsep=6ex}
    }{
        \psset{nodesep=.5ex,rowsep=4ex,colsep=6ex}
    }
    \begin{psmatrix}
                               & \rnode{N}{$\ech$} & \\
        \rnode{N0}{$\och_1$}   &                   & \rnode{N1}{$\och_2$} \\
        \pselipsenode{N00}     &                   & \pselipsenode{N10} \\
        \rnode{N000}{$\och_1$} &                   & \rnode{N100}{$\och_2$} \\
        \rnode{N0000}{$\ech$}  &                   & \rnode{N1000}{$\ech$}

        \ncline{N}{N0}
        \ncline{N}{N1}
        \ncline{N0}{N00}
        \ncline{N00}{N000}
        \ncline{N000}{N0000}
        \ncline{N1}{N10}
        \ncline{N10}{N100}
        \ncline{N100}{N1000}
    \end{psmatrix}
    \caption{\label{fig:multi-branch}An example showing that following a single branch does not work for simultaneous unboundedness.}
\end{center}
\end{figure}

For $\simulator{\pda}$ to be able to output both an unbounded number of $\och_1$ and $\och_2$ characters, it must be able to output two branches of the tree.
To this end, we define a notion of $\numchs$-branch HOPDA, which output trees with up to $\numchs$ branches.
We then show that the reduction from \hopda{\pdaord} to \hopda{(\pdaord-1)} can be generalised to $\numchs$-branch HOPDA.

\subsection{Branching HOPDA}

We define \hopda{\pdaord} outputting trees with at most $\numchs$ branches, denoted \bhopda{\pdaord}{\numchs}.
Note, an \hopda{\pdaord} that outputs a word is an \bhopda{\pdaord}{1}.
Indeed, any \bhopda{\pdaord}{\numchs} is also an \bhopda{\pdaord}{\numchs'} whenever $\numchs \leq \numchs'$.

\begin{definition}[\bhopda{\pdaord}{\numchs}]
    We define an \emph{order-$\pdaord$ $\numchs$-branch pushdown automaton (\bhopda{\pdaord}{\numchs})} to be given by a tuple
    $\pda = \tup{\controls,
                 \oalphabet,
                 \salphabet,
                 \rules,
                 \finals,
                 \controlinit,
                 \chainit,
                 \csrank}$
    where $\controls$, $\oalphabet$, $\salphabet$, $\finals$, $\controlinit$, and $\chainit$ are as in HOPDA.
    The set of rules
    $\rules \subseteq \bigcup\limits_{1 \leq \numof \leq \numchs}
                      \controls \times
                      \salphabet \times
                      \oalphabet \times
                      \ops{\pdaord} \times
                      \controls^\numof$
    together with a mapping
    $\csrank : \controls \rightarrow \set{1, \ldots, \numchs}$
    such that for all
    $\tup{\control,
          \cha,
          \ochb,
          \op,
          \control_1, \ldots, \control_\numof} \in \rules$
    we have
    $\ap{\csrank}{\control} \geq
     \ap{\csrank}{\control_1} +
     \cdots +
     \ap{\csrank}{\control_\numof}$.
\end{definition}

We use the notation
$\pdrule{\control}{\cha}{\ochb}{\op}{\control_1, \ldots, \control_\numof}$
to denote a rule
$\tup{\control,
      \cha,
      \ochb,
      \op,
      \control_1, \ldots, \control_\numof} \in \rules$.
Intuitively, such a rule generates a node of a tree with $\numof$ children.
The purpose of the mapping $\csrank$ is to bound the number of branches that this tree may have.
Hence, at each branching rule, the quota of branches is split between the different subtrees.
The existence of such a mapping implies this information is implicit in the control states and an \bhopda{\pdaord}{\numchs} can only output trees with at most $\numchs$ branches.

From the initial configuration
$\configc_0 = \config{\controlinit}{\stackinit{\pdaord}{\chainit}}$
a run of an \bhopda{\pdaord}{\numchs} is a tree
$\tree = \tup{\treedom, \treelabelling}$
whose nodes are labelled with \hopda{\pdaord} configurations,
and generates an output tree
$\tree' = \tup{\treedom, \treelabelling'}$
whose nodes are labelled with symbols from the output alphabet.
Precisely
\begin{itemize}
\item
    $\ap{\treelabelling}{\rootnode} = \configc_0$, and

\item
    for a node $\node$ with children
    $\node_1, \ldots, \node_\numof$
    and
    $\ap{\treelabelling}{\node} = \config{\control}{\stackw}$
    there is a rule
    $\pdrule{\control}{\cha}{\ochb}{\op}{\control_1, \ldots, \control_\numof}$
    such that for all
    $1 \leq \idxi \leq \numof$
    we have
    $\ap{\treelabelling}{\node_\idxi} = \config{\control_\idxi}{\stackw'}$
    where
    $\ap{\topop{1}}{\stackw} = \cha$,
    $\stackw' = \ap{\op}{\stackw}$.
    Moreover we have
    $\ap{\treelabelling'}{\node} = \ochb$.

\item
    For all leaf nodes $\node$ we have
    $\ap{\treelabelling'}{\node} = \ech$.
\end{itemize}
The run is accepting if for all leaf nodes $\node$ we have
$\ap{\treelabelling}{\node} = \config{\control}{\kstack{\pdaord}{}}$
and
$\control \in \finals$.
Let $\ap{\lang}{\pda}$ be the set of output trees of $\pda$.

Given an output tree $\tree$ we write
$\chcount{\och}{\tree}$
to denote the number of nodes labelled $\och$ in $\tree$.
For an \bhopda{\pdaord}{\numchs} $\pda$, we define
\begin{multline*}
    \unbounded{\och_1, \ldots, \och_\numchs}{\pda} =
    \\
    \forall \numof .
    \exists \tree \in \ap{\lang}{\pda} .
    \forall 1 \leq \idxi \leq \numchs .
    \chcount{\och_\idxi}{\tree} \geq \numof \ .
\end{multline*}


\section{Reduction For Simultaneous Unboundedness}
\label{sec:reduction-sim}

Given an \bhopda{\pdaord}{\numchs} $\pda$ we construct an \bhopda{\pdaord-1}{\numchs} $\simulator{\pda}$ such that
\[
    \unbounded{\och_1, \ldots, \och_\numchs}{\pda}
    \iff
    \unbounded{\och_1, \ldots, \och_\numchs}{\simulator{\pda}} \ .
\]
Moreover, we show
$\unbounded{\och_1, \ldots, \och_\numchs}{\pda}$
is decidable for a \bhopda{0}{\numchs} (i.e. a regular automaton outputting an $\numchs$-branch tree) $\pda$.

For simplicity, we assume for all rules
$\pdrule{\control}{\cha}{\ochb}{\op}{\control_1, \ldots, \control_\numof}$
if $\numof > 1$ then $\op = \rew{\cha}$ (i.e. the stack is unchanged).
Additionally we have $\ochb = \ech$.

We also make analogous assumptions to the single character case.
That is, we assume
$\oalphabet = \set{\och_1, \ldots, \och_\numchs, \ech}$
and use $\ochb$ to range over $\oalphabet$.
Moreover, all rules of the form
$\pdrule{\control}{\cha}{\ochb}{\op}{\control'}$
with $\op = \push{\pdaord}$ or $\op = \pop{\pdaord}$ have $\ochb = \ech$.
Finally, we assume acceptance is by reaching a unique control state in $\finals$ with an empty stack.

\subsection{Some Intuition}
\label{sec:outline-sim}

We briefly sketch the intuition behind the algorithm.
We illustrate the reduction from \bhopda{\pdaord}{\numchs} to \bhopda{\pdaord-1}{\numchs} in Figure~\ref{fig:reduction-sim}.
\begin{itemize}
\item
    We begin with an \hopda{\pdaord} which we first interpret as an \bhopda{\pdaord}{\numchs}.
    This is possible because an \bhopda{\pdaord}{\numchs} can produce \emph{at most} $\numchs$ branches.
    Thus, an \hopda{\pdaord} --- which produces a single branch --- is also a \bhopda{\pdaord}{\numchs}.
    We work with HOPDA producing $\numchs$ branches because, after each reduction step, we will need to output one branch for each character in $\och_1, \ldots, \och_\numchs$.
\item
    We have an \bhopda{\pdaord}{\numchs} $\pda$ that outputs a tree with at most $\numchs$ branches.
    In Figure~\ref{fig:reduction-sim} we show part of a run tree with $2$ branches.
    The $\push{\pdaord}$ and $\pop{\pdaord}$ operations are shown on the edges of the tree.
    Nodes are numbered to help identify them during the different transformations.

\item
    We ``decompose'' this tree into another tree where the branches appearing after the $\pop{\pdaord}$ operations are hung from the same parent as their matching $\push{\pdaord}$.
    This is shown in the middle of Figure~\ref{fig:reduction-sim}.
    Notice that this tree has an unbounded number of branches (it branches at each $\push{\pdaord}$).
    However, we know that the maximum out-degree of any of its nodes is $(\numchs + 1)$ since 
the source of a  $\push{\pdaord}$-labelled edge has one child, and we add at most $\numchs$ extra children corresponding to the $\pop{\pdaord}$ on each of its at most $\numchs$ branches.

\item
    We prove a generalisation of \reflemma{lem:tree-scores} that shows a run tree with at least $\numof$ instances of a character $\och$ has a branch with a score of at least
    $\ap{\log_{(\numchs+1)}}{\numof}$.
    Thus, we need to select one branch for each $\och$ we wish to output.

\item
    We build an \bhopda{\pdaord-1}{\numchs} $\simulator{\pda}$ that non-deterministically picks out the highest scoring branches for each $\och$.  This is shown on the right of Figure~\ref{fig:reduction-sim}.
\end{itemize}

\begin{figure*}
\easyhevea{}{\begin{figure*}}
    \centering
    \psset{nodesep=1ex}
    \acmeasychair{
        \psset{rowsep=3ex}
    }{}
    \subfloat[][An $\numchs$-branch run tree of $\pda$]{
        \begin{psmatrix}
                                     & \rnode{N}{1}  &                         \\
                                     & \rnode{N0}{2} &                         \\
            \rnode{N00}{3}           &               & \rnode{N01}{5}          \\
            \rnode{N000}{4}          &               & \rnode{N010}{6}         \\
            \rnode{N0000}{$\vdots$}  &               & \rnode{N0100}{$\vdots$}

            \ncline{N}{N0}\naput{$\push{\pdaord}$}
            \ncline{N0}{N00}
            \ncline{N0}{N01}
            \ncline{N00}{N000}\nbput{$\pop{\pdaord}$}
            \ncline{N01}{N010}\naput{$\pop{\pdaord}$}
            \ncline{N000}{N0000}
            \ncline{N010}{N0100}
        \end{psmatrix}
    }
    \acmeasychair{
        \qquad\qquad
    }{
        \qquad
    }
    \subfloat[][\label{subfig:decomp}The decomposition of the run tree]{
        \begin{psmatrix}
                           &                & \rnode{N}{1}          &               \\
                           & \rnode{N0}{2}  & \rnode{N1}{4}         & \rnode{N2}{6} \\
            \rnode{N00}{3} & \rnode{N01}{5} & \rnode{N10}{$\vdots$} & \rnode{N20}{$\vdots$} \\
            \\

            \ncline{N}{N0}
            \ncline{N}{N1}
            \ncline{N}{N2}
            \ncline{N0}{N00}
            \ncline{N0}{N01}
            \ncline{N1}{N10}
            \ncline{N2}{N20}
        \end{psmatrix}
    }
    \acmeasychair{
        \qquad\qquad
    }{
        \qquad
    }
    \subfloat[][An $\numchs$-branch subtree of (\ref{subfig:decomp})]{
        \begin{psmatrix}
                           & \rnode{N}{1}         \\
            \rnode{N0}{2}  & \rnode{N1}{4}         \\
            \rnode{N01}{5} & \rnode{N10}{$\vdots$} \\
            \\

            \ncline{N}{N0}
            \ncline{N}{N1}
            \ncline{N0}{N01}
            \ncline{N1}{N10}
        \end{psmatrix}
    }
\easyhevea{}{\end{figure*}}
    \caption{\label{fig:reduction-sim}Illustrating the reduction steps.}
\end{figure*}

\subsection{Branching HOPDA with Regular Tests}

As before, we instrument our HOPDA with tests.
Removing these tests requires a simple adaptation of Broadbent\etal~\cite{Broadbent:2010}.

\begin{definition}[\bhopda{\pdaord}{\numchs} with Tests]
    Given a sequence of automata
    $\auta_1, \ldots, \auta_\numof$,
    an \emph{\bhopda{\pdaord}{\numchs} with tests} is given by a tuple
    $\pda = \tup{\controls,
                 \oalphabet,
                 \salphabet,
                 \rules,
                 \finals,
                 \controlinit,
                 \chainit,
                 \csrank}$
    where $\controls$, $\oalphabet$, $\salphabet$, $\finals$, $\controlinit$, $\chainit$ are as in HOPDA.
    The set of rules
    $\rules \subseteq \bigcup\limits_{1 \leq \numof \leq \numchs}
                      \controls \times
                      \salphabet \times
                      \set{\auta_1, \ldots, \auta_\numof} \times
                      \oalphabet \times
                      \ops{\pdaord} \times
                      \controls^\numof$
    together with a mapping
    $\csrank : \controls \rightarrow \set{1, \ldots, \numchs}$
    such that for all
    $\tup{\control,
          \cha,
          \auta,
          \ochb,
          \op,
          \control_1, \ldots, \control_\numof} \in \rules$
    we have
    $\ap{\csrank}{\control} \geq
     \ap{\csrank}{\control_1} +
     \cdots +
     \ap{\csrank}{\control_\numof}$.
\end{definition}

We use the notation
$\pdrulet{\control}
         {\cha}
         {\auta}
         {\ochb}
         {\op}
         {\control_1, \ldots, \control_\numof}$
to denote a rule
$\tup{\control,
      \cha,
      \auta,
      \ochb,
      \op,
      \control_1, \ldots, \control_\numof} \in \rules$.

From the initial configuration
$\configc_0 = \config{\controlinit}{\stackinit{\pdaord}{\chainit}}$
a run of an \bhopda{\pdaord}{\numchs} with tests is a tree
$\tree = \tup{\treedom, \treelabelling}$
and generates an output tree
$\pdrun = \tup{\treedom, \treelabelling'}$
where
\begin{itemize}
\item
    $\ap{\treelabelling}{\rootnode} = \configc_0$, and

\item
    for a node $\node$ with children
    $\node_1, \ldots, \node_\numof$
    and
    $\ap{\treelabelling}{\node} = \config{\control}{\stackw}$
    there is a rule
    $\pdrulet{\control}
             {\cha}
             {\auta}
             {\ochb}
             {\op}
             {\control_1, \ldots, \control_\numof}$
    such that
    $\stackw \in \ap{\lang}{\auta}$
    and for all
    $1 \leq \idxi \leq \numof$
    we have
    $\ap{\treelabelling}{\node_\idxi} = \config{\control_\idxi}{\stackw'}$
    where
    $\ap{\topop{1}}{\stackw} = \cha$,
    and
    $\stackw' = \ap{\op}{\stackw}$.
    Moreover we have
    $\ap{\treelabelling'}{\node} = \ochb$.

\item
    For all leaf nodes $\node$ we have
    $\ap{\treelabelling'}{\node} = \ech$.
\end{itemize}
The run is accepting if for all leaf nodes $\node$ we have
$\ap{\treelabelling}{\node} = \config{\control}{\kstack{\pdaord}{}}$
and
$\control \in \finals$.
Let $\ap{\lang}{\pda}$ be the set of output trees of $\pda$.

\begin{namedtheorem}{thm:no-tests-sim}{Removing Tests}
    \cite[Theorem 3 (adapted)]{Broadbent:2010}
    For every \bhopda{\pdaord}{\numchs} with tests $\pda$, we can compute an \bhopda{\pdaord}{\numchs} $\pda'$ with
    $\ap{\lang}{\pda} = \ap{\lang}{\pda'}$.
\end{namedtheorem}
\acmeasychair{
    The adapted proof of the above theorem is given in the full version~\cite{Hague:2015}.
}{
    \begin{proof}
        This is a straightforward adaptation of Broadbent\etal~\cite{Broadbent:2010}.
Let the \bhopda{\pdaord}{\numchs} with tests be
$\pda = \tup{\controls,
             \oalphabet,
             \salphabet,
             \rules,
             \finals,
             \controlinit,
             \chainit,
             \csrank}$
with test automata
$\auta_1, \ldots, \auta_\numof$.
We build an \bhopda{\pdaord}{\numchs} that mimics $\pda$ almost directly.
The only difference is that each character $\cha$ appearing in the stack is replaced by
\[
    \funcha{\cha}{\autfuns{1}, \ldots, \autfuns{\numof}} \ .
\]
For each test $\auta$ we have a vector of functions
\[
    \autfuns = \tup{\autfun{1}, \ldots, \autfun{\pdaord}} \ .
\]
The function
$\autfun{\opord} : \sastates \rightarrow \sastates$
intuitively describes runs of $\auta$ from the bottom of
$\ap{\topop{\opord+1}}{\stackw}$
to the top of
$\ap{\pop{\opord}}{\ap{\topop{\opord+1}}{\stackw}}$.
Thus, we can reconstruct an entire run over
$\ap{\pop{1}}{\stackw}$
from initial state $\sastate$ as
\[
    \sastate' = \ap{\autfun{1}}{\cdots\ap{\autfun{\pdaord}}{\sastate}}
\]
and then we can consult $\sadelta$ to complete the run by adding the effect of reading
$\ap{\topop{1}}{\stackw}$.

Thus, let
$\auta_\idxi = \tup{\sastates_\idxi,
                    \saalphabet,
                    \sastate^\idxi_{\text{in}},
                    \sadelta_\idxi,
                    \safinals^\idxi}$.
We define
\[
    \notest{\pda} =
    \tup{\controls,
         \oalphabet,
         \notest{\salphabet},
         \notest{\rules},
         \finals,
         \controlinit,
         \notest{\chainit},
         \csrank}
\]
where
\[
    \notest{\salphabet} =
    \setcomp{
        \funcha{\cha}{\autfuns{1}, \ldots, \autfuns{\numof}}
    }{
        \cha \in \salphabet
        \land
        \forall \idxi .
            \autfuns{\idxi} \in
            \brac{\sastates_\idxi
                  \rightarrow
                  \sastates_\idxi}^\pdaord
    }
\]
and $\notest{\rules}$ is the smallest set of rules of the form
\[
    \pdrule{\control}
           {\notest{\cha}}
           {\ochb}
           {\ap{\optestsim}{\op, \notest{\cha}}}
           {\control_1, \ldots, \control_\altnumof}
\]
where
$\notest{\cha} = \funcha{\cha}
                        {\autfuns{1},
                         \ldots,
                         \autfuns{\numof}}$
and
$\pdrulet{\control}
         {\cha}
         {\auta_\idxi}
         {\ochb}
         {\op}
         {\control_1, \ldots, \control_\altnumof} \in \rules$
and
$\ap{\testaccepts}{\cha,
                   \autfuns{\idxi},
                   \sadelta_\idxi,
                   \sastate^\idxi_{\text{in}},
                   \safinals^\idxi}$
and we define
\[
    \begin{array}{c}
        \ap{\testaccepts}{\cha,
                          \autfun{1}, \ldots, \autfun{\pdaord},
                          \sadelta,
                          \sastate_{\text{in}},
                          \safinals}
        \\
        \iff
        \\
        \sastate =
        \ap{\autfun{1}}{
            \cdots
            \ap{\autfun{\pdaord}}{
                \sastate_{\text{in}}
            }
        }
        \land
        \ap{\sadelta}{\sastate, \sopen{\pdaord} \cdots \sopen{1} \cha} \in \safinals
    \end{array}
\]
where
$\ap{\sadelta}{\sastate, \sopen{\pdaord} \cdots \sopen{1} \cha}$
is shorthand for the repeated application of $\sadelta$ on $\cha$ then $\sopen{1}$, back to $\sopen{\pdaord}$,
and we define
$\ap{\optestsim}{\op, \notest{\cha}} = \notest{\op}$
following the cases below.
Let
$\notest{\cha} = \funcha{\cha}{\autfuns{1}, \ldots, \autfuns{\numof}}$.
\begin{itemize}
\item
    When
    $\op = \rew{\chb}$
    then
    $\notest{\op} = \funcha{\chb}{\autfuns{1}, \ldots, \autfuns{\numof}}$.

\item
    When
    $\op = \push{\opord}$
    then
    $\notest{\op} =
     \push{\op};
     \rew{\funcha{\cha}{\autfuns{1}', \ldots, \autfuns{\numof}'}}$
    where for all $\idxi$ we have
    \[
        \autfuns{\idxi} =
        \tup{\autfun{1}, \ldots, \autfun{\opord-1},
             \autfun{\opord}',
             \autfun{\opord+1},
             \ldots
             \autfun{\pdaord}}
    \]
    and
    \[
        \ap{\autfun{\opord}'}{\sastate} =
        \ap{\autfun{\opord}}{
            \ap{\sadelta_\idxi}{
                \ap{\autfun{1}}{
                    \cdots
                    \ap{\autfun{\opord}}{
                        \sastate
                    }
                },
                \sclose{\opord-1}
                \sopen{\opord-1} \cdots \sopen{1}
                \cha
            }
        } .
    \]
    I.e., we apply the functions to read the whole stack once, and then the correct part of the copy created by the $\push{\opord}$.

\item
    When
    $\op = \pop{\opord}$
    then
    \[
        \notest{\op} =
        \pop{\op};
        \toptest{\funcha{\chb}{\autfuns{1}', \ldots, \autfuns{\numof}'}};
        \rew{\funcha{\chb}{\autfuns{1}'', \ldots, \autfuns{\numof}''}}
    \]
    where for all $\idxi$ we have
    $\autfuns{\idxi} = \tup{\autfun{1}, \ldots, \autfun{\pdaord}}$
    and
    $\autfuns{\idxi}' = \tup{\autfun{1}', \ldots, \autfun{\pdaord}'}$
    and
    \[
        \autfuns{\idxi}'' =
        \tup{\autfun{1}', \ldots, \autfun{\opord-1}',
             \autfun{\opord},
             \ldots
             \autfun{\pdaord}} \ .
    \]
    We can see that this is correct since we do not update the functions that read parts of the stack unchanged (i.e., stacks outside of those changed by the $\pop{\opord}$), and we take the functions that are correct for the newly exposed top parts of the stack for the remaining functions.
\end{itemize}
Finally, we set
$\notest{\chainit} =
 \funcha{\chainit}{\autfuns{1}, \ldots, \autfuns{\numof}}$
where for each $\idxi$ we have
$\autfuns{\idxi} = \tup{\autfun{1}, \ldots, \autfun{\pdaord}}$
such that for each $\opord$ we have
$\ap{\autfun{\opord}}{\sastate} =
 \ap{\sadelta}{\sastate,
               \sclose{\opord} \cdots \sclose{\pdaord}}$.

    \end{proof}
}

\subsection{Building The Automata}

Previously we built automata
$\canpopaut{\control_1}{\control_2}$
to indicate that from $\control_1$, the current top stack could be removed, arriving at $\control_2$.
This is fine for words, however, we now have $\numchs$-branch trees.
It is no longer enough to specify a single control state:
the top stack may be popped once on each branch of the tree, hence for a control state $\control$ we need
\changed[mh]{
    to recognise configurations with control state $\control$ from which there is a run tree where the leaves of the trees are labelled with configurations with control states
    $\control_1, \ldots, \control_\numof$
    and empty stacks.
    Moreover we need to recognise the set $\outputs$ of characters output by the run tree.
    More precisely, %
}
for these automata we write
\[
    \canpopautbr{\control}{\control_1,\ldots,\control_\numof}{\outputs}
\]
where
$\ap{\csrank}{\control} \geq
 \ap{\csrank}{\control_1} +
 \cdots +
 \ap{\csrank}{\control_\numof}$
and
$\outputs \subseteq \set{\och_1, \ldots, \och_\numchs}$.
We have
$\stackw \in
 \ap{\lang}{\canpopautbr{\control}
                        {\control_1,\ldots,\control_\numof}
                        {\outputs}}$
iff there is a run tree $\tree$ with the root labelled
$\config{\control}{\kstack{\pdaord}{\stackw}}$
and $\numof$ leaf nodes labelled
$\config{\control_1}{\kstack{\pdaord}{}},
 \ldots,
 \config{\control_\numof}{\kstack{\pdaord}{}}$
respectively.
Moreover, we have $\och \in \outputs$ iff the corresponding output tree $\tree'$ has
$\chcount{\och}{\tree'} > 0$.

\subsubsection{Alternating HOPDA}

To construct the required stack automata, we need to do reachability analysis of \bhopda{\pdaord}{\numchs}.
We show that such analyses can be rephrased in terms of alternating higher-order pushdown systems (HOPDS),
for which the required algorithms are already known~\cite{Broadbent:2012}.
Note, we refer to these machines as ``systems'' rather than ``automata'' because they do not output a language.

\begin{definition}[Alternating HOPDS]
    An \emph{alternating order-$\pdaord$ pushdown system} is a tuple
    $\pda = \tup{\controls, \salphabet, \rules}$
    where
        $\controls$ is a finite set of control states,
        $\salphabet$ is a finite stack alphabet, and
        \[
            \rules \subseteq
            \brac{\controls \times
                  \salphabet \times
                  \ops{\pdaord} \times
                  \controls}
            \cup
            \brac{\controls \times
                  \salphabet \times
                  2^{\controls}}
        \]
        is a set of transition rules.
\end{definition}

We write
$\apdrule{\control}{\cha}{\op}{\control}$
to denote
$\tup{\control, \cha, \op, \control} \in \rules$
and
$\altrule{\control}{\cha}{\control_1, \ldots, \control_\numof}$
to denote
$\tup{\control, \cha, \set{\control_1, \ldots, \control_\numof}} \in \rules$.

An run of an alternating HOPDS may split into several configurations, each of which must reach a target state.
Hence, the branching of the alternating HOPDS mimics the branching of the \bhopda{\pdaord}{\numchs}.
Given a set $\configs$ of configurations, we define
$\prestar{\pda}{\configs}$
to be the smallest set $\configs'$ such that
\[
    \begin{array}{rcl} %
        \configs' %
        &=& %
        \configs %
        \ \cup \\ %
        && %
        \setcomp{ %
            \config{\control}{\stackw} %
        }{ %
            \begin{array}{c} %
                \apdrule{\control}{\cha}{\op}{\control'} \in \rules %
                \ \land \\ %
                \ap{\topop{1}}{\stackw} = \cha %
                \ \land \\ %
                \config{\control'}{\ap{\op}{\stackw}} \in \configs' %
            \end{array} %
        } %
        \ \cup \\ %
        && %
        \setcomp{ %
            \config{\control}{\stackw} %
        }{ %
            \begin{array}{c} %
                \altrule{\control} %
                        {\cha} %
                        {\control_1, \ldots, \control_\numof} \in \rules %
                \ \land \\ %
                \ap{\topop{1}}{\stackw} = \cha %
                \ \land \\ %
                \forall \idxi . %
                    \config{\control_\idxi}{\stackw} \in \configs' %
            \end{array} %
        } \ . %
    \end{array} %
\]

\subsubsection{Constructing the Tests}

In order to use standard results to obtain
$\canpopautbr{\control}{\control_1,\ldots,\control_\numof}{\outputs}$
we construct an alternating HOPDS $\alternating{\pda}$ and automaton $\auta$ such that
\changed[mh]{
    checking
    $\configc \in \prestar{\alternating{\pda}}{\auta}$
    for a suitably constructed $\configc$ allows us to check whether
    $\stackw \in
     \ap{\lang}{\canpopautbr{\control}
                            {\control_1,\ldots,\control_\numof}
                            {\outputs}}$.
}

The alternating HOPDS $\alternating{\pda}$ will mimic the branching of $\pda$ with alternating transitions\footnote{
    We slightly alter the alternation rule from ICALP 2012~\cite{Broadbent:2012} by matching the top stack character as well as the control state.
    This is a benign alteration since it one can track the top of stack character in the control state.
}
$\altrule{\control}{\cha}{\control_1, \ldots, \control_\numof}$
of $\alternating{\pda}$.
It will maintain in its control states information about which characters have been output, as well as which control states should appear on the leaves of the branches.
This final piece of information prevents all copies of the alternating HOPDS from verifying the same branch of $\pda$.

\begin{definition}[$\alternating{\pda}$]
    Given an \bhopda{\pdaord}{\numchs}
    $\pda$
    described by the tuple
    $\tup{\controls,
          \oalphabet,
          \salphabet,
          \rules,
          \finals,
          \controlinit,
          \chainit}$,
    of $\pda$, we define
    \[
        \alternating{\pda} =
        \tup{\alternating{\controls},
             \salphabet,
             \alternating{\rules}}
    \]
    where
    \[
        \alternating{\controls} =
        \setcomp{\tup{\control,
                      \outputs,
                      \control_1, \ldots, \control_\numof}}
                {\begin{array}{c} %
                     1 \leq \numof \leq \numchs\ \land \\ %
                     \outputs \subseteq \set{\och_1, \ldots, \och_\numchs}\ \land \\ %
                     \control_1, \ldots, \control_\numof \in \controls %
                 \end{array}} %
    \] %
    and $\alternating{\rules}$ is the set of rules containing, for each
    \[
        \pdrule{\control}
               {\cha}
               {\ochb}
               {\op}
               {\control'} \in \rules
    \]
    all rules
    \[
        \apdrule{\tup{\control,
                      \outputs,
                      \control_1, \ldots, \control_{\idxi}}}
                {\cha}
                {\op}
                {\tup{\control_1,
                      \outputs \setminus \set{\ochb},
                      \control_1, \ldots, \control_\idxi}}
    \]
    and for each
    \[
        \pdrule{\control}
               {\cha}
               {\ech}
               {\rew{\cha}}
               {\control_1, \ldots, \control_\numof} \in \rules
    \]
    with $\numof > 1$ all alternating rules
    \acmeasychair{
        \[
            \altrule{\tup{\control,
                          \outputs,
                          \control'_1, \ldots, \control'_{\idxi}}}
                    {\cha}
                    {\begin{array}{c} %
                        \tup{\control_1, %
                             \outputs_1, %
                             \control^1_1, \ldots, \control^1_{\idxi_1}}, \\ %
                        \ldots \\ %
                        \tup{\control_\numof, %
                             \outputs_\numof, %
                             \control^\numof_1, \ldots, \control^\numof_{\idxi_\numof}} %
                    \end{array}} %
        \]
    }{
        \[
            \altrule{\tup{\control,
                          \outputs,
                          \control'_1, \ldots, \control'_{\idxi}}}
                    {\cha}
                    {\tup{\control_1,
                          \outputs_1,
                          \control^1_1, \ldots, \control^1_{\idxi_1}},
                     \ldots
                     \tup{\control_\numof,
                          \outputs_\numof,
                          \control^\numof_1, \ldots, \control^\numof_{\idxi_\numof}}}
        \]
    }
    where
    $\control'_1, \ldots, \control'_\idxi$
    is a permutation of
    $\control^1_1, \ldots, \control^1_{\idxi_1},
     \ldots
     \control^\numof_1, \ldots, \control^\numof_{\idxi_\numof}$
    and
    $\outputs = \outputs_1 \cup \cdots \cup \outputs_\numof$.
\end{definition}

In the above definition, the permutation condition ensures that the
\changed[mh]{
    target control states %
}%
are properly distributed amongst the newly created branches.

\begin{lemma}
\label{lem:alt-aut-pre}
    We have
    $\stackw \in
     \ap{\lang}{\canpopautbr{\control}
                            {\control_1,\ldots,\control_\numof}
                            {\outputs}}$
    iff
    \[
        \config{\tup{\control,
                     \outputs,
                     \control_1, \ldots, \control_\numof}}
               {\kstack{\pdaord}{\stackw}} \in
        \prestar{\alternating{\pda}}{\auta}
    \]
    where $\auta$ is such that
    \[
        \ap{\lang}{\auta} =
        \setcomp{
            \config{\tup{\control, \emptyset, \control}}
                   {\kstack{\pdaord}{}}
        }{
            \control \in \set{\control_1, \ldots, \control_\numof}
        } \ .
    \]
\end{lemma}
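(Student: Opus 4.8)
The plan is to prove the two directions of the biconditional together by establishing a tight correspondence between run trees of the \bhopda{\pdaord}{\numchs} $\pda$ and alternating reachability derivations of $\alternating{\pda}$. The crucial observation that makes this manageable is that $\alternating{\pda}$ carries out \emph{exactly the same} stack operations $\op$ as $\pda$ (each $\pda$-rule contributes rules of $\alternating{\pda}$ with the identical operation), so along any alternating run the underlying order-$\pdaord$ stack evolves identically to the corresponding run of $\pda$; in particular every top-of-stack guard $\ap{\topop{1}}{\stackw} = \cha$ holds in one machine iff it holds in the other. Consequently the extra control-state components --- the output set $\outputs$ and the list of target leaf states $\control_1, \ldots, \control_\numof$ --- are pure bookkeeping layered on a shared stack computation, and the whole argument reduces to checking that this bookkeeping is maintained.

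For the forward direction ($\Rightarrow$) I would induct on the size of a run tree $\tree$ witnessing membership in $\ap{\lang}{\canpopautbr{\control}{\control_1,\ldots,\control_\numof}{\outputs}}$, that is, a run tree rooted at $\config{\control}{\kstack{\pdaord}{\stackw}}$ all of whose leaves are empty-stack configurations $\config{\control_1}{\kstack{\pdaord}{}}, \ldots, \config{\control_\numof}{\kstack{\pdaord}{}}$. There are three cases. If the root applies a unary non-$\pop{\pdaord}$ rule $\pdrule{\control}{\cha}{\ochb}{\op}{\control'}$, the induction hypothesis applies to its single child subtree and I extend the witness using the matching non-alternating rule of $\alternating{\pda}$, which passes the leaf-state list through unchanged and replaces $\outputs$ by $\outputs \setminus \set{\ochb}$. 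The base case is the unary $\pop{\pdaord}$ rule that empties the stack, producing a leaf $\config{\control'}{\kstack{\pdaord}{}}$: the corresponding rule of $\alternating{\pda}$ lands in $\config{\tup{\control', \outputs, \control'}}{\kstack{\pdaord}{}}$, which lies in $\auta$ precisely when $\outputs = \emptyset$, forcing the required output set to have been exhausted along the branch. Finally, if the root applies a branching rule (with $\op = \rew{\cha}$, $\ochb = \ech$ and $\numof > 1$), I apply the induction hypothesis to each child subtree and combine the resulting witnesses with the alternating rule of $\alternating{\pda}$, taking $\outputs = \outputs_1 \cup \cdots \cup \outputs_\numof$ and letting the permutation condition split the parent's leaf-state list into the children's lists.

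The reverse direction ($\Leftarrow$) is the symmetric statement, which I would prove by induction on the stage at which $\config{\tup{\control, \outputs, \control_1, \ldots, \control_\numof}}{\kstack{\pdaord}{\stackw}}$ enters the least fixed point defining $\prestar{\alternating{\pda}}{\auta}$. Each of the three ways a configuration can be added to $\prestar{\alternating{\pda}}{\auta}$ --- membership in $\auta$, a non-alternating rule, or an alternating rule --- is exactly the inverse of one of the three cases above, so I read off, respectively, a single leaf, a unary node, or a branching node of a run tree of $\pda$, invoking the induction hypothesis on the sub-configurations. Since the stack operations coincide, the tree so constructed is a genuine run of $\pda$ with the prescribed root and leaves.

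The step I expect to be the main obstacle is pinning down the invariant governing the output component $\outputs$ and matching it to the clause ``$\och \in \outputs$ iff $\chcount{\och}{\tree'} > 0$'' in the definition of $\canpopautbr{\control}{\control_1,\ldots,\control_\numof}{\outputs}$. The subtlety is that the non-alternating rule shrinks $\outputs$ by set-\emph{subtraction} whereas the alternating rule recombines it by \emph{union}, so a character emitted more than once, or demanded on one branch but produced on another, must be tracked with care; the cleanest invariant is that $\outputs$ is the set of characters each of which the chosen derivation commits to emitting at least once in the subtree below, with the leaf constraint $\outputs = \emptyset$ forcing every demanded character to be discharged. I would therefore take each $\outputs_\idxi$ at a branching node to be the output set of the $\idxi$-th child subtree, verify that union and subtraction realise exactly this reading, and separately check that the permutation condition on the leaf-state lists guarantees the $\numof$ empty-stack leaves are apportioned among the branches so that no branch is charged with verifying a leaf belonging to another.
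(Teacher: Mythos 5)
Your proposal is correct and follows essentially the same route as the paper: the paper's proof also walks the run tree of $\pda$ top-down (maintaining a frontier of configurations) and pairs each unary, pop, and branching rule with the corresponding rule of $\alternating{\pda}$, distributing the leaf-state obligations and the output sets $\outputs_\idxi$ according to which subtree contains which leaves and outputs, with the reverse direction obtained by mirroring the construction. Your explicit discussion of the invariant on $\outputs$ (discharged by subtraction, recombined by union, forced empty at the leaves by the definition of $\auta$) is in fact slightly more careful than the paper's one-line treatment of that point.
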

\acmeasychair{
    The proof of the above lemma is given in the full version~\cite{Hague:2015}.
}{
    \begin{proof}
        First take
$\stackw \in
 \ap{\lang}{\canpopautbr{\control}
                        {\control_1,\ldots,\control_\numof}
                        {\outputs}}$
and the run tree witnessing this membership.
We can move down the tree, maintaining a frontier
$\configc_1, \ldots, \configc_\altnumof$
and building a tree witnessing that
$\config{\tup{\control,
              \outputs,
              \control_1, \ldots, \control_\numof}}
        {\kstack{\pdaord}{\stackw}} \in
 \prestar{\alternating{\pda}}{\auta}$.
Initially we have the frontier
$\config{\control}{\kstack{\pdaord}{\stackw}}$
and the initial configuration
$\config{\tup{\control, \outputs, \control_1, \ldots, \control_\numof}}
        {\kstack{\pdaord}{\stackw}}$.

Hence, take a configuration
$\configc = \config{\control'}{\stackw'}$
from the frontier and corresponding configuration
$\configc' =
 \config{\tup{\control', \outputs', \control'_1, \ldots, \control'_\idxi}}
        {\stackw'}$.
If the rule applied to $\configc$ is not a branching rule, we simply take the matching rule of $\alternating{\pda}$ and apply it to $\configc'$.
Note, that if the rule output $\ochb$ we remove $\ochb$ from $\outputs'$.  Hence, $\outputs'$ contains only characters that have not been output on the path from the initial configuration.

If the rule applied is branching, that is
$\pdrule{\control'}
        {\cha}
        {\ech}
        {\rew{\cha}}
        {\control''_1, \ldots, \control''_\idxj}$
then we apply the rule
\[
    \altrule{\tup{\control',
                  \outputs,
                  \control'_1, \ldots, \control'_{\idxi}}}
            {\cha}
            {\tup{\control''_1,
                  \outputs_1,
                  \control^1_1, \ldots, \control^1_{\idxi_1}},
             \ldots
             \tup{\control''_\idxj,
                  \outputs_\idxj,
                  \control^\idxj_1, \ldots, \control^\idxj_{\idxi_\idxj}}}
\]
where
$\control'_1, \ldots, \control'_\idxi$
is a permutation of
$\control^1_1, \ldots, \control^1_{\idxi_1},
 \ldots
 \control^\idxj_1, \ldots, \control^\numof_{\idxi_\idxj}$
and
$\outputs = \outputs_1 \cup \cdots \cup \outputs_\numof$.
These partitions are made in accordance with the distribution of the leaves and outputs of the run tree of $\pda$.  I.e. if a control state $\control''$ appears on the $\idxi'$th subtree, then it should appear in the $\idxi'$th target state of $\alternating{\pda}$.  Similarly, if the $\idxi'$th subtree outputs an $\ochb \in \outputs$, then $\ochb$ should be placed in $\outputs_{\idxi'}$.
Applying this alternating transition creates a matching configuration for each new branch in the frontier.

We continue in this way until we reach the leaf nodes of the frontier.
Each leaf
$\config{\control'}{\stackw}$
has a matching
$\config{\tup{\control', \emptyset, \control'}}{\stackw}$
and hence is in
$\ap{\lang}{\auta}$.
Thus, we have witnessed
$\config{\tup{\control,
              \outputs,
              \control_1, \ldots, \control_\numof}}
        {\kstack{\pdaord}{\stackw}} \in
 \prestar{\alternating{\pda}}{\auta}$
as required.

To prove the other direction, we mirror the previous argument, showing that the witnessing tree for $\alternating{\pda}$ can be used to build a run tree of $\pda$.

    \end{proof}
}

\changed[mh]{
    It is known that
    $\prestar{\pda}{\auta}$
    is computable for alternating HOPDS.

    \begin{theorem}
        \cite[Theorem~1 (specialised)]{Broadbent:2012}
        Given an alternating HOPDS $\pda$ and a top-down automaton $\auta$, we can construct an
        automaton $\auta'$ accepting
        $\prestar{\pda}{\auta}$.
    \end{theorem}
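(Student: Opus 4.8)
The plan is to obtain this as a direct specialisation of the saturation method of Broadbent\etal~\cite{Broadbent:2012}, whose Theorem~1 establishes exactly the computability of $\prestar{\pda}{\auta}$ for \emph{alternating collapsible} higher-order pushdown systems. Alternating HOPDS are precisely the special case in which no stack symbol carries a collapse link and the collapse operation is never used, so the algorithm applies \emph{a fortiori}; indeed it simplifies, since there are no links to propagate during the backward analysis. Thus the strategy is not to reprove the saturation algorithm but to invoke it and check that our setting satisfies its hypotheses.

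Concretely, I would recall that the method represents sets of configurations by stack automata and computes $\prestar{\pda}{\auta}$ by \emph{saturation}: starting from the automaton $\auta$ accepting the target set, one repeatedly adds transitions reflecting the backward application of each rule until a fixed point is reached. An ordinary rule $\apdrule{\control}{\cha}{\op}{\control'}$ contributes transitions encoding that a stack $\stackw$ with $\ap{\topop{1}}{\stackw} = \cha$ is accepted from $\control$ whenever $\ap{\op}{\stackw}$ is accepted from $\control'$, the higher-order operations $\push{\opord}$ and $\pop{\opord}$ being treated order-by-order using the bracketed word encoding of stacks. An alternating rule $\altrule{\control}{\cha}{\control_1, \ldots, \control_\numof}$ contributes the configuration $\config{\control}{\stackw}$ whenever every $\config{\control_\idxi}{\stackw}$ is already accepted; this \emph{conjunctive} step is exactly what the alternation in the automaton model of~\cite{Broadbent:2012} is designed to capture.

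Two mismatches with the cited statement then need checking, both routine. First, we use HOPDS in place of collapsible systems: this is a pure restriction and preserves every hypothesis of the saturation construction. Second, as flagged in the footnote, our alternating rules additionally match the $\topop{1}$ character $\cha$; this is eliminated by recording the top-of-stack character in the control state, after which the rules fall under the format handled by~\cite{Broadbent:2012}. The genuine technical content — correctness of the saturation and, above all, its termination at higher orders despite the copying performed by $\push{\opord}$ — is precisely the heart of~\cite{Broadbent:2012} and is the step I would expect to be the real obstacle were it not already established; so I would not reprove it, and the proposal reduces to invoking that theorem and verifying that the restriction and the benign rule modification leave its hypotheses intact.
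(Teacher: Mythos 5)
Your proposal matches the paper exactly: the paper offers no independent proof of this statement, but simply cites Theorem~1 of Broadbent~\emph{et al.}~(ICALP 2012) specialised to the collapse-free case, with a footnote noting that matching the top-of-stack character in alternating rules is benign because it can be tracked in the control state. Your two checks (restriction to HOPDS preserves the hypotheses; the extra $\topop{1}$ test is absorbed into the control state) are precisely the justifications the paper relies on.
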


}

Hence, we can now build
$\canpopautbr{\control}
             {\control_1,\ldots,\control_\numof}
             {\outputs}$
from the control state $\control$ and top-down automaton representation of
$\prestar{\alternating{\pda}}{\auta}$
since we can effectively translate from top-down to bottom-up stack automata.

\subsection{Reduction to Lower Orders}

We generalise our reduction to \bhopda{\pdaord}{\numchs}.
Let $\auttrue$ be the automata accepting all configurations.
Note, in the following definition we allow all transitions (including branching) to be labelled by sets of output characters.
To maintain our assumed normal form we have to replace these transitions using intermediate control states to ensure all branching transitions are labelled by $\ech$ and all transitions labelled $\outputs$ are replaced by a sequence of transitions outputting a single instance of each character in $\outputs$.

The construction follows the intuition of the single character case, but with a lot more bookkeeping.
\changed[mh]{
    Given an \bhopda{\pdaord}{\numchs} $\pda$ we define an \bhopda{\pdaord-1}{\numchs} with tests $\simulator{\pda}$ such that $\pda$ satisfies the diagonal problem iff $\simulator{\pda}$ also satisfies the diagonal problem.
    The main control states of $\simulator{\pda}$ take the form
    \[
        \simcontrolbr{\control}
                     {\control_1, \ldots, \control_\numof}
                     {\outputs}
                     {\branchchs}
    \]
    where $\control, \control_1, \ldots, \control_\numof$ are control states of $\pda$ and both $\outputs$ and $\branchchs$ are sets of output characters.
    We explain the purpose of each of these components.

    We will define $\simulator{\pda}$ to generate up to $\numof$ branches of the tree decomposition of a run of $\pda$.
    In particular, for each of the characters
    $\och \in \set{\och_1, \ldots, \och_\numchs}$
    there will be a branch of the run of $\simulator{\pda}$ responsible for outputting ``enough'' of the character $\och$ to satisfy the diagonal problem.
    Note that two characters $\och$ and $\och'$ may share the same branch.
    When a control state of the above form appears on a node of the run tree, the final component $\branchchs$ makes explicit which characters the subtree rooted at that node is responsible for generating in large numbers.
    Thus, the initial control state will have
    $\branchchs = \set{\och_1, \ldots, \och_\numchs}$
    since all characters must be generated from this node.
    However, when the output tree branches -- i.e. a node has more than one child -- the contents of $\branchchs$ will be partitioned amongst the children.
    That is, the responsibility of the parent to output enough of the characters in $\branchchs$ is divided amongst its children.

    The remaining components play the role of a test
    $\canpopautbr{\control}
                 {\control_1,\ldots,\control_\numof}
                 {\outputs}$.
    That is, the current node is simulating the control state $\control$ of $\pda$, and is required to produce $\numof$ branches, where the stack is emptied on each leaf and the control states appearing on these leaves are
    $\control_1, \ldots, \control_\numof$.
    Moreover, the tree should output at least one of each character in $\outputs$.

    Note, $\simulator{\pda}$ also has (external) tests of the form
    $\canpopautbr{\control}
                {\control_1,\ldots,\control_\numof}
                {\outputs}$
    that it can use to make decisions, just like in the single character case.
    However, it also performs tests ``online'' in its control states.
    This is necessary because the tests were used to check what could have happened on branches not followed by $\simulator{\pda}$.
    In the single character case, there was only one branch, hence $\simulator{\pda}$ would uses tests to check all the branches not followed, and then continue down a single branch of the tree.
    In the multi-character case the situation is different.
    Suppose a subtree rooted at a given node was responsible for outputting enough of both $\och_1$ and $\och_2$.
    Amongst the possible children of this node we may select two children: one for outputting enough $\och_1$ characters, and one for outputting enough $\och_2$ characters.
    The alternatives not taken will be checked using tests as before.
    However, the child responsible for outputting $\och_1$ may have also wanted to run a test on the child responsible for outputting $\och_2$.
    Thus, as well as having to output enough $\och_2$ characters, this latter child will also have to run the test required by the former.
    Thus, we have to build these tests into the control state.
    As a sanity condition we enforce
    $\outputs \cap \branchchs = \emptyset$
    since a branch outputting $\och$ should never ask itself if it is able to produce at least one $\och$.
}

We explain the rules of $\simulator{\pda}$ intuitively.
It will be beneficial to refer to the formal definition (below) while reading the explanations.
The case for $\simpushrules$ is illustrated in Figure~\ref{fig:push-sim} since it covers most of the situations appearing in the other rules as well.
\begin{itemize}
\item
    The rules in $\siminitrules$ guess how many branches will be needed to output enough of each $\och$.
    (This might be less than $\numchs$ since one branch might account for several characters.)

\item
    The rules in $\simfinrules$ check whether the run can be finished (always via a $\pop{\pdaord}$ since we are aiming for the empty stack).
    This is true if we only have one branch to complete (just reach $\control'$)
    and we have no more characters that we're obliged to output.

\item
    The rules in $\simsimrules$ simulate a non-branching operation.
    They do this faithfully, simply passing along all information (updating $\outputs$ if a character is output
    \changed[mh]{
        by the simulated transition).
    }

\item
    The rules in $\simbrrules$ are the first of the complicated rules.
    This is mainly a matter of notation.
    The reasoning behind the rules is that we're at a point where the tree splits into $\altnumof$ different branches.
    These have control states
    $\control'_1, \ldots, \control'_\altnumof$
    respectively.
    We non-deterministically guess which of these branches should output which of the characters in $\branchchs$.
    Thus, we split $\branchchs$ into
    $\branchchs_1, \ldots, \branchchs_\idxi$.
    This means we are exploring $\idxi$ branches.
    Let
    $\controlx_1, \ldots, \controlx_\idxi$
    be the control states on these branches.
    The remaining branches we handle using tests on the stack.
    Let
    $\controly_1, \ldots, \controly_\idxj$
    be the control states appearing on these branches.
    We require that all of
    $\control'_1, \ldots, \control'_\altnumof$
    are accounted for, so we assert that
    $\control'_1, \ldots, \control'_\altnumof$
    is a permutation of
    $\controlx_1, \ldots, \controlx_\idxi,
     \controly_1, \ldots, \controly_\idxj$.

    Similarly, in the current subtree we are obliged to pop to leaf nodes containing the control states
    $\control_1, \ldots, \control_\numof$.
    We split these obligations between the branches we are exploring and those we are handling using tests.
    We use another permutation check to ensure the obligations have been distributed properly.

    Finally, we are required to output characters in $\outputs$.
    We may also, in choosing a particular branch for a character $\och$, need to output $\och$ to account for instances appearing on a missed branch.
    Hence we also output $\outputs'$ to account for these.
    We distribute the obligations $\outputs$ and $\outputs'$ amongst the different branches using
    $\outputsx_1, \ldots, \outputsx_\idxi$
    and
    $\outputsy_1, \ldots, \outputsy_\idxj$.

\item
    The rules in $\simpushrules$ and $\simpoprules$ follow the same intuition as in the single character case, except we have the branching to deal with.
    In particular, at a push we have one branch corresponding to exploring what happens between the push and the corresponding pops, and a branch for each of the corresponding pops.
    We choose a selection of these branches to track with the HOPDA and a selection to handle using tests.
    The difference between $\simpushrules$ and $\simpoprules$ is that the former explores the branch of the push using the HOPDA and the latter uses a test.

    In these rules, after the push we're in control state $\control'$ and we guess that we will pop to control states
    $\control'_1, \ldots, \control'_\altnumof$.
    Hence we have a branch or a test to ensure that this happens.
    The remaining branches and tests are for what happens after the pops.
    The start from the states
    $\control'_1, \ldots, \control'_\altnumof$
    and must, in total, pop to the original pop obligation
    $\control_1, \ldots, \control_\numof$.
    Hence, we distribute these tasks in the same way as the $\simbrrules$.
\end{itemize}

\begin{figure*}
\begin{center}
\scalebox{0.8}{
    \acmeasychair{
        \psset{xunit=1.9,yunit=1.5}
    }{
        \psset{xunit=2,yunit=2}
    }
    \pspicture(0,-.5)(9,5)
        \pspolygon(4,3.5)(2,2)(6,2)
        \pspolygon(2,2)(4,2)(4,0)(0,0)
        \pspolygon(4,2)(6,2)(8,0)(4,0)
        \rput(4,4.2){$\control$}
        \rput(4,3.7){$\control'$}
        \psline{->}(4,4.1)(4,3.8)
        \rput(4.5,3.9){$\push{\pdaord}$}
        \rput(4,2.71){$\simcontrolbr{\control'}
                                   {\control'_1, \ldots, \control'_\altnumof}
                                   {\outputs}
                                   {\branchchs}$}
        \rput(4,2.3){$\control'_1, \ldots, \control'_\altnumof$}
        \rput(4,2.1){$\overbrace{\hspace{45ex}}$}
        \rput(6.5,2){$\pop{\pdaord}$}
        \rput(3,1.8){$\controly_1, \ldots, \controly_\idxj$}
        \rput(5,1.8){$\controlx_1, \ldots, \controlx_\idxi$}
        \rput(2,0.2){$\controly^1_1, \ldots, \controly^1_{\idxi_1},
                      \ldots,
                      \controly^\idxj_1, \ldots, \controly^\idxj_{\idxi_\idxj}$}
        \rput(6,0.2){$\controlx^1_1, \ldots, \controlx^1_{\idxj_1},
                      \ldots,
                      \controlx^\idxi_1, \ldots, \controlx^\idxi_{\idxj_\idxi}$}
        \rput(8.5,0){$\pop{\pdaord}$}
        \rput(4,-0.1){$\underbrace{\hspace{90ex}}$}
        \rput(4,-0.3){$\control_1, \ldots, \control_\numof$}
        \rput(2.75,1){$\brac{
            \begin{array}{c} %
                \canpopautbr{\controly_1} %
                            {\controly^1_1,%
                             \ldots,%
                             \controly^1_{\idxi_1}} %
                            {\outputsy_1} \\ %
                \cap \cdots \cap \\ %
                \canpopautbr{\controly_\idxj} %
                            {\controly^\idxj_1,%
                             \ldots,%
                             \controly^\idxj_{\idxi_\idxj}} %
                            {\outputsy_\idxj} %
            \end{array} %
        }$}
        \rput(5.25, 1){$\brac{
            \begin{array}{c} %
                \simcontrolbr{\controlx_1} %
                             {\controlx^1_1, %
                              \ldots, %
                              \controlx^1_{\idxj_1}} %
                             {\outputsx_1} %
                             {\branchchs_1}, \\ %
                 \ldots, \\ %
                 \simcontrolbr{\controlx_\idxi} %
                              {\controlx^\idxi_1, %
                               \ldots, %
                               \controlx^\idxi_{\idxj_\idxi}} %
                              {\outputsx_\idxi} %
                              {\branchchs_\idxi} %
            \end{array} %
        }$}
    \endpspicture
}
    \caption{\label{fig:push-sim}Illustrating the rules in $\simpushrules$.}
\end{center}
\end{figure*}

\changed[mh]{
    Before giving the formal definition, we summarise the discussion above by recalling the meaning of the various components.
    A control state
    $\simcontrolbr{\control}
                  {\control_1, \ldots, \control_\numof}
                  {\outputs}
                  {\branchchs}$
    means we're currently simulating a node at control state $\control$ that is required to produce $\numof$ branches terminating in control states $\control_1, \ldots, \control_\numof$ respectively, that the produced tree should output at least one of each character in $\outputs$ and the entire subtree should output enough of each character in $\branchchs$ to satisfy the diagonal problem.
    In the definition below, the set $\outputs'$ is the set of new single character output obligations produced when the automaton decides which branches to follow faithfully and which to test (for the output of at least one of each character).
    The sets
    $\outputsx_1, \ldots, \outputsx_\idxi$
    and
    $\outputsy_1, \ldots, \outputsy_\idxj$
    represent the partitioning of the single character output obligations amongst the tests and new branches.

    The correctness of the reduction is stated after the definition.
    A discussion of the proof appears in Section~\ref{sec:correctness-sim}.
}

\begin{nameddefinition}{def:sim-pda-sim}{$\simulator{\pda}$}
    Given an \bhopda{\pdaord}{\numchs}
    $\pda$
    described by
    $\tup{\controls,
          \oalphabet,
          \salphabet,
          \rules,
          \set{\uniquefinalcontrol},
          \controlinit,
          \chainit,
          \csrank}$
    and automata
    $\canpopautbr{\control}
                 {\control_1,\ldots,\control_\numof}
                 {\outputs}$
    for all %
    $1 \leq \numof \leq \numchs$,
    $\control, \control_1, \ldots, \control_\numof \in \controls$,
    and
    $\outputs \subseteq \set{\och_1, \ldots, \och_\numchs}$
    we define an \bhopda{\pdaord-1}{\numchs} with tests
    \[
        \simulator{\pda} = \tup{
            \simulator{\controls},
            \oalphabet,
            \salphabet,
            \simulator{\rules},
            \simulator{\finals},
            \simcontrolinit,
            \chainit,
            \simulator{\csrank}
        }
    \]
    where $\simulator{\controls}$ is the set
    \[
        \begin{array}{l} %
            \setcomp{ %
                \simcontrolbr{\control} %
                             {\control_1, \ldots, \control_\numof} %
                             {\outputs} %
                             {\branchchs} %
            }{ %
                \begin{array}{c} %
                    1 \leq \numof \leq \numchs %
                    \ \land \\ %
                    \control, \control_1, \ldots, \control_\numof \in \controls %
                    \ \land \\ %
                    \outputs, \branchchs \subseteq %
                    \set{\och_1, \ldots, \och_\numchs} %
                    \ \land \\ %
                    \outputs \cap \branchchs = \emptyset %
                \end{array} %
            } %
            \ \uplus \\ %
            \set{\simcontrolinit, \finalcontrol} %
        \end{array} %
    \] %
    and %
    \[ %
        \begin{array}{rcl} %
            \simulator{\rules} %
            &=& %
            \siminitrules \cup %
            \simsimrules \cup %
            \simbrrules \cup %
            \simfinrules \cup %
            \simpushrules \cup %
            \simpoprules %
            \\ %
            \simulator{\finals} %
            &=& %
            \set{\finalcontrol} %
         \end{array} %
    \] %
    and
    $\ap{\simulator{\csrank}}{\simcontrolbr{\control}
                                           {\control_1,
                                            \ldots,
                                            \control_\numof}
                                           {\outputs}
                                           {\branchchs}}
     =
     \sizeof{\branchchs}$
    and is $1$ for all other control states.
    We define the sets of rules,
         where in all cases,
        $\control_1, \ldots, \control_\numof \in \controls$
        and
        $\outputs, \outputs', \branchchs
         \subseteq
         \set{\och_1, \ldots, \och_\numchs}$, to be as follows:

    \begin{itemize}
    \item
        $\siminitrules$ is the set containing all rules of the form
        \[
            \pdrule{\simcontrolinit}
                   {\chainit}
                   {\ech}
                   {\rew{\chainit}}
                   {\simcontrolbr{\controlinit}
                                 {\uniquefinalcontrol,
                                  \ldots,
                                  \uniquefinalcontrol}
                                 {\emptyset}
                                 {\set{\och_1,\ldots,\och_\numchs}}}
        \]
        where
        $\sizeof{\uniquefinalcontrol,
                 \ldots,
                 \uniquefinalcontrol} \leq \numchs$,
        and

    \item
        $\simfinrules$ is the set containing all rules of the form
        \[
            \pdrulet{\simcontrolbr{\control}
                                  {\control'}
                                  {\emptyset}
                                  {\branchchs}}
                    {\cha}
                    {\auttrue}
                    {\ech}
                    {\rew{\cha}}
                    {\finalcontrol}
        \]
        for all
        $\pdrule{\control}
                 {\cha}
                 {\ech}
                 {\pop{\pdaord}}
                 {\control'} \in \rules$
        and
        $\branchchs \subseteq \set{\och_1, \ldots, \och_\numchs}$,
        and

    \item
         $\simsimrules$ is the set containing all rules of the form
         \[
             \bigpdrulet{\simcontrolbr{\control}
                                      {\control_1, \ldots, \control_\numof}
                                      {\outputs}
                                      {\branchchs}}
                        {\cha}
                        {\auttrue}
                        {\set{\ochb} \cap \branchchs}
                        {\op}
                        {\simcontrolbr{\control'}
                                      {\control_1, \ldots, \control_\numof}
                                      {\outputs \setminus \set{\ochb}}
                                      {\branchchs}}
        \]
        for
        $\pdrule{\control}
                {\cha}
                {\ochb}
                {\op}
                {\control'} \in \rules$,
        and
        $\op \notin \set{\push{\pdaord}, \pop{\pdaord}}$,
        and

    \item
        $\simbrrules$ is the set containing all rules of the form
        \[
            \bigpdrulet{\simcontrolbr{\control}
                                     {\control_1, \ldots, \control_\numof}
                                     {\outputs}
                                     {\branchchs}}
                       {\cha}
                       {\begin{array}{c} %
                           \canpopautbr{\controly_1} %
                                       {\controly^1_1,%
                                        \ldots,%
                                        \controly^1_{\idxi_1}} %
                                       {\outputsy_1} \\ %
                           \cap \cdots \cap \\ %
                           \canpopautbr{\controly_\idxj} %
                                       {\controly^\idxj_1,%
                                        \ldots,%
                                        \controly^\idxj_{\idxi_\idxj}} %
                                       {\outputsy_\idxj} %
                        \end{array}} %
                       {\outputs' \cap \branchchs} %
                       {\rew{\cha}} %
                       {\begin{array}{c} %
                           \simcontrolbr{\controlx_1} %
                                        {\controlx^1_1, %
                                         \ldots, %
                                         \controlx^1_{\idxj_1}} %
                                        {\outputsx_1} %
                                        {\branchchs_1}, \\ %
                            \ldots, \\ %
                            \simcontrolbr{\controlx_\idxi} %
                                         {\controlx^\idxi_1, %
                                          \ldots, %
                                          \controlx^\idxi_{\idxj_\idxi}} %
                                         {\outputsx_\idxi} %
                                         {\branchchs_\idxi} %
                        \end{array}} %
        \]
        where
        \[
            \pdrule{\control}
                   {\cha}
                   {\ech}
                   {\rew{\cha}}
                   {\control'_1, \ldots, \control'_\altnumof} \in \rules
        \]
        and
        $\control'_1, \ldots \control'_\altnumof$
        is a permutation of
        \[
            \controlx_1, \ldots, \controlx_\idxi,
            \controly_1, \ldots, \controly_\idxj
        \]
        and
        $\control_1, \ldots, \control_\numof$
        is a permutation of
        \[
            \controlx^1_1, \ldots, \controlx^1_{\idxj_1},
            \ldots
            \controlx^\idxi_1, \ldots, \controlx^\idxi_{\idxj_\idxi}
            \controly^1_1, \ldots, \controly^1_{\idxi_1},
            \ldots
            \controly^\idxj_1, \ldots, \controly^\idxj_{\idxi_\idxj}
        \]
        and
        \[
            \outputs \cup \outputs' =
            \outputsx_1 \cup \cdots \cup \outputsx_\idxi
            \cup
            \outputsy_1 \cup \cdots \cup \outputsy_\idxj
        \]
        and
        $\branchchs = \branchchs_1 \cup \cdots \cup \branchchs_\idxi$.

    \item
        $\simpushrules$ is the set containing all rules of the form
        \[
            \bigpdrulet{\simcontrolbr{\control}
                                     {\control_1, \ldots, \control_\numof}
                                     {\outputs}
                                     {\branchchs}}
                       {\cha}
                       {\begin{array}{c} %
                           \canpopautbr{\controly_1} %
                                       {\controly^1_1,%
                                        \ldots,%
                                        \controly^1_{\idxi_1}} %
                                       {\outputsy_1} \\ %
                           \cap \cdots \cap \\ %
                           \canpopautbr{\controly_\idxj} %
                                       {\controly^\idxj_1,%
                                        \ldots,%
                                        \controly^\idxj_{\idxi_\idxj}} %
                                       {\outputsy_\idxj} %
                        \end{array}} %
                       {\outputs' \cap \branchchs} %
                       {\rew{\cha}} %
                       {\begin{array}{c} %
                           \simcontrolbr{\control'} %
                                        {\control'_1, \ldots, \control'_\altnumof} %
                                        {\outputsx} %
                                        {\branchchs_0}, \\ %
                           \simcontrolbr{\controlx_1} %
                                        {\controlx^1_1, %
                                         \ldots, %
                                         \controlx^1_{\idxj_1}} %
                                        {\outputsx_1} %
                                        {\branchchs_1}, \\ %
                            \ldots, \\ %
                            \simcontrolbr{\controlx_\idxi} %
                                         {\controlx^\idxi_1, %
                                          \ldots, %
                                          \controlx^\idxi_{\idxj_\idxi}} %
                                         {\outputsx_\idxi} %
                                         {\branchchs_\idxi} %
                        \end{array}} %
        \]
        where
        \[
            \pdrule{\control}
                   {\cha}
                   {\ech}
                   {\push{\pdaord}}
                   {\control'}
        \]
        and
        $\control'_1, \ldots \control'_\altnumof$
        is a permutation of
        \[
            \controlx_1, \ldots, \controlx_\idxi,
            \controly_1, \ldots, \controly_\idxj
        \]
        and
        $\control_1, \ldots, \control_\numof$
        is a permutation of
        \[
            \controlx^1_1, \ldots, \controlx^1_{\idxj_1},
            \ldots
            \controlx^\idxi_1, \ldots, \controlx^\idxi_{\idxj_\idxi}
            \controly^1_1, \ldots, \controly^1_{\idxi_1},
            \ldots
            \controly^\idxj_1, \ldots, \controly^\idxj_{\idxi_\idxj}
        \]
        and
        \[
            \outputs \cup \outputs' =
            \outputsx \cup
            \outputsx_1 \cup \cdots \cup \outputsx_\idxi
            \cup
            \outputsy_1 \cup \cdots \cup \outputsy_\idxj
        \]
        and
        $\branchchs = \branchchs_0 \cup \cdots \cup \branchchs_\idxi$.

    \item
        we have $\simpoprules$ is the set containing all rules of the form
        \[
            \bigpdrulet{\simcontrolbr{\control}
                                     {\control_1, \ldots, \control_\numof}
                                     {\outputs}
                                     {\branchchs}}
                       {\cha}
                       {\begin{array}{c} %
                           \canpopautbr{\control'} %
                                       {\control'_1,\ldots,\control'_\altnumof} %
                                       {\outputsy}\ \cap \\ %
                           \canpopautbr{\controly_1} %
                                       {\controly^1_1,%
                                        \ldots,%
                                        \controly^1_{\idxi_1}} %
                                       {\outputsy_1} \\ %
                           \cap \cdots \cap \\ %
                           \canpopautbr{\controly_\idxj} %
                                       {\controly^\idxj_1,%
                                        \ldots,%
                                        \controly^\idxj_{\idxi_\idxj}} %
                                       {\outputsy_\idxj} %
                        \end{array}} %
                       {\outputs' \cap \branchchs} %
                       {\rew{\cha}} %
                       {\begin{array}{c} %
                           \simcontrolbr{\controlx_1} %
                                        {\controlx^1_1, %
                                         \ldots, %
                                         \controlx^1_{\idxj_1}} %
                                        {\outputsx_1} %
                                        {\branchchs_1}, \\ %
                            \ldots, \\ %
                            \simcontrolbr{\controlx_\idxi} %
                                         {\controlx^\idxi_1, %
                                          \ldots, %
                                          \controlx^\idxi_{\idxj_\idxi}} %
                                         {\outputsx_\idxi} %
                                         {\branchchs_\idxi} %
                        \end{array}} %
        \]
        where
        \[
            \pdrule{\control}
                   {\cha}
                   {\ech}
                   {\push{\pdaord}}
                   {\control'}
        \]
        and
        $\control'_1, \ldots \control'_\altnumof$
        is a permutation of
        \[
            \controlx_1, \ldots, \controlx_\idxi,
            \controly_1, \ldots, \controly_\idxj
        \]
        and
        $\control_1, \ldots, \control_\numof$
        is a permutation of
        \[
            \controlx^1_1, \ldots, \controlx^1_{\idxj_1},
            \ldots
            \controlx^\idxi_1, \ldots, \controlx^\idxi_{\idxj_\idxi}
            \controly^1_1, \ldots, \controly^1_{\idxi_1},
            \ldots
            \controly^\idxj_1, \ldots, \controly^\idxj_{\idxi_\idxj}
        \]
        and
        \[
            \outputs \cup \outputs' =
            \outputsy \cup
            \outputsx_1 \cup \cdots \cup \outputsx_\idxi
            \cup
            \outputsy_1 \cup \cdots \cup \outputsy_\idxj
        \]
        and
        $\branchchs = \branchchs_1 \cup \cdots \cup \branchchs_\idxi$.
    \end{itemize}
\end{nameddefinition}

\changed[mh]{
    In Section~\ref{sec:correctness-sim} we show that the reduction is correct.

    \begin{namedlemma}{lem:correct-sim-sim}{Correctness of $\simulator{\pda}$}
        \[
            \unbounded{\och_1, \ldots, \och_\numchs}{\pda}
            \iff
            \unbounded{\och_1, \ldots, \och_\numchs}{\simulator{\pda}}
        \]
    \end{namedlemma}
}

To complete the reduction, we convert the \bhopda{\pdaord}{\numchs} with tests into a \bhopda{\pdaord}{\numchs} without tests.

\begin{namedlemma}{lem:reduction-sim}{Reduction to Lower Orders}
    For every \bhopda{\pdaord}{\numchs} $\pda$ we can build an order-$(\pdaord-1)$ $\numchs$-branch HOPDA $\pda'$ such that
    \[
        \unbounded{\och_1, \ldots, \och_\numchs}{\pda}
        \iff
        \unbounded{\och_1, \ldots, \och_\numchs}{\pda'} \ .
    \]
\end{namedlemma}
\begin{proof}
    From \refdefinition{def:sim-pda-sim} and \reflemma{lem:correct-sim-sim}, we obtain from $\pda$ an \bhopda{\pdaord-1}{\numchs} with tests $\simulator{\pda}$ satisfying the conditions of the lemma.  To complete the proof, we invoke \reftheorem{thm:no-tests-sim} to find $\pda'$ as required.
\end{proof}

We show correctness of the reduction in Section~\ref{sec:correctness-sim}.
First we show that we have decidability once we have reduced to order-$0$.

\subsection{Decidability at Order-0}

We show that the problem becomes decidable for a \hopda{0} $\pda$.
This is essentially a finite state machine and we can linearise the trees generated by saving the list of states that have been branched to in the control state.
After one branch has completed, we run the next in the list, until all branches have completed.
Hence, a tree of $\pda$ becomes a run of the linearised \hopda{0}, and vice-versa.
Since each output tree has a bounded number of branches, the list length is bounded.
Thus, we convert $\pda$ into a finite state word automaton, for which the diagonal problem is decidable.
\changed[mh]{
    Note, this result can also be obtained from the decidability of the diagonal problem for pushdown automata.
}
\acmeasychair{
    The details are given in the full version~\cite{Hague:2015}.
}{

\begin{definition}[$\linearised{\pda}$]
    Given an \bhopda{0}{\numchs}
    $\pda$
    described by the tuple
    $\tup{\controls,
          \oalphabet,
          \salphabet,
          \rules,
          \finals,
          \controlinit,
          \chainit,
          \csrank}$
    we define a \hopda{0}
    \[
        \linearised{\pda} =
        \tup{\linearised{\controls},
             \oalphabet,
             \salphabet,
             \linearised{\rules},
             \finals,
             \controlinit,
             \chainit}
    \]
    such that
    \[
        \controls =
        \setcomp{
            \lincontrol{\control}
                       {\control_1, \cha_1,
                        \ldots
                        \control_\numof, \cha_\numof}
        }{
            \begin{array}{c} %
                \control, \control_1, \ldots, \control_\numof \in \controls %
                \ \land \\ %
                \cha_1, \ldots, \cha_\numof \in \salphabet %
                \ \land \\ %
                0 \leq \numof \leq \numchs %
            \end{array} %
        }
        \cup
        \set{\finalcontrol}
    \]
    and $\linearised{\rules}$ is the set containing all rules of the form
    \[
        \bigpdrule{\lincontrol{\control}
                              {\control_1, \cha_1,
                               \ldots,
                               \control_\numof, \cha_\numof}}
                  {\cha}
                  {\ochb}
                  {\rew{\chb}}
                  {\lincontrol{\control'_1}
                              {\begin{array}{c} %
                                   \control_1, \cha_1, %
                                   \ldots, %
                                   \control_\numof, \cha_\numof, \\ %
                                   \control'_2, \chb, %
                                   \ldots, %
                                   \control'_\altnumof, \chb %
                               \end{array}}} %
    \]
    for each
    \[
        \pdrule{\control}
               {\cha}
               {\ochb}
               {\rew{\chb}}
               {\control'_1, \ldots, \control'_\altnumof}
        \in \rules
    \]
    and all rules
    \[
        \pdrule{\lincontrol{\control}
                           {\control_1, \cha_1,
                            \ldots,
                            \control_\numof, \cha_\numof}}
               {\cha}
               {\ech}
               {\rew{\cha_1}}
               {\lincontrol{\control_1}
                           {\control_2, \cha_2,
                            \ldots,
                            \control_\numof, \cha_\numof}}
    \]
    whenever
    $\control \in \finals$.
\end{definition}

\begin{namedlemma}{lem:decide-order-0-sim}{Decidability at Order-0}
    We have
    \[
        \unbounded{\och_1, \ldots, \och_\numchs}{\pda}
        \iff
        \unbounded{\och_1, \ldots, \och_\numchs}{\linearised{\pda}}
    \]
    and hence
    $\unbounded{\och_1, \ldots, \och_\numchs}{\pda}$
    is decidable.
\end{namedlemma}
\begin{proof}
    Take an accepting run tree $\pdrun$ of $\pda$.
    If this tree contains no branching, then it is straightforward to construct an accepting run of $\linearised{\pda}$.
    Hence, assume all trees with fewer than $\numchs$ branches have a corresponding run of $\linearised{\pda}$.
    At a subtree
    $\treeap{\configc}{\tree_1, \ldots, \tree_\numof}$
    we take the run trees
    $\pdrun_1, \ldots, \pdrun_\numof$
    corresponding to the subtrees.
    Let
    $\configc = \config{\control}{\cha}$
    and
    $\configc_1 = \config{\control_1}{\cha},
     \ldots,
     \configc_\numof = \config{\control_\numof}{\cha}$
    be the configurations at the roots of the subtrees.
    We build a run beginning at $\configc$ and transitioning to
    $\config{\lincontrol{\control_1}
                        {\control_2, \cha,
                         \ldots,
                         \control_\numof, \cha}}
            {\cha}$.
    The run then follows $\pdrun_1$ with the extra information in its control state.
    After $\pdrun_1$ accepts, we transition to
    $\config{\lincontrol{\control_2}
                        {\control_3, \cha,
                         \ldots,
                         \control_\numof, \cha}}
            {\cha}$
    and then replay $\pdrun_2$.
    We repeat until all subtrees have been dispatched.
    This gives an accepting run of $\linearised{\pda}$ outputting the same number of each $\och$.

    In the other direction, we replay the accepting run $\pdrun$ of $\linearised{\pda}$ until we reach a configuration
    $\config{\lincontrol{\control_1}
                        {\control_2, \cha, \ldots, \control_\numof, \cha}}
            {\cha}$
    via a rule
    \[
        \pdrule{\control}
               {\chb}
               {\ech}
               {\rew{\cha}}
               {\lincontrol{\control_1}
                           {\control_2, \cha,
                            \ldots,
                            \control_\numof, \cha}}
    \]
    At this point we apply
    \[
        \pdrule{\control}
               {\chb}
               {\ech}
               {\rew{\cha}}
               {\control_1, \ldots, \control_\numof}
    \]
    of $\pda$.
    We obtain runs for each of the new children as follows.
    We split the remainder of the run $\pdrun'$ into $\numof$ parts
    $\pdrun'_1, \ldots, \pdrun'_\numof$
    where the break points correspond to each application of a rule of the second kind.
    For each $\idxi$ we replay the transitions of $\pdrun'_1$ from
    $\config{\control_\idxi}{\cha}$
    to obtain a new run of $\linearised{\pda}$ with fewer applications of the second rule.
    Inductively, we obtain an accepting run of $\pda$ that we plug into the $\idxi$th child.
    This gives us an accepting run of $\pda$ outputting the same number of each $\och$.
\end{proof}

}

\subsection{Decidability of The Diagonal Problem}

\acmeasychair{}{
    We thus have the following theorem.
}

\begin{namedtheorem}{thm:diagonal-sim}{Decidability of the Diagonal Problem}
    For an \hopda{\pdaord} $\pda$ and output characters
    $\och_1, \ldots, \och_\numchs$,
    it is decidable whether
    $\unbounded{\och_1, \ldots, \och_\numchs}{\pda}$.
\end{namedtheorem}

\begin{proof}
    We first interpret $\pda$ as an \bhopda{\pdaord}{\numchs} and then construct via \reflemma{lem:reduction-sim} an \bhopda{\pdaord-1}{\numchs} $\pda'$ such that
    $\unbounded{\och_1, \ldots, \och_\numchs}{\pda}$
    iff
    $\unbounded{\och_1, \ldots, \och_\numchs}{\pda'}$.
    We repeat this step until we have an \bhopda{0}{\numchs}.
    Then, from
    \acmeasychair{%
        decidability at order-$0$%
    }{%
        \reflemma{lem:decide-order-0-sim}%
    }
    we obtain decidability as required.
\end{proof}


\section{Correctness for Simultaneous Unboundedness}
\label{sec:correctness-sim}

In this section we prove
\changed[mh]{
    \reflemma{lem:correct-sim-sim}.
}
The proof follows the same outline as the single character case.
To show there is a run with at least $\numof$ of each character, we take via Lemma~\ref{lem:tree-scores-sim} (Section~\ref{sec:tree-scores-sim}),
$\numof' = (\numchs+1)^\numof$,
and a run of $\pda$ outputting at least this many of each character.
Then from Lemma~\ref{lem:branch-runs-sim} (Section~\ref{sec:branch-runs-sim}) a run of $\simulator{\pda}$ outputting at least $\numof$ of each character as required.
The other direction is shown in Lemma~\ref{lem:sim-to-pda-sim} (Section~\ref{sec:sim-to-pda-sim}).

We first generalise our tree decomposition and notion of scores.
We then show that every $\numchs$-branch subtree of a tree decomposition generates a run tree of $\simulator{\pda}$ matching the scores of the tree.
Finally we prove the opposite direction.

\subsection{Tree Decomposition of Output Trees}

Given an output tree $\tree$ of $\pda$ where each $\push{\pdaord}$ operation has a matching $\pop{\pdaord}$ on all branches, we can construct a decomposed tree representation of the run inductively as follows.
We define
$\treedecomp{\treesingle{\ech}} = \treesingle{\ech}$
and, when
\[
    \tree = \treeap{\ochb}{\tree_1, \ldots, \tree_\numof}
\]
where the rule applied at the root does not contain a $\push{\pdaord}$ operation, we have
\[
    \treedecomp{\tree} =
    \treeap{\ochb}{
        \treedecomp{\tree_1},
        \ldots,
        \treedecomp{\tree_\numof}
    } \ .
\]
In the final case, let
\[
    \tree = \treeap{\ech}{\tree'}
\]
where the rule applied at the root contains a $\push{\pdaord}$ operation and the corresponding $\pop{\pdaord}$ operations occur at nodes
$\node_1, \ldots, \node_\numof$.

Note, if the output trees had an arbitrary number of branches, $\numof$ may be unbounded.
In our case, $\numof \leq \numchs$, without which our reduction would fail: $\simulator{\pda}$ would be unable to accurately count the number of $\pop{\pdaord}$ nodes.
In fact, our trees would have unbounded out degree and \reflemma{lem:tree-scores} would not generalise.

Let $\tree_1, \ldots, \tree_\numof$ be the output trees rooted at $\node_1, \ldots, \node_\numof$ respectively and let $\tree'$ be $\tree$ with these subtrees removed.
Observe all branches of $\tree$ are cut by this operation since the $\push{\pdaord}$ must be matched on all branches.
We define
\[
    \treedecomp{\tree} =
    \treeap{\ech}{
        \treedecomp{\tree'},
        \treedecomp{\tree_1},
        \ldots,
        \treedecomp{\tree_\numof}
    } \ .
\]

An accepting run of $\pda$ has an extra $\pop{\pdaord}$ operation at the end of each branch leading to the empty stack.
Let $\tree'$ be the tree obtained by removing the final $\pop{\pdaord}$-induced edge leading to the leaves of each branch.
The tree decomposition of an accepting run is
\[
    \treedecomp{\tree} =
    \treeap{\ech}{
        \treedecomp{\tree'},
        \treesingle{\ech},
        \ldots,
        \treesingle{\ech}
    }
\]
where there are as many $\treesingle{\ech}$ as there are leaves of $\tree$.

Notice that our trees have out-degree at most $(\numchs + 1)$.

\subsection{Scoring Trees}
\label{sec:tree-scores-sim}

We score branches in the same way as the single character case.
We simply define
$\treescorech{\och}{\pdrun}$
to be
$\treescore{\pdrun}$
when $\och$ is considered as the only output character (all others are replaced with $\ech$).

We have to slightly modify our minimum score lemma to accommodate the increased out-degree of the nodes in the trees.

\begin{namedlemma}{lem:tree-scores-sim}{Minimum Scores}
    Given a tree $\tree$ with maximum out-degree $(\numchs + 1)$, containing, for each
    $\och \in \set{\och_1, \ldots, \och_\numchs}$,
    at least $\numof$ nodes labelled $\och$, for each
    $\och \in \set{\och_1, \ldots, \och_\numchs}$ we have
    \[
        \treescorech{\och}{\tree} \geq \ap{\log_{(\numchs+1)}}{\numof}
    \]
\end{namedlemma}
\begin{proof}
    This is a simple extension of the proof of \reflemma{lem:tree-scores}.
    We simply replace the two-child case with a tree with up to $(\numchs+1)$ children.
    In this case, we have to use $\log_{(\numchs+1)}$ rather than $\log$ to maintain the lemma.
\end{proof}

\acmeasychair{
    \subsection{Completing the proof}

    As in the single character case, we complete the proof with the following two lemmas, shown in the full version~\cite{Hague:2015}.
        As before, these lemmas simply formalise the fact that $\simulator{\pda}$ runs along branches of a tree decomposition of a run of $\pda$.
}{
    \subsection{From Branches to Runs}
}

\label{sec:branch-runs-sim}
\begin{namedlemma}{lem:branch-runs-sim}{Scores to Runs}
    Given an accepting output tree $\pdrun$ of $\pda$, if for all
    $\och \in \set{\och_1, \ldots, \och_\numchs}$ we have
    $\treescorech{\och}{\treedecomp{\pdrun}} \geq \numof$,
    then
    $\exists \tree \in \ap{\lang}{\simulator{\pda}}$
    with
    $\chcount{\och}{\tree} \geq \numof$
    for all
    $\och \in \set{\och_1, \ldots, \och_\numchs}$.
\end{namedlemma}
\acmeasychair{}{
    \begin{proof}
        We will construct a tree $\simulator{\pdrun}$ in
$\ap{\lang}{\simulator{\pda}}$
top down.
At each step we will maintain a ``frontier'' of $\simulator{\pdrun}$ and extend one leaf of this frontier until the whole tree is constructed.
The frontier is of the form
\[
    \tup{\configc_1, \node_1, \outputs_1, \branchchs_1,
         \ldots,
         \configc_\altnumof,
         \node_\altnumof,
         \outputs_\altnumof
         \branchchs_\altnumof}
\]
which means that there are $\altnumof$ nodes in the frontier.
We have
$\branchchs_1 \uplus \cdots \uplus \branchchs_\altnumof =
 \set{\och_1, \ldots, \och_\numchs}$
and each $\branchchs_\idxi$ indicates that the $\idxi$th branch, ending in configuration $\configc_\idxi$, is responsible for outputting enough of each of the characters in $\branchchs_\idxi$.
Each $\node_\idxi$ is the corresponding node in $\treedecomp{\pdrun}$ that is being tracked by the $\idxi$th branch of the output of $\simulator{\pda}$.

Let
$\uniquefinalcontrol$
be the final (accepting) control state of $\pda$
and let
$\tree = \treedecomp{\pdrun}$.
We begin at the root node of $\tree$, which corresponds to the initial configuration of $\pdrun$.
Let
$\config{\control}{\kstack{\pdaord}{\stackw}}$
be this initial configuration and let
$\configc =
 \config{\simcontrolbr{\control}
                      {\uniquefinalcontrol, \ldots, \uniquefinalcontrol}
                      {\emptyset}
                      {\set{\och_1, \ldots, \och_\numchs}}}
        {\stackw}$
be the configuration of $\simulator{\pda}$ after an application of a rule from
$\siminitrules$.
The initial frontier is
$\tup{\configc, \rootnode, \set{\och_1, \ldots, \och_\numchs}}$.

Thus, assume we have a frontier
\[
    \tup{\configc_1, \node_1, \outputs_1, \branchchs_1,
      \ldots,
      \configc_\fronlen,
      \node_\fronlen,
      \outputs_\fronlen,
      \branchchs_\fronlen}
\]
and for each of the sequences
$\simulator{\configc}, \node, \outputs, \branchchs$
of the frontier we have
\begin{enumerate}
\item
    $\tree'$ is the subtree of $\tree$ rooted at $\node$, and

\item
    $\configc = \config{\control}{\stackw}$ labelling $\node$, and

\item
    $\simulator{\configc} =
     \config{\simcontrolbr{\control}
                          {\control_1,
                           \ldots,
                           \control_\numof}
                          {\outputs}
                          {\branchchs}}
            {\ap{\topop{\pdaord}}{\stackw}}$, and

\item
    the node of $\pdrun$ corresponding to $\node$ has $\numof$ locations where the $\topop{\pdaord}$ stack is first popped via rules reaching
    $\control_1, \ldots, \control_\numof$,
    moreover, these leaves have corresponding leaves in $\tree'$, and

\item
    the branch from the root of the constructed run to the node labelled $\simulator{\configc}$ in the frontier outputs, for each
    $\och \in \branchchs$,
    at least
    $\brac{\numof - \treescorech{\och}{\tree'}}$
    occurrences of $\och$, and

\item
    $\outputs \cap \branchchs = \emptyset$
    and for each
    $\och \in \outputs$
    there is at least one node labelled by $\och$ in $\tree'$.
\end{enumerate}

Pick such a sequence
$\simulator{\configc}, \node, \outputs, \branchchs$.
We replace this sequence using a transition of $\simulator{\pda}$ in a way that produces a new frontier with the above properties and moves us a step closer to reaching leaves of $\tree$.
There are three cases when we are dealing with internal nodes.
\begin{itemize}
\item
    $\tree' = \treeap{\ochb}{\tree_1}$.

    In this case there is a transition
    $\configc \pdatran{\ochb} \configc'$
    via a rule
    $\pdrule{\control}{\cha}{\ochb}{\op}{\control'}$
    where
    $\op \notin \set{\push{\pdaord}, \pop{\pdaord}}$.
    Hence, we have
    \[
        \bigpdrulet{\simcontrolbr{\control}
                                 {\control_1,
                                  \ldots
                                  \control_\numof}
                                 {\outputs}
                                 {\branchchs}}
                   {\cha}
                   {\auttrue}
                   {\set{\ochb} \cap \branchchs}
                   {\op}
                   {\simcontrolbr{\control'}
                                 {\control_1,
                                  \ldots
                                  \control_\numof}
                                 {\outputs \setminus \set{\ochb}}
                                 {\branchchs}}
    \]
    in $\simulator{\pda}$ and thus we can extend $\simulator{\pdrun}$ with a transition
    $\simulator{\configc} \pdatran{\ochb} \simulator{\configc'}$
    via this rule.
    The new frontier is obtained by replacing
    $\simulator{\configc}, \node, \outputs, \branchchs$
    with
    $\simulator{\configc'}, \node', \outputs \setminus \set{\ochb}, \branchchs$
    where $\node'$ is the child of $\node$.
    The properties on the frontier are easily seen to be retained.

\item
    $\tree' = \treeap{\ech}{\tree_1, \ldots, \tree_\altnumof}$
    from a rule
    $\pdrule{\control}
            {\cha}
            {\ech}
            {\rew{\cha}}
            {\control'_1, \ldots, \control'_\altnumof}$
    of $\pda$.

    We separate
    $\branchchs = \branchchs'_1 \uplus \cdots \uplus \branchchs'_\idxi$
    such that
    $\branchchs'_\idxj$
    is the set of characters $\och$ that have their score derived from $\tree_\idxj$
    (i.e. the subtree with the higher score for $\och$ characters).
    Let $\outputs'$ be the set of all $\och$ who had a $+1$ in their score derived from another subtree.
    Let
    $\config{\controlx_1}{\stackw},
     \ldots
     \config{\controlx_\idxi}{\stackw}$
    be the configurations labelling the root nodes
    $\node_0, \node_1, \ldots, \node_\idxi$
    of these subtrees.
    Let
    $\config{\controly_1}{\stackw},
     \ldots,
     \config{\controly_\idxj}{\stackw}$
    be the configurations labelling the root nodes of the remaining subtrees.
    Since $\tree'$ includes $\numof$ leaves that are followed in $\pdrun$ by pops to
    $\control_1, \ldots, \control_\numof$
    we can distribute these control states amongst the branches, obtaining
    \[
        \controlx^1_1, \ldots, \controlx^1_{\idxj_1},
        \ldots
        \controlx^\idxi_1, \ldots, \controlx^\idxi_{\idxj_\idxi}
        \controly^1_1, \ldots, \controly^1_{\idxi_1},
        \ldots
        \controly^\idxj_1, \ldots, \controly^\idxj_{\idxi_\idxj} \ .
    \]
    Finally, we can distribute
    \[
        \outputs \cup \outputs' =
        \outputsx_1 \cup \cdots \cup \outputsx_\idxi
        \cup
        \outputsy_1 \cup \cdots \cup \outputsy_\idxj
    \]
    amongst the subtrees
    $\tree_1, \ldots, \tree_\altnumof$
    since $\outputs$ can be distributed by assumption and we chose $\outputs'$ such that this can be done.

    From the runs corresponding to
    $\tree_1, \ldots, \tree_\altnumof$
    and our choices above we know that the tests will pass.
    That is,
    $\simulator{\configc} \in
     \ap{\lang}{\canpopautbr{\control'_1}
                            {\controly^1_1,%
                             \ldots,%
                             \controly^1_{\idxi_1}}
                            {\outputsy_1}}$,
    \ldots,
    $\simulator{\configc} \in
     \ap{\lang}{\canpopautbr{\control'_\idxj}
                            {\controly^\idxj_1,%
                             \ldots,%
                             \controly^\idxj_{\idxi_\idxj}}
                            {\outputsy_\idxj}}$.

    Hence, we apply to $\simulator{\configc}$ the rule
    \[
        \bigpdrulet{\simcontrolbr{\control}
                                 {\control_1, \ldots, \control_\numof}
                                 {\outputs}
                                 {\branchchs}}
                   {\cha}
                   {\begin{array}{c} %
                       \canpopautbr{\controly_1} %
                                   {\controly^1_1, %
                                    \ldots, %
                                    \controly^1_{\idxi_1}} %
                                   {\outputsy_1} \\ %
                       \cap \cdots \cap \\ %
                       \canpopautbr{\controly_\idxj} %
                                   {\controly^\idxj_1, %
                                    \ldots, %
                                    \controly^\idxj_{\idxi_\idxj}} %
                                   {\outputsy_\idxj} %
                    \end{array}} %
                   {\outputs' \cap \branchchs} %
                   {\rew{\cha}} %
                   {\begin{array}{c} %
                       \simcontrolbr{\controlx_1} %
                                    {\controlx^1_1, %
                                     \ldots, %
                                     \controlx^1_{\idxj_1}} %
                                    {\outputsx_1} %
                                    {\branchchs_1}, \\ %
                        \ldots, \\ %
                        \simcontrolbr{\controlx_\idxi} %
                                     {\controlx^\idxi_1, %
                                      \ldots, %
                                      \controlx^\idxi_{\idxj_\idxi}} %
                                     {\outputsx_\idxi} %
                                     {\branchchs_\idxi} %
                    \end{array}} %
    \]
    and obtain configurations
    $\simulator{\configc^1},
     \ldots,
     \simulator{\configc^\idxi}$
    and a new frontier satisfying the required properties by replacing
    $\simulator{\configc}, \node, \outputs, \branchchs$
    with the sequence
    \[
        \simulator{\configc^1}, \node_1, \outputsx_1, \branchchs'_1,
        \ldots
        \simulator{\configc^\idxi},
        \node_\idxi,
        \outputsx_\idxi,
        \branchchs'_\idxi \ .
    \]

\item
    $\tree' = \treeap{\ech}{\tree_1, \ldots, \tree_\altnumof}$
    not from a rule
    $\pdrule{\control}
            {\cha}
            {\ech}
            {\rew{\cha}}
            {\control'_1, \ldots, \control'_\altnumof}$
    of $\pda$.

    In this case we have that $\tree'$ (subtree of the decomposition $\tree$) corresponds to a run tree $\pdrun_{\tree'}$ that can be decomposed into
    \begin{itemize}
    \item
        $\treeap{\configc}{\pdrun'}$ with
        $\configc' = \config{\control'}{\ap{\push{\pdaord}}{\stackw}}$
        at the root of $\pdrun'$ via a rule
        $\pdrule{\control}{\cha}{\ech}{\push{\pdaord}}{\control'}$
        and $\altnumof$ leaf nodes labelled
        $\configc_1, \ldots, \configc_\altnumof$
        respectively, and

    \item
        runs
        $\pdrun_1, \ldots, \pdrun_\altnumof$
        with the roots labelled
        $\configc'_1 = \config{\control'_1}{\stackw},
         \ldots,
         \configc'_\altnumof = \config{\control'_\altnumof}{\stackw}$
        where, for each $\idxi$, we have
        $\configc_\idxi \pdatran{\ech} \configc'_\idxi$
        via a $\pop{\pdaord}$ rule, and these are the first points $\stackw$ is seen along each branch, and

    \item
        the leaves of
        $\pdrun_1, \ldots, \pdrun_\altnumof$
        are the leaves of $\pdrun_{\tree'}$.
    \end{itemize}

    There are two cases depending on whether we send the HOPDA down the branch corresponding to the push.
    \begin{itemize}
    \item
        We separate
        $\branchchs = \branchchs'_0 \uplus
                      \branchchs'_1 \uplus
                      \cdots \uplus
                      \branchchs'_\idxi$
        such that
        $\branchchs'_\idxj$
        is the set of characters $\och$ that have their score derived from $\tree_\idxj$
        (i.e. the subtree with the higher score for $\och$ characters).
        Assume $\tree_1$ is amongst these subtrees (and will get $\branchchs'_0$).
        Let $\outputs'$ be the set of all $\och$ who had a $+1$ in their score derived from another subtree.
        Let
        $\config{\control'}{\ap{\push{\pdaord}}{\stackw}},
         \config{\controlx_1}{\stackw},
         \ldots
         \config{\controlx_\idxi}{\stackw}$
        be the configurations labelling the root nodes
        $\node_1, \ldots, \node_\idxi$
        of these subtrees, with the first belonging to $\tree_1$.
        Let
        $\config{\controly_1}{\stackw},
         \ldots,
         \config{\controly_\idxj}{\stackw}$
        be the configurations labelling the root nodes of the remaining subtrees.
        Since $\tree'$ has $\numof$ leaves that are followed in $\pdrun$ by pops to
        $\control_1, \ldots, \control_\numof$
        we can distribute these control states amongst the branches, obtaining
        \[
            \controlx^1_1, \ldots, \controlx^1_{\idxj_1},
            \ldots
            \controlx^\idxi_1, \ldots, \controlx^\idxi_{\idxj_\idxi}
            \controly^1_1, \ldots, \controly^1_{\idxi_1},
            \ldots
            \controly^\idxj_1, \ldots, \controly^\idxj_{\idxi_\idxj} \ .
        \]
        We can also distribute
        \[
            \outputs \cup \outputs' =
            \outputsx \cup
            \outputsx_1 \cup \cdots \cup \outputsx_\idxi
            \cup
            \outputsy_1 \cup \cdots \cup \outputsy_\idxj
        \]
        amongst the subtrees
        $\tree_1, \ldots, \tree_\altnumof$
        with $\outputsx$ belonging to $\tree_1$ since $\outputs$ can be distributed by assumption and we chose $\outputs'$ such that this can be done.

        From the existence of the runs
        $\pdrun_1, \ldots, \pdrun_\altnumof$
        we know
        $\simulator{\configc} \in
         \ap{\lang}{\canpopautbr{\control'_1}
                                {\controly^1_1,%
                                 \ldots,%
                                 \controly^1_{\idxi_1}}
                                {\outputsy_1}}$,
        \ldots,
        $\simulator{\configc} \in
         \ap{\lang}{\canpopautbr{\control'_\idxj}
                                {\controly^\idxj_1,%
                                 \ldots,%
                                 \controly^\idxj_{\idxi_\idxj}}
                                {\outputsy_\idxj}}$.

        Hence, we apply to $\simulator{\configc}$ the rule
        \[
            \bigpdrulet{\simcontrolbr{\control}
                                     {\control_1, \ldots, \control_\numof}
                                     {\outputs}
                                     {\branchchs}}
                       {\cha}
                       {\begin{array}{c} %
                           \canpopautbr{\controly_1} %
                                       {\controly^1_1, %
                                        \ldots, %
                                        \controly^1_{\idxi_1}} %
                                       {\outputsy_1} \\ %
                           \cap \cdots \cap \\ %
                           \canpopautbr{\controly_\idxj} %
                                       {\controly^\idxj_1, %
                                        \ldots, %
                                        \controly^\idxj_{\idxi_\idxj}} %
                                       {\outputsy_\idxj} %
                        \end{array}} %
                       {\outputs' \cap \branchchs} %
                       {\rew{\cha}} %
                       {\begin{array}{c} %
                           \simcontrolbr{\control'} %
                                        {\control'_1, \ldots, \control'_\altnumof} %
                                        {\outputsx} %
                                        {\branchchs_0}, \\ %
                           \simcontrolbr{\controlx_1} %
                                        {\controlx^1_1, %
                                         \ldots, %
                                         \controlx^1_{\idxj_1}} %
                                        {\outputsx_1} %
                                        {\branchchs_1}, \\ %
                            \ldots, \\ %
                            \simcontrolbr{\controlx_\idxi} %
                                         {\controlx^\idxi_1, %
                                          \ldots, %
                                          \controlx^\idxi_{\idxj_\idxi}} %
                                         {\outputsx_\idxi} %
                                         {\branchchs_\idxi} %
                        \end{array}} %
        \]
        and obtain configurations
        $\simulator{\configc^0},
         \simulator{\configc^1},
         \ldots,
         \simulator{\configc^\idxi}$
        and a new frontier satisfying the required properties by replacing
        $\simulator{\configc}, \node, \outputs, \branchchs$
        with the sequence
        \[
            \simulator{\configc^0}, \node_0, \outputsx, \branchchs'_0,
            \simulator{\configc^1}, \node_1, \outputsx_1, \branchchs'_1,
            \ldots
            \simulator{\configc^\idxi},
            \node_\idxi,
            \outputsx_\idxi,
            \branchchs'_\idxi \ .
        \]

    \item
        We separate
        $\branchchs = \branchchs'_1 \uplus \cdots \uplus \branchchs'_\idxi$
        such that
        $\branchchs'_\idxj$
        is the set of characters $\och$ that have their score derived from $\tree_\idxj$
        (i.e. the subtree with the higher score for $\och$ characters).
        Assume $\tree_1$ is not amongst these subtrees.
        Let $\outputs'$ be the set of all $\och$ who had a $+1$ in their score derived from another subtree.
        Let
        $\config{\controlx_1}{\stackw},
         \ldots
         \config{\controlx_\idxi}{\stackw}$
        be the configurations labelling the root nodes
        $\node_1, \ldots, \node_\idxi$
        of these subtrees.
        Let
        $\config{\control'}{\ap{\push{\pdaord}}{\stackw}},
         \config{\controly_1}{\stackw},
         \ldots,
         \config{\controly_\idxj}{\stackw}$
        be the configurations labelling the root nodes of the remaining subtrees, with the first belonging to $\tree_1$.
        Since $\tree'$ has $\numof$ leaves that are followed in $\pdrun$ by pops to
        $\control_1, \ldots, \control_\numof$
        we can distribute these control states amongst the branches, obtaining
        \[
            \controlx^1_1, \ldots, \controlx^1_{\idxj_1},
            \ldots
            \controlx^\idxi_1, \ldots, \controlx^\idxi_{\idxj_\idxi}
            \controly^1_1, \ldots, \controly^1_{\idxi_1},
            \ldots
            \controly^\idxj_1, \ldots, \controly^\idxj_{\idxi_\idxj} \ .
        \]
        We can also distribute
        \[
            \outputs \cup \outputs' =
            \outputsx_1 \cup \cdots \cup \outputsx_\idxi
            \cup
            \outputsy
            \cup
            \outputsy_1 \cup \cdots \cup \outputsy_\idxj
        \]
        amongst the subtrees
        $\tree_1, \ldots, \tree_\altnumof$
        with $\outputsy$ belonging to $\tree_1$ since $\outputs$ can be distributed by assumption and we chose $\outputs'$ such that this can be done.

        From the existence of $\pdrun'$ we know that
        $\simulator{\configc} \in
         \ap{\lang}{\canpopautbr{\control'}
                                {\control'_1,\ldots,\control'_\altnumof}
                                {\outputsy}}$
        and from the existence of
        $\pdrun_1, \ldots, \pdrun_\altnumof$
        we also know
        $\simulator{\configc} \in
         \ap{\lang}{\canpopautbr{\control'_1}
                                {\controly^1_1,%
                                 \ldots,%
                                 \controly^1_{\idxi_1}}
                                {\outputsy_1}}$,
        \ldots,
        $\simulator{\configc} \in
         \ap{\lang}{\canpopautbr{\control'_\idxj}
                                {\controly^\idxj_1,%
                                 \ldots,%
                                 \controly^\idxj_{\idxi_\idxj}}
                                {\outputsy_\idxj}}$.

        Hence, we apply to $\simulator{\configc}$ the rule
        \[
            \bigpdrulet{\simcontrolbr{\control}
                                     {\control_1, \ldots, \control_\numof}
                                     {\outputs}
                                     {\branchchs}}
                       {\cha}
                       {\begin{array}{c} %
                           \canpopautbr{\control'} %
                                       {\control'_1,\ldots,\control'_\altnumof} %
                                       {\outputsy} \ \cap \\ %
                           \canpopautbr{\controly_1} %
                                       {\controly^1_1,%
                                        \ldots,%
                                        \controly^1_{\idxi_1}} %
                                       {\outputsy_1} \\ %
                           \cap \cdots \cap \\ %
                           \canpopautbr{\controly_\idxj}%
                                       {\controly^\idxj_1,%
                                        \ldots,%
                                        \controly^\idxj_{\idxi_\idxj}} %
                                       {\outputsy_\idxj} %
                        \end{array}} %
                       {\outputs' \cap \branchchs} %
                       {\rew{\cha}} %
                       {\begin{array}{c} %
                           \simcontrolbr{\controlx_1} %
                                        {\controlx^1_1, %
                                         \ldots, %
                                         \controlx^1_{\idxj_1}} %
                                        {\outputsx_1} %
                                        {\branchchs_1}, \\ %
                            \ldots, \\ %
                            \simcontrolbr{\controlx_\idxi} %
                                         {\controlx^\idxi_1, %
                                          \ldots, %
                                          \controlx^\idxi_{\idxj_\idxi}} %
                                         {\outputsx_\idxi} %
                                         {\branchchs_\idxi} %
                        \end{array}} %
        \]
        and obtain configurations
        $\simulator{\configc^1},
         \ldots,
         \simulator{\configc^\idxi}$
        and a new frontier satisfying the required properties by replacing
        $\simulator{\configc}, \node, \outputs, \branchchs$
        with the sequence
        \[
            \simulator{\configc^1}, \node_1, \outputsx_1, \branchchs'_1,
            \ldots
            \simulator{\configc^\idxi},
            \node_\idxi,
            \outputsx_\idxi,
            \branchchs'_\idxi \ .
        \]
    \end{itemize}
\end{itemize}

Finally, we reach a leaf node $\node$ with a run outputting the required number of $\och$s.
We need to show that the run constructed is accepting.
From the tree decomposition, we know that the corresponding node of $\pdrun$ is immediately followed by a $\pop{\pdaord}$.
Thus, from our conditions on the frontier, we must have
$\numof = 1$
and
$\outputs = \emptyset$.
We also have a rule
$\pdrule{\control}{\cha}{\ech}{\pop{\pdaord}}{\control_1}$
and therefore
$\pdrulet{\simcontrolbr{\control}
                       {\control_1}
                       {\emptyset}
                       {\branchchs}}
         {\cha}
         {\auttrue}
         {\ech}
         {\rew{\cha}}
         {\finalcontrol}$
with which we can complete the run of $\simulator{\pda}$ as required.

    \end{proof}

    \subsection{The Other Direction}

    Finally, we need to show that each accepting run tree of $\simulator{\pda}$ gives rise to an accepting run tree of $\pda$ containing at least as many of each output character $\och$.
}

\label{sec:sim-to-pda-sim}
\begin{lemma}[$\simulator{\pda}$ to $\pda$]
\label{lem:sim-to-pda-sim}
    We have
    \acmeasychair{
        \[
            \unbounded{\och_1, \ldots, \och_\numchs}{\simulator{\pda}}
            \Rightarrow
            \unbounded{\och_1, \ldots, \och_\numchs}{\pda} \ .
        \]
    }{
        $\unbounded{\och_1, \ldots, \och_\numchs}{\simulator{\pda}}$
        implies
        $\unbounded{\och_1, \ldots, \och_\numchs}{\pda}$.
    }
\end{lemma}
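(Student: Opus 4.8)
The plan is to mirror the reverse-direction argument of the single-character case (Lemma~\ref{lem:sim-to-pda}) together with the forward-direction frontier construction of \reflemma{lem:branch-runs-sim}, but now reconstructing an entire run \emph{tree} of $\pda$ from a given accepting run tree $\simulator{\pdrun}$ of $\simulator{\pda}$. Because $\simulator{\pda}$ is itself branching, the pair of linear runs $\pdrun^\idxi_1,\pdrun^\idxi_2$ used in the single-character proof is replaced by a partially built $\pda$ run tree with several open leaves, assembled by structural induction over $\simulator{\pdrun}$. The tests that guard the rules of $\simulator{\pda}$ will supply, via the definition of the automata $\canpopautbr{\control}{\control_1,\ldots,\control_\numof}{\outputs}$, the witness subtrees of $\pda$ that fill in the branches $\simulator{\pda}$ chose not to follow explicitly.

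Concretely, I would prove the following \emph{reconstruction invariant} by induction on the subtrees of $\simulator{\pdrun}$: for every node of $\simulator{\pdrun}$ labelled $\config{\simcontrolbr{\control}{\control_1,\ldots,\control_\numof}{\outputs}{\branchchs}}{\stackw}$ with rooted subtree $\simulator{\tree}$, and for every stack context $\vec{\stackw}$, there is a run tree of $\pda$ from $\config{\control}{\kstack{\pdaord}{\stackw\,\vec{\stackw}}}$ whose $\numof$ leaves reach $\config{\control_1}{\kstack{\pdaord}{\vec{\stackw}}},\ldots,\config{\control_\numof}{\kstack{\pdaord}{\vec{\stackw}}}$ (i.e. the top copy $\stackw$ is popped on every branch), that outputs at least one of each $\och\in\outputs$, and that outputs at least $\chcount{\och}{\simulator{\tree}}$ occurrences of each $\och\in\branchchs$. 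Since along any subtree of $\simulator{\pdrun}$ the only characters ever emitted lie in $\branchchs$ (the $\simsimrules$ rules emit $\set{\ochb}\cap\branchchs$ and the branching rules emit $\outputs'\cap\branchchs$), the last clause gives character-wise domination of the $\simulator{\pda}$ output. Applied at the root, where $\branchchs=\set{\och_1,\ldots,\och_\numchs}$ after the initial rule of $\siminitrules$ has been consumed, it yields an accepting run tree of $\pda$ outputting at least as many of each $\och_\idxi$ as $\simulator{\pdrun}$, which is exactly what $\unbounded{\och_1,\ldots,\och_\numchs}{\pda}$ requires.

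The induction cases follow the rule sets of \refdefinition{def:sim-pda-sim}. For $\simsimrules$ I apply the matching non-branching rule of $\pda$ and recurse on the unique child, as in the single-character proof; for $\simfinrules$ (a leaf, where necessarily $\numof=1$ and $\outputs=\emptyset$) the $\pda$ tree is the single $\pop{\pdaord}$ transition. For $\simbrrules$, $\simpushrules$ and $\simpoprules$ I combine two ingredients: each test $\canpopautbr{\controly}{\ldots}{\outputsy}$ guarding the applied rule holds of $\stackw$, so it supplies a witness run tree of $\pda$ over $\kstack{\pdaord}{\stackw}$ reaching the prescribed pop targets and outputting at least one of each character in $\outputsy$; and this witness, together with the subtrees obtained inductively for the followed (``$\controlx$'') branches, is lifted to the context $\vec{\stackw}$ and glued beneath the appropriate $\pda$ rule (a $\push{\pdaord}$ for $\simpushrules$/$\simpoprules$, a $\rew{\cha}$-branching for $\simbrrules$), using that no branch touches the stack below its own top copy until its final $\pop{\pdaord}$. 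The permutation condition ($\control_1,\ldots,\control_\numof$ a permutation of the collected $\controlx^\bullet,\controly^\bullet$) guarantees the glued leaves are precisely the required pop targets, while the partition $\outputs\cup\outputs'=\bigcup\outputsx\cup\bigcup\outputsy$ together with $\branchchs=\bigcup\branchchs_k$ dictates how the output and counting obligations split among the children.

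The main obstacle will be the output bookkeeping in the counting clause at the branching rules. Each $\och\in\outputs'\cap\branchchs$ that $\simulator{\pda}$ emits at such a transition must be matched by a genuinely distinct $\och$ in the reconstructed $\pda$ tree; I expect to discharge this by observing that $\outputs'$ is covered by the test-guaranteed outputs $\outputsy$ of the skipped subtrees, that the responsibilities $\branchchs$ are partitioned disjointly among the followed children, and that the followed subtrees, the push branch, and the tested subtrees occupy pairwise disjoint regions of the assembled $\pda$ tree, so no output position is charged twice. Checking that these disjointness and partition facts line up exactly with the permutation and union side-conditions of the rules, uniformly across $\simbrrules$, $\simpushrules$ and $\simpoprules$, is the technically delicate part; everything else is a direct generalisation of Lemma~\ref{lem:sim-to-pda}.
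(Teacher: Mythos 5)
Your proposal is correct and follows essentially the same route as the paper's proof: the paper likewise reconstructs a run tree of $\pda$ from an accepting run tree of $\simulator{\pda}$ by case analysis on the rule sets of \refdefinition{def:sim-pda-sim}, plugging in the witness subtrees supplied by the tests $\canpopautbr{\control}{\control_1,\ldots,\control_\numof}{\outputs}$ for the skipped branches, and using the permutation and union side-conditions to distribute the pop targets and output obligations exactly as you describe. The only difference is organisational --- the paper maintains an iteratively extended frontier of ``holes'' in a globally assembled $\pda$ tree (with the accepting completions below the pop targets already attached) rather than a compositional structural induction with an explicit stack context --- and your counting invariant (at least $\chcount{\och}{\simulator{\tree}}$ occurrences for $\och\in\branchchs$ and at least one for each $\och\in\outputs$, with $\outputs\cap\branchchs=\emptyset$ preventing double counting) is a faithful formalisation of the bookkeeping the paper carries out more informally.
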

\acmeasychair{}{
    \begin{proof}
        Take an accepting run tree $\simulator{\pdrun}$ of $\simulator{\pda}$.
We show that there exists a corresponding run tree $\pdrun$ of $\pda$ outputting at least as many $\och$s.

We maintain a frontier
\[
    \configc_1, \ldots, \configc_\fronlen
\]
of $\simulator{\pdrun}$ and a run $\pdrun$ of $\pda$ ``with holes'' such that
\begin{itemize}
\item
    there are $\fronlen$ nodes of $\pdrun$ labelled by
    $\configc_1, \ldots, \configc_\fronlen$
    respectively (these are the holes), and

\item
    each of these holes labelled $\configc$ is the only child of a parent node labelled $\configc'$ of $\pda$, and

\item
    for each corresponding pair $\configc$ and $\configc'$ we have
    \begin{itemize}
    \item
        $\configc' = \config{\control}{\stackw}$,
        and

    \item
        $\configc =
         \config{\simcontrolbr{\control}
                              {\control_1, \ldots, \control_\numof}
                              {\outputs}
                              {\branchchs}}
                {\ap{\topop{\pdaord}}{\stackw}}$,
        and

    \item
        the node labelled by $\configc$ has $\numof$ children with the $\idxi$th child being labelled
        $\config{\control_\idxi}{\ap{\pop{\pdaord}}{\stackw}}$,
        and

    \item
        all leaf nodes of $\pdrun$ are accepting, and

    \item
        for each
        $\och \in \set{\och_1, \ldots, \och_\numchs}$
        the number of $\och$ output by run tree of $\pda$ is at least as many as on the branch of $\simulator{\pda}$ to the configuration with
        $\och \in \branchchs$
        less 1 if
        $\och \in \outputs$.
    \end{itemize}
\end{itemize}

Initially after a rule from $\siminitrules$ we have the frontier
$\configc =
 \config{\simcontrolbr{\control}
                      {\uniquefinalcontrol, \ldots, \uniquefinalcontrol}
                      {\emptyset}
                      {\set{\och_1, \ldots, \och_\numchs}}}
        {\stackw}$
with corresponding run $\pdrun$ of $\pda$ being
\[
    \treeap{\config{\control}{\kstack{\pdaord}{\stackw}}}{
        \treeap{\configc}{
            \config{\uniquefinalcontrol}{\kstack{\pdaord}{}},
            \ldots,
            \config{\uniquefinalcontrol}{\kstack{\pdaord}{}}
        }
    } \ .
\]

Pick a configuration
$\simulator{\configc} =
 \config{\simcontrolbr{\control}
                      {\control_1, \ldots, \control_\numof}
                      {\outputs}
                      {\branchchs}}
        {\ap{\topop{\pdaord}}{\stackw}}$
of the frontier that is not a leaf of $\simulator{\pdrun}$ and its corresponding node in $\pdrun$ with parent labelled
$\configc = \config{\control}{\stackw}$.
Let $\simulator{\pdrun'}$ be the subtree of $\simulator{\pda}$ rooted at this configuration.

We show how to extend the frontier closer to the leaves of $\simulator{\pdrun}$.
There are several cases depending on the transition of $\simulator{\pda}$ used to exit our chosen node.
\begin{itemize}
\item
    $\simulator{\pdrun'} =
     \treeap{\simulator{\configc}}{\simulator{\pdrun^1}}$
    and the rule applied is of the form
    \[
         \bigpdrulet{\simcontrolbr{\control}
                                  {\control_1, \ldots, \control_\numof}
                                  {\outputs}
                                  {\branchchs}}
                    {\cha}
                    {\auttrue}
                    {\set{\ochb} \cap \branchchs}
                    {\op}
                    {\simcontrolbr{\control'}
                                  {\control_1, \ldots, \control_\numof}
                                  {\outputs \setminus \set{\ochb}}
                                  {\branchchs}} \ .
    \]
    Let $\simulator{\configc'}$ be the configuration labelling the root of
    $\simulator{\pdrun^1}$.
    We have
    $\pdrule{\control}
            {\cha}
            {\ochb}
            {\op}
            {\control'} \in \rules$
    and
    $\op \notin \set{\push{\pdaord}, \pop{\pdaord}}$.
    We can apply
    $\configc \pdatran{\ochb} \configc'$.
    Let $\node$ be the node labelled $\simulator{\configc}$.
    We insert above $\node$ a node labelled $\configc'$.
    Then we change the label of $\node$  to $\simulator{\configc'}$.
    We keep the same children of $\node$.
    This extended run maintains all properties as required.

\item
    $\simulator{\pdrun'} =
     \treeap{\simulator{\configc}}
            {\simulator{\pdrun^1},
             \ldots,
             \simulator{\pdrun^\idxi}}$
    via a rule
    \[
        \bigpdrulet{\simcontrolbr{\control}
                                 {\control_1, \ldots, \control_\numof}
                                 {\outputs}
                                 {\branchchs}}
                   {\cha}
                   {\begin{array}{c} %
                       \canpopautbr{\controly_1} %
                                   {\controly^1_1,
                                    \ldots,
                                    \controly^1_{\idxi_1}} %
                                   {\outputsy_1} \\ %
                       \cap \cdots \cap \\ %
                       \canpopautbr{\controly_\idxj} %
                                   {\controly^\idxj_1,
                                    \ldots,
                                    \controly^\idxj_{\idxi_\idxj}} %
                                   {\outputsy_\idxj} %
                    \end{array}} %
                   {\outputs' \cap \branchchs} %
                   {\rew{\cha}} %
                   {\begin{array}{c} %
                       \simcontrolbr{\controlx_1} %
                                    {\controlx^1_1, %
                                     \ldots, %
                                     \controlx^1_{\idxj_1}} %
                                    {\outputsx_1} %
                                    {\branchchs_1}, \\ %
                        \ldots, \\ %
                        \simcontrolbr{\controlx_\idxi} %
                                     {\controlx^\idxi_1, %
                                      \ldots, %
                                      \controlx^\idxi_{\idxj_\idxi}} %
                                     {\outputsx_\idxi} %
                                     {\branchchs_\idxi} %
                    \end{array}} %
    \]
    derived from some rule
    \[
        \pdrule{\control}
               {\cha}
               {\ech}
               {\rew{\cha}}
               {\control'_1, \ldots, \control'_\altnumof} \in \rules \ .
    \]
    In this case, we apply the above rule to $\pdrun$ which means taking the node $\node$ labelled $\configc$ and replacing its ``hole'' child with $\altnumof$ new children.
    We need to rebuild the rest of the tree the from these nodes.
    These nodes have configurations
    $\config{\control'_1}{\stackw},
     \ldots,
     \config{\control'_\altnumof}{\stackw}$.
    These control states are distributed between
    $\controlx_1, \ldots, \controlx_\idxi$
    and
    $\controly_1, \ldots, \controly_\idxj$.
    Consider $\controly_1$ (the other
    $\controly_2, \ldots, \controly_\idxj$
    are identical).
    We have from the respective passed test that
    $\config{\controly_1}{\stackw}$
    has a run where the first popping of the $\topop{\pdaord}$ stack
    leads to configurations
    $\config{\controly^1_1}{\stackw},
     \ldots,
     \config{\controly^1_{\idxi_1}}{\stackw}$.
    We insert this run underneath the node corresponding to the $\controly_1$.
    Since
    $\controly^1_1, \ldots, \controly^1_{\idxi_1}$
    appear amongst
    $\control_1, \ldots, \control_\numof$
    we append the subtrees that appeared as the relevant children of the node labelled $\simulator{\configc}$
    to complete these branches.
    The remaining subtrees corresponding to
    $\control_1, \ldots, \control_\numof$
    are distributed amongst
    $\controlx^1_1, \ldots, \controlx^1_{\idxj_1},
     \ldots,
     \controlx^\idxi_1, \ldots, \controlx^\idxi_{\idxj_\idxi}$.
    Consider $\controlx_1$ (the others are identical arguments), we have a new child labelled by
    $\config{\simcontrolbr{\controlx_1}
                          {\controlx^1_1, \ldots, \controlx^1_{\idxj_1}}
                          {\outputsx_1}
                          {\branchchs_1}}
            {\ap{\topop{\pdaord}}{\stackw}}$.
    We take the subrees distributed to
    $\controlx^1_1, \ldots, \controlx^1_{\idxj_1}$
    as children of this new child to satisfy the requirements.

    The new frontier replaces
    $\simulator{\configc}$
    with
    \[
        \begin{array}{c} %
            \config{\simcontrolbr{\controlx_1} %
                                 {\controlx^1_1, \ldots, \controlx^1_{\idxj_1}} %
                                 {\outputsx_1} %
                                 {\branchchs_1}} %
                   {\ap{\topop{\pdaord}}{\stackw}}, %
            \\ %
            \ldots, %
            \\ %
            \config{\simcontrolbr{\controlx_\idxj} %
                                 {\controlx^\idxi_1, %
                                  \ldots, %
                                  \controlx^\idxi_{\idxj_\idxi}} %
                                 {\outputsx_\idxj} %
                                 {\branchchs_\idxj}} %
                   {\ap{\topop{\pdaord}}{\stackw}} %
        \end{array} %
    \]
    which satisfies all properties as needed.

\item
    $\simulator{\pdrun'} =
     \treeap{\simulator{\configc}}
            {\simulator{\pdrun^1},
             \ldots,
             \simulator{\pdrun^\idxi}}$
    via a rule
    \[
        \bigpdrulet{\simcontrolbr{\control}
                                 {\control_1, \ldots, \control_\numof}
                                 {\outputs}
                                 {\branchchs}}
                   {\cha}
                   {\begin{array}{c} %
                       \canpopautbr{\controly_1} %
                                   {\controly^1_1,
                                    \ldots,
                                    \controly^1_{\idxi_1}} %
                                   {\outputsy_1} \\ %
                       \cap \cdots \cap \\ %
                       \canpopautbr{\controly_\idxj} %
                                   {\controly^\idxj_1,
                                    \ldots,
                                    \controly^\idxj_{\idxi_\idxj}} %
                                   {\outputsy_\idxj} %
                    \end{array}} %
                   {\outputs' \cap \branchchs} %
                   {\rew{\cha}} %
                   {\begin{array}{c} %
                       \simcontrolbr{\control'} %
                                    {\control'_1, \ldots, \control'_\altnumof} %
                                    {\outputsx} %
                                    {\branchchs_0}, \\ %
                       \simcontrolbr{\controlx_1} %
                                    {\controlx^1_1, %
                                     \ldots, %
                                     \controlx^1_{\idxj_1}} %
                                    {\outputsx_1} %
                                    {\branchchs_1}, \\ %
                        \ldots, \\ %
                        \simcontrolbr{\controlx_\idxi} %
                                     {\controlx^\idxi_1, %
                                      \ldots, %
                                      \controlx^\idxi_{\idxj_\idxi}} %
                                     {\outputsx_\idxi} %
                                     {\branchchs_\idxi} %
                    \end{array}} %
    \]
    derived from some rule
    \[
        \pdrule{\control}
               {\cha}
               {\ech}
               {\push{\pdaord}}
               {\control'}
    \]
    In this case, we apply the above rule to $\pdrun$.
    This means replacing the node labelled $\simulator{\configc}$ with one labelled
    $\config{\control'}{\ap{\push{\pdaord}}{\stackw}}$.
    This new node has a new child node with the label
    \[
        \config{\simcontrolbr{\control'}
                             {\control'_1, \ldots, \control'_\altnumof}
                             {\outputsx}
                             {\branchchs_0}}
               {\ap{\topop{\pdaord}}{\stackw}} \ .
    \]
    We need to add $\altnumof$ children to this new ``hole'' node.

    These nodes have configurations
    $\config{\control'_1}{\stackw},
     \ldots,
     \config{\control'_\altnumof}{\stackw}$
    (since $\stackw = \ap{\pop{\pdaord}}{\ap{\push{\pdaord}}{\stackw}}$).
    These control states are distributed between
    $\controlx_1, \ldots, \controlx_\idxi$
    and
    $\controly_1, \ldots, \controly_\idxj$.
    Consider $\controly_1$ (the other
    $\controly_2, \ldots, \controly_\idxj$
    are identical).
    We have from the passed test that
    $\config{\controly_1}{\stackw}$
    has a run where the first popping of the $\topop{\pdaord}$ stack
    leads to configurations
    $\config{\controly^1_1}{\ap{\pop{\pdaord}}{\stackw}},
     \ldots,
     \config{\controly^1_{\idxi_1}}{\ap{\pop{\pdaord}}{\stackw}}$.
    We append this run tree as a child of the node corresponding to $\controly_1$.
    Since
    $\controly^1_1, \ldots, \controly^1_{\idxi_1}$
    appear amongst
    $\control_1, \ldots, \control_\numof$
    we append the relevant subtrees we had already constructed for these nodes to complete these branches with the required properties.

    Now consider $\controlx_1$ (the other cases are symmetric).  In this case we append a node labelled
    $\config{\simcontrolbr{\controlx_1}
                          {\controlx^1_1, \ldots, \controlx^1_{\idxj_1}}
                          {\outputsx_1}
                          {\branchchs_1}}
            {\ap{\topop{\pdaord}}{\stackw}}$
    as a child of the node corresponding to $\controlx_1$.
    Since
    $\controlx^1_1, \ldots, \controlx^1_{\idxj_1}$
    appear amongst
    $\control_1, \ldots, \control_\numof$
    we append the relevant subtrees we had already constructed for these nodes to complete these branches with the required properties.

    The new frontier replaces
    $\simulator{\configc}$
    with
    \[
        \config{\simcontrolbr{\control'}
                             {\control'_1, \ldots, \control'_\altnumof}
                             {\outputsx}
                             {\branchchs_0}}
               {\ap{\topop{\pdaord}}{\stackw}}
    \]
    and
    \[
        \begin{array}{c} %
            \config{\simcontrolbr{\controlx_1} %
                                 {\controlx^1_1, \ldots, \controlx^1_{\idxj_1}} %
                                 {\outputsx_1} %
                                 {\branchchs_1}} %
                   {\ap{\topop{\pdaord}}{\stackw}}, %
            \\ %
            \ldots, %
            \\ %
            \config{\simcontrolbr{\controlx_\idxj} %
                                 {\controlx^\idxi_1, %
                                  \ldots, %
                                  \controlx^\idxi_{\idxj_\idxi}} %
                                 {\outputsx_\idxi} %
                                 {\branchchs_\idxi}} %
                   {\ap{\topop{\pdaord}}{\stackw}} %
        \end{array} %
    \]
    which satisfies all the required properties.

\item
    $\simulator{\pdrun'} =
     \treeap{\simulator{\configc}}
            {\simulator{\pdrun^1},
             \ldots,
             \simulator{\pdrun^\idxi}}$
    via a rule
    \[
        \bigpdrulet{\simcontrolbr{\control}
                                 {\control_1, \ldots, \control_\numof}
                                 {\outputs}
                                 {\branchchs}}
                   {\cha}
                   {\begin{array}{c} %
                       \canpopautbr{\control'} %
                                   {\control'_1,\ldots,\control'_\altnumof} %
                                   {\outputsy}\ \cap \\ %
                       \canpopautbr{\controly_1} %
                                   {\controly^1_1,
                                    \ldots,
                                    \controly^1_{\idxi_1}} %
                                   {\outputsy_1} \\ %
                       \cap \cdots \cap \\ %
                       \canpopautbr{\controly_\idxj} %
                                   {\controly^\idxj_1,
                                    \ldots,
                                    \controly^\idxj_{\idxi_\idxj}} %
                                   {\outputsy_\idxj} %
                    \end{array}} %
                   {\outputs' \cap \branchchs} %
                   {\rew{\cha}} %
                   {\begin{array}{c} %
                       \simcontrolbr{\controlx_1} %
                                    {\controlx^1_1, %
                                     \ldots, %
                                     \controlx^1_{\idxj_1}} %
                                    {\outputsx_1} %
                                    {\branchchs_1}, \\ %
                        \ldots, \\ %
                        \simcontrolbr{\controlx_\idxi} %
                                     {\controlx^\idxi_1, %
                                      \ldots, %
                                      \controlx^\idxi_{\idxj_\idxi}} %
                                     {\outputsx_\idxi} %
                                     {\branchchs_\idxi} %
                    \end{array}} %
    \]
    derived from some rule
    \[
        \pdrule{\control}
               {\cha}
               {\ech}
               {\push{\pdaord}}
               {\control'}
    \]
    In this case, we again apply the above rule to $\pdrun$.
    This means replacing the node labelled $\simulator{\configc}$ with one labelled
    $\config{\control'}{\ap{\push{\pdaord}}{\stackw}}$.
    Since we know the test
    $\canpopautbr{\control'}
                 {\control'_1,\ldots,\control'_\altnumof}
                 {\controly}$
    passed we have a run popping the newly pushed stack to controls
    $\control'_1, \ldots, \control'_\altnumof$.
    We set this run tree as the only child of the node whose label we replaced.
    This new tree has $\altnumof$ leaves which we need to complete.

    These leaf nodes are completed using the same argument as the previous case.
    That is, they are labelled with configurations
    $\config{\control'_1}{\stackw},
     \ldots,
     \config{\control'_\altnumof}{\stackw}$.
    These control states are distributed between
    $\controlx_1, \ldots, \controlx_\idxi$
    and
    $\controly_1, \ldots, \controly_\idxj$.
    Consider $\controly_1$ (the other
    $\controly_2, \ldots, \controly_\idxj$
    are identical).
    We have from the passed test that
    $\config{\controly_1}{\stackw}$
    has a run where the first popping of the $\topop{\pdaord}$ stack
    leads to configurations
    $\config{\controly^1_1}{\ap{\pop{\pdaord}}{\stackw}},
     \ldots,
     \config{\controly^1_{\idxi_1}}{\ap{\pop{\pdaord}}{\stackw}}$.
    We append this run tree as a child of the node corresponding to $\controly_1$.
    Since
    $\controly^1_1, \ldots, \controly^1_{\idxi_1}$
    appear amongst
    $\control_1, \ldots, \control_\numof$
    we append the relevant subtrees we had already constructed for these nodes to complete these branches with the required properties.

    Now consider $\controlx_1$ (the other cases are symmetric).  In this case we append a node labelled
    $\config{\simcontrolbr{\controlx_1}
                          {\controlx^1_1, \ldots, \controlx^1_{\idxj_1}}
                          {\outputsx_1}
                          {\branchchs_1}}
            {\ap{\topop{\pdaord}}{\stackw}}$
    as a child of the node corresponding to $\controlx_1$.
    Since
    $\controlx^1_1, \ldots, \controlx^1_{\idxj_1}$
    appear amongst
    $\control_1, \ldots, \control_\numof$
    we append the relevant subtrees we had already constructed for these nodes to complete these branches with the required properties.

    The new frontier replaces
    $\simulator{\configc}$
    with
    \[
        \config{\simcontrolbr{\control'}
                             {\control'_1, \ldots, \control'_\altnumof}
                             {\outputsx}
                             {\branchchs_0}}
               {\ap{\topop{\pdaord}}{\stackw}}
    \]
    and
    \[
        \begin{array}{c} %
            \config{\simcontrolbr{\controlx_1} %
                                 {\controlx^1_1, \ldots, \controlx^1_{\idxj_1}} %
                                 {\outputsx_1} %
                                 {\branchchs_1}} %
                   {\ap{\topop{\pdaord}}{\stackw}}, %
            \\ %
            \ldots, %
            \\ %
            \config{\simcontrolbr{\controlx_\idxj} %
                                 {\controlx^\idxi_1, %
                                  \ldots, %
                                  \controlx^\idxi_{\idxj_\idxi}} %
                                 {\outputsx_\idxi} %
                                 {\branchchs_\idxi}} %
                   {\ap{\topop{\pdaord}}{\stackw}} %
        \end{array} %
    \]
    which satisfies all the required properties.

\item
    $\simulator{\pdrun'} =
     \treeap{\simulator{\configc}}
            {\config{\finalcontrol}{\stackw}}$.

    In this case $\configc$ has the form
    \[
        \config{\simcontrolbr{\control}
                             {\control'}
                             {\emptyset}
                             {\branchchs}}
               {\ap{\topop{\pdaord}}{\stackw}}
    \]
    and there is a rule
    \[
        \pdrule{\control}{\cha}{\ech}{\pop{\pdaord}}{\control'} \ .
    \]
    We can remove the hole from $\pdrun$ by applying this rule.
    That is, we remove the hole node, setting its parent to have its (only) child as its child.
    This is possible since by our conditions the child has the label
    $\config{\control'}{\ap{\pop{\pdaord}}{\stackw}}$.
    We remove $\simulator{\configc}$ from the frontier.
\end{itemize}
Thus, the frontier moves towards the leaves of the tree and finally is empty.
At this point we have an accepting run of $\pda$ as required.
To see that the run outputs enough of each character, one needs to observe that at each stage the tests and $\outputs$ component of the control state ensured at least one character output for each that appeared in some $\outputs'$ labelling a transition.
Then, for characters output along branches followed were reproduced faithfully.

    \end{proof}
}


\section{Conclusions}

We have shown, using a recent result by Zetzsche, that the downward closures of languages defined by HOPDA are computable.
We believe this to be a useful foundational result upon which new analyses may be based.
Our result already has several immediate consequences, including
\changed[mh]{
    separation by%
}
piecewise testability and asynchronous parameterised systems.

\changed[mh]{
    Regarding the complexity of the approach.
    We are unaware of any complexity bounds implied by Zetzsche's techniques.
    Due to the complexity of the reachability problem for HOPDA, the test automata may be a tower of exponentials of height $\pdaord$ for HOPDA of order $\pdaord$.
    These test automata are built into the system before proceeding to reduce to order $(\pdaord-1)$.
    Thus, we may reach a tower of exponentials of height $O(\pdaord^2)$.
}

A natural next step is to consider collapsible pushdown systems, which are equivalent to
recursion schemes (without the safety constraint).
However, it is not currently clear how to generalise our techniques due to the non-local behaviour introduced by collapse.
We may also try to adapt our techniques to a higher-order version of BS-automata~\cite{Bojanczyk:2010}, which may be used, e.g., to check boundedness of resource usage for higher-order programs.

    \paragraph{Acknowledgements}

    We thank Georg Zetzsche for keeping us up to date with his work, Jason Crampton for knowing about logarithms when they were most required, and Chris Broadbent for discussions.
This work was supported by the Engineering and Physical Sciences Research
Council [EP/K009907/1 and EP/M023974/1].


\begin{thebibliography}{10}

\bibitem{Aehlig:2005}
K.~Aehlig, J.~G. de~Miranda, and C.-H.~L. Ong.
\newblock Safety is not a restriction at level 2 for string languages.
\newblock In {\em Foundations of Software Science and Computational Structures,
  8th International Conference, {FOSSACS} 2005, Held as Part of the Joint
  European Conferences on Theory and Practice of Software, {ETAPS} 2005,
  Edinburgh, UK, April 4-8, 2005, Proceedings}, pages 490--504, 2005.

\bibitem{Aho:1968}
A.~V. Aho.
\newblock Indexed grammars - an extension of context-free grammars.
\newblock {\em J. {ACM}}, 15(4):647--671, 1968.

\bibitem{Bojanczyk:2010}
Mikolaj Bojanczyk.
\newblock Beyond omega-regular languages.
\newblock In {\em 27th International Symposium on Theoretical Aspects of
  Computer Science, {STACS} 2010, March 4-6, 2010, Nancy, France}, pages
  11--16, 2010.

\bibitem{Bouajjani:2005}
Ahmed Bouajjani, Markus M{\"u}ller-Olm, and Tayssir Touili.
\newblock Regular symbolic analysis of dynamic networks of pushdown systems.
\newblock In {\em CONCUR}, pages 473--487, 2005.

\bibitem{Broadbent:2013}
C.~H. Broadbent, A.~Carayol, M.~Hague, and O.~Serre.
\newblock C-shore: a collapsible approach to higher-order verification.
\newblock In {\em ICFP}, pages 13--24, 2013.

\bibitem{Broadbent:2013b}
C.~H. Broadbent and N.~Kobayashi.
\newblock Saturation-based model checking of higher-order recursion schemes.
\newblock In {\em CSL}, pages 129--148, 2013.

\bibitem{Broadbent:2012}
Christopher~H. Broadbent, Arnaud Carayol, Matthew Hague, and Olivier Serre.
\newblock A saturation method for collapsible pushdown systems.
\newblock In {\em Automata, Languages, and Programming - 39th International
  Colloquium, {ICALP} 2012, Warwick, UK, July 9-13, 2012, Proceedings, Part
  {II}}, pages 165--176, 2012.

\bibitem{Broadbent:2010}
Christopher~H. Broadbent, Arnaud Carayol, C.-H.~Luke Ong, and Olivier Serre.
\newblock Recursion schemes and logical reflection.
\newblock In {\em Proceedings of the 25th Annual {IEEE} Symposium on Logic in
  Computer Science, {LICS} 2010, 11-14 July 2010, Edinburgh, United Kingdom},
  pages 120--129, 2010.

\bibitem{Courcelle:1991}
B.~Courcelle.
\newblock On constructing obstruction sets of words.
\newblock {\em Bulletin of the {EATCS}}, 44:178--186, 1991.

\bibitem{Cyriac:2012}
A.~Cyriac, P.~Gastin, and K.~N. Kumar.
\newblock {MSO} decidability of multi-pushdown systems via split-width.
\newblock In {\em CONCUR}, pages 547--561, 2012.

\bibitem{Czerwinski:2014}
W.~Czerwi\'nski and W.~Martens.
\newblock A note on decidable separability by piecewise testable languages.
\newblock {\em CoRR}, abs/1410.1042, 2014.

\bibitem{Esparza:2003}
J.~Esparza, A.~Kucera, and S.~Schwoon.
\newblock Model checking {LTL} with regular valuations for pushdown systems.
\newblock {\em Inf. Comput.}, 186(2):355--376, 2003.

\bibitem{Esparza:2011}
Javier Esparza and Pierre Ganty.
\newblock Complexity of pattern-based verification for multithreaded programs.
\newblock In {\em POPL}, pages 499--510, 2011.

\bibitem{Esparza:2000}
Javier Esparza and Andreas Podelski.
\newblock Efficient algorithms for pre* and post* on interprocedural parallel
  flow graphs.
\newblock In {\em POPL}, pages 1--11, 2000.

\bibitem{Hague:2013}
M.~Hague.
\newblock Saturation of concurrent collapsible pushdown systems.
\newblock In {\em FSTTCS}, pages 313--325, 2013.

\bibitem{Hague:2014}
M.~Hague.
\newblock Senescent ground tree rewrite systems.
\newblock In {\em Joint Meeting of the Twenty-Third {EACSL} Annual Conference
  on Computer Science Logic {(CSL)} and the Twenty-Ninth Annual {ACM/IEEE}
  Symposium on Logic in Computer Science (LICS), {CSL-LICS} '14, Vienna,
  Austria, July 14 - 18, 2014}, pages 48:1--48:10, 2014.

\bibitem{Hague:2012}
M.~Hague and A.~W. Lin.
\newblock Synchronisation- and reversal-bounded analysis of multithreaded
  programs with counters.
\newblock In {\em Computer Aided Verification - 24th International Conference,
  {CAV} 2012, Berkeley, CA, USA, July 7-13, 2012 Proceedings}, pages 260--276,
  2012.

\bibitem{Hague:2008}
M.~Hague, A.~S. Murawski, C.-H.~Luke Ong, and O.~Serre.
\newblock Collapsible pushdown automata and recursion schemes.
\newblock In {\em LICS}, pages 452--461, 2008.

\bibitem{Haines:1969}
L.H. Haines.
\newblock On free monoids partially ordered by embedding.
\newblock {\em J. Combinatorial Theory}, 6:94--98, 1969.

\bibitem{Kahlon:2009}
Vineet Kahlon.
\newblock Boundedness vs. unboundedness of lock chains: Characterizing
  decidability of pairwise {CFL}-reachability for threads communicating via
  locks.
\newblock In {\em LICS}, pages 27--36, 2009.

\bibitem{Knapik:2002}
T.~Knapik, D.~Niwinski, and P.~Urzyczyn.
\newblock Higher-order pushdown trees are easy.
\newblock In {\em FoSSaCS '02: Proceedings of the 5th International Conference
  on Foundations of Software Science and Computation Structures}, pages
  205--222, London, UK, 2002. Springer-Verlag.

\bibitem{Knapik:2005}
T.~Knapik, D.~Niwinski, P.~Urzyczyn, and I.~Walukiewicz.
\newblock Unsafe grammars and panic automata.
\newblock In {\em ICALP}, pages 1450--1461, 2005.

\bibitem{Kobayashi:2009}
N.~Kobayashi.
\newblock Model-checking higher-order functions.
\newblock In {\em PPDP}, pages 25--36, 2009.

\bibitem{Kobayashi:2012}
N.~Kobayashi.
\newblock {\sc GTRecS2}: A model checker for recursion schemes based on games
  and types.
\newblock A tool available at
  \url{http://www-kb.is.s.u-tokyo.ac.jp/~koba/gtrecs2/}, 2012.

\bibitem{Kobayashi:2013}
N.~Kobayashi and A.~Igarashi.
\newblock Model-checking higher-order programs with recursive types.
\newblock In {\em ESOP}, pages 431--450, 2013.

\bibitem{Kobayashi:2011}
N.~Kobayashi, R.~Sato, and H.~Unno.
\newblock Predicate abstraction and cegar for higher-order model checking.
\newblock In {\em PLDI}, pages 222--233, 2011.

\bibitem{Lal:2009}
Akash Lal and Thomas~W. Reps.
\newblock Reducing concurrent analysis under a context bound to sequential
  analysis.
\newblock {\em Formal Methods in System Design}, 35(1):73--97, 2009.

\bibitem{Madhusudan:2011}
P.~Madhusudan and G.~Parlato.
\newblock The tree width of auxiliary storage.
\newblock In {\em POPL}, pages 283--294, 2011.

\bibitem{Maslov:1976}
A.~N. Maslov.
\newblock Multilevel stack automata.
\newblock {\em Problems of Information Transmission}, 15:1170--1174, 1976.

\bibitem{Neatherway:2012}
R.~P. Neatherway, S.~J. Ramsay, and C.-H.~L. Ong.
\newblock A traversal-based algorithm for higher-order model checking.
\newblock In {\em ICFP}, pages 353--364, 2012.

\bibitem{Parys:2012}
P.~Parys.
\newblock On the significance of the collapse operation.
\newblock In {\em Proceedings of the 27th Annual {IEEE} Symposium on Logic in
  Computer Science, {LICS} 2012, Dubrovnik, Croatia, June 25-28, 2012}, pages
  521--530, 2012.

\bibitem{Penelle:2015}
V.~Penelle.
\newblock Rewriting higher-order stack trees.
\newblock In {\em Computer Science - Theory and Applications - 10th
  International Computer Science Symposium in Russia, {CSR} 2015, Listvyanka,
  Russia, July 13-17, 2015, Proceedings}, pages 364--397, 2015.

\bibitem{Ramalingam:2000}
G.~Ramalingam.
\newblock Context-sensitive synchronization-sensitive analysis is undecidable.
\newblock {\em ACM Trans. Program. Lang. Syst.}, 22(2):416--430, 2000.

\bibitem{Ramsay:2014}
S.~J. Ramsay, R.~P. Neatherway, and C.-H.~L. Ong.
\newblock A type-directed abstraction refinement approach to higher-order model
  checking.
\newblock In {\em The 41st Annual {ACM} {SIGPLAN-SIGACT} Symposium on
  Principles of Programming Languages, {POPL} '14, San Diego, CA, USA, January
  20-21, 2014}, pages 61--72, 2014.

\bibitem{Seth:2009}
A.~Seth.
\newblock Games on higher order multi-stack pushdown systems.
\newblock In {\em RP}, pages 203--216, 2009.

\bibitem{Torre:2015}
S.~La Torre, A.~Muscholl, and I.~Walukiewicz.
\newblock Safety of parametrized asynchronous shared-memory systems is almost
  always decidable.
\newblock In {\em CONCUR}, 2015.
\newblock To appear.

\bibitem{Torre:2011}
Salvatore~La Torre and Margherita Napoli.
\newblock Reachability of multistack pushdown systems with scope-bounded
  matching relations.
\newblock In {\em CONCUR}, pages 203--218, 2011.

\bibitem{Unno:2010}
H.~Unno, N.~Tabuchi, and N.~Kobayashi.
\newblock Verification of tree-processing programs via higher-order model
  checking.
\newblock In {\em APLAS}, 2010.

\bibitem{Leeuwen:1978}
J.~van Leeuwen.
\newblock Effective constructions in well-partially-ordered free monoids.
\newblock {\em Discrete Mathematics}, 21(3):237--252, 1978.

\bibitem{Zetzsche:2015}
Georg Zetzsche.
\newblock An approach to computing downward closures.
\newblock In {\em Automata, Languages, and Programming - 42nd International
  Colloquium, {ICALP} 2015, Kyoto, Japan, July 6-10, 2015, Proceedings, Part
  {II}}, pages 440--451, 2015.

\end{thebibliography}
\end{document}